\documentclass[11pt,A4paper]{article}

\usepackage{
   amsmath,
   amsfonts,
   amssymb,
   graphicx,
   epsfig,
   epstopdf,
   booktabs,
   algorithm,
   algorithmic,
   setspace,
   rotating,
 hyperref,
   subcaption,
   url,
   pbox,
   comment,
   multirow,
   enumitem,
   accents
   }
\usepackage{natbib}
\usepackage{circuitikz}

\newtheorem{theorem}{Theorem}

\newtheorem{observation}[theorem]{Observation}

        {\hspace*{\fill}$\Box$\par\vspace{4mm}}

\newenvironment{proof}{\noindent{\em Proof.}}%
        {\hspace*{\fill}$\Box$\par\vspace{4mm}}

\newcommand{\qed}{\hfill$\Box$\par\smallskip\noindent}
\newcommand{\ncoppa}[2]{\genfrac{}{}{0pt}{1}{#1}{#2}}

\def\S{\mathcal{S}}
\def\db{\Delta b}
\def\b{\tilde{b}}
\def\da{\Delta a}
\def\a{\tilde{a}}
\def\dA{\Delta A}
\def\Ae{\tilde{A}}
\def\u{\tilde{u}}
\def\comp{\beta} 
\def\compj{\comp_j} 
\def\glob{\mbox{\scriptsize sum}} 

\newcommand{\Real}{{\mathbb R}}
\def\jstar{i}

\def\gre{{\sc MaxMinGreedy}}

\def\sys{\mbox{\scriptsize\it SYS}}

\title{
Scheduling with Time‑Dependent Utilities: \\ Fairness and Efficiency
}

\author{Gaia Nicosia\thanks{Dipartimento di Ingegneria   Civile, Informatica e delle Tecnologie Aeronautiche,
Universit\`a degli Studi ``Roma Tre'', via della Vasca Navale 79,
00146 Rome, Italy, e-mail: gaia.nicosia@uniroma3.it}
\and  
Andrea Pacifici\thanks{Dipartimento di Ingegneria Civile e Ingegneria
Informatica, Universit\`a degli Studi di Roma ``Tor Vergata'', Via del Politecnico 1, 00133 Rome, Italy, e-mail:
andrea.pacifici@uniroma2.it}
\and
Ulrich Pferschy\thanks{Department of Operations and Information Systems, University of Graz, Universitaetsstrasse
15, 8010 Graz, Austria. e-mail: ulrich.pferschy@uni-graz.at} }

\begin{document}

\maketitle
\begin{abstract} 
A new class of multi-agent single-machine scheduling problems is introduced, where each job is associated with a self-interested agent with a utility function decreasing in completion time. We aim to achieve a fair solution by maximizing the minimum utility across all agents. We study the problem’s complexity and propose solution methods for several variants. 
For the general case, we present a binary search procedure to find the largest possible minimum utility, as well as an exact greedy-based alternative. 
Variants with release and due dates are analyzed, showing strong NP-hardness for arbitrary release dates, but weak NP-hardness for a single release-date job, and polynomial solvability when all jobs share processing times. 
For all these cases we also study the corresponding problem of finding efficient solutions where the sum of utilities is maximized.

We also examine settings where linear utility functions can be adjusted within budget constraints, exploring the impact on optimal schedules when intercepts or slopes are modified. 
From a single-agent perspective, we investigate the effect of improving one agent’s utility in the overall solution. 
Adding a new job to be inserted with the best possible utility gives rise to rescheduling problems, where different lower bounds depending on the utilities of the original fair schedule are imposed.
Finally, we consider a bi-level setting where a leader wants to enforce a certain target schedule by modifying utility functions while the follower computes a fair solution for the modified instance.
Our work contributes to scheduling theory, multi-agent systems, and algorithmic fairness, highlighting fairness-oriented objectives in competitive scheduling.
\end{abstract}
\noindent \textbf{Keywords}: Scheduling; Multi Agents; Fairness; Complexity Theory

\section{Introduction}\label{sect:intro}

Scheduling problems have long been a central topic in operations research and computer science, with applications ranging from manufacturing and logistics to cloud computing and service operations. While classical models typically assume a single decision maker optimizing system-wide performance, increasing attention has been devoted in recent decades to multi-agent scheduling problems, where multiple autonomous agents compete for access to shared resources—most commonly, a single machine.

The study of multi-agent scheduling was initiated by the seminal works of  \citet{Baker2003} and \citet{bib:ampp2004}, which established the foundations for modeling scheduling problems involving multiple decision makers with potentially conflicting objectives. Since then, an extensive body of literature has examined settings in which agents control multiple jobs and seek to minimize individual completion-time-based cost functions {\cite[see, e.g., ][]  {AGNETIS2025,bib:acnp2019,Agnetis2015, HERMELIN2025,Decentralized-WELLMAN}.

More recently, growing attention has turned to the particular case where each job corresponds to a distinct agent \citep{bib:ntv2023}. This formulation captures scenarios with highly individualized preferences and has been primarily addressed through classical non-cooperative game-theoretic models, where agents select strategies such as machine choices in parallel-machine environments \cite[e.g., ][]{Correa2012,HOEKSMA2019}. However, strategic behavior alone does not account for collective welfare considerations, and may lead to highly unbalanced outcomes in which some agents incur disproportionately high costs.

In this work, we investigate a family of fairness-driven multi-agent single-machine scheduling problems in which each job corresponds to a self-interested agent with a utility function depending on its completion time. 
Rather than aggregating utilities through utilitarian or Nash-style objectives, we focus on the max-min fairness criterion, which seeks to maximize the minimum utility (or equivalently, minimize the maximum cost) across agents. This objective ensures that no agent is excessively penalized, promoting balanced and socially acceptable schedules, a relevant concern in competitive environments where fairness and equity play a central role.



 In many shared-resource scheduling settings, agents derive time-sensitive utility from the completion of their activities, with delays inducing a monotonic loss of value  \citep{Raut2008,Raut2008-2}. This naturally arises in economic environments where payoffs are discounted over time, so that later completion corresponds to a reduced net present value due to opportunity costs, uncertainty, or foregone alternative uses of capital.
 When access to a common resource is constrained, objectives focusing solely on aggregate performance may result in highly uneven completion times, disproportionately penalizing some agents. So, a max–min utility criterion provides a principled mechanism to mitigate such imbalances by explicitly limiting the extent to which any individual participant can be disadvantaged by scheduling decisions.
 
 Similar considerations apply in market contexts characterized by rapidly decaying demand or willingness to pay. For products with short commercial lifecycles or strong initial hype, revenues associated with delayed production or delivery may decrease sharply over time~\citep{HypeCycle}. In such cases, maximizing total utility can still lead to schedules in which some agents experience delays that render their participation economically marginal. Optimizing the minimum utility instead promotes a more equitable allocation of completion times, ensuring that all agents retain a non-negligible share of the attainable value while preserving sensitivity to time-dependent economic effects.

We examine several variants of the problem, taking into account different agent‑specific utility functions and temporal constraints, such as release dates and due dates. We also analyze versions of the problem in which certain input parameters can be adjusted either to obtain specific classes of solutions or to analyze how such modifications affect the resulting solutions.
Our main contributions are as follows:

\begin{itemize}[noitemsep,nolistsep]
    \item We propose exact solution approaches, including a binary search framework for identifying the largest achievable minimum utility and a greedy alternative with wide applicability.
    \item We establish complexity results for these variants, showing, for example, strong NP-hardness in the presence of arbitrary release dates and weak NP-hardness when only a single job has a release constraint, while identifying polynomially solvable situations such as the equal processing times case.
    \item The framework is extended to scenarios where linear utility functions can be adjusted within budget constraints. We analyze how modifications to intercepts or slopes influence equilibrium schedules.
    \item From a single-agent perspective, we study how improving one agent’s utility affects the overall fairness-oriented solution, providing structural insights into the trade-off between individual incentives and collective guarantees.
    \item We also investigate our problem in a leader-follower setting, where the leader aims to enforce a target solution by a adjusting the utility functions of the follower-agents who, in turn, will have their jobs arranged according to a fair schedule.
\end{itemize}
The individual problems treated in this paper will be introduced in Section~\ref{sect:problem} which also describes the organization of the paper.

\section{Related literature}\label{sect:literature}



This section reviews the existing literature on scheduling problems that are closely related to the focus of this study, aiming to position our work within the broader research context. We begin with multi-agent scheduling, examining Pareto-optimal solutions and fairness-related considerations, which remain central in practical applications, then turn to scenarios where jobs themselves are treated as agents, and finally consider scheduling problems in which job values evolve over time, emphasizing the challenges posed by dynamic priorities and the opportunities they create for adaptive approaches.

Multi-agent scheduling involves multiple agents, each with distinct objectives and job sets, competing for shared processing resources. This paradigm is prevalent in various applications, including manufacturing systems, cloud computing, and collaborative robotics, where efficient resource allocation among competing entities is crucial.

Initial research predominantly focused on scenarios involving two agents \cite[see, e.g. ][]{bib:ampp2004,Baker2003}. However, real-world applications  frequently encompass environments with more than two agents, necessitating more sophisticated scheduling strategies capable of handling higher-dimensional interactions among heterogeneous objectives. This has led to a substantial expansion of the literature on multi-agent scheduling beyond the two-agent setting.
For example, \cite{agnetis2007multi} analyzed multi-agent single-machine scheduling where each agent owns a set of non-preemptive jobs, highlighting key complexity results across several classical performance criteria. Similarly, \cite{yuan2017single} studied single-machine models with a fixed number of competing agents, identifying problem variants that remain polynomially solvable and offering constructive algorithmic solutions. In the same vein,  \cite{li2020competitive} examined competitive multi-agent scheduling under total weighted late work objectives, proposing methods for determining Pareto-optimal solutions that reflect trade-offs among agents.
More recently, \cite{Wang2021} investigated a multitasking scheduling model on a cloud manufacturing platform involving multiple competing agents, providing updated complexity results and relevant solution procedures, while  \cite{bib:CHEN2026_equallenghtmultiagent} addressed several single-machine feasibility problems with equal-length jobs considering different agent objectives.

Fairness has become an increasingly important theme in this context, as solutions driven purely by efficiency may disproportionately benefit certain agents \citep{bib:acnp2019}. Consequently, incorporating fairness into scheduling algorithms has become essential in systems where multiple decision makers compete for shared resources. In recent years, a growing body of literature has addressed fairness from a variety of perspectives \cite[see, e.g., ][for recent contributions]{AGNETIS2025,HERMELIN2025}.
In particular,  \cite{AGNETIS2025} investigate how the introduction of fairness constraints reshapes the structure of multi-agent single-machine scheduling problems and affects the trade-off between efficiency and equity. Complementarily, \cite{HERMELIN2025} study fairness-aware scheduling models, showing that equity requirements can give rise to new algorithmic challenges and significantly modify the set of feasible solutions.




A social welfare solution (i.e., one maximizing the overall utility of the system) may be highly unbalanced in terms of individual utilities and hence unsatisfactory for some of the agents. 
On the other hand, the global utility of a fair schedule can be potentially far away from its maximum value.
A natural question then arises to assess what is the maximum loss in terms of social welfare when scheduling jobs according to a fair schedule. 
In this regard, a relevant fraction of the literature deals with finding upper bounds to the so-called \emph{Price of Fairness} (PoF), introduced by~\cite{bib:bft2011} and then considered in several papers \citep{bib:acnp2019,bib:npp2017, bib:zzl2020}   as the relative loss of a fair solution with respect to the utility of a total utility maximizer, in the context of general resource allocation problems.
A recent contribution establishes tight bounds on the PoF under envy-freeness for time-dependent utility models, offering insights into how structural parameters affect fairness-induced utility loss \citep{Vardi2024} .

In related research streams, jobs may themselves be treated as agents. In cooperative scheduling games, for example, players may exchange utility to stabilize a globally optimal schedule \citep{CURIEL1989}, whereas in non-cooperative environments the focus shifts to the interplay between incentives and fairness constraints, as illustrated by \citet{Vardi2024}. 

A final set of contributions concerns scheduling problems in which job values evolve over time. 
Deteriorating job values are considered in the work of \cite{Raut2008}, who analyze a single-machine problem where the value of each job decreases as its starting time is delayed. Their study considers unrestricted, truncated, and capacity-constrained deterioration functions, showing that while some special cases---such as linear unrestricted deterioration---are polynomially solvable, the general problem is unary NP-hard. They model the problem via a time-indexed formulation and propose several heuristics whose empirical performance is carefully assessed. 
Building on this line of research, 
 \cite{Raut2008-2} extend the model to a capacitated setting, where the machine can process only a limited number of jobs simultaneously. They demonstrate that the added capacity constraints preserve the NP-hard nature of the problem and develop new heuristics based on multiplicative piecewise metrics that perform favorably relative to existing methods. 
These works highlight how time-dependent job utilities introduce significant structural complexity and require tailored algorithmic strategies. 

\section{Problem statement}\label{sect:problem}


Given a set of $n$ non-preemptive jobs $J$, we consider a class of single-machine scheduling problems in which each job is associated to an agent. For any $j\in J$, a processing time $p_j$, a single-valued non-increasing \emph{utility function} $u_j(\cdot)$ 
and, possibly, a release date $r_j$ and a due date $d_j$ are given.
Any schedule $\sigma$ of the jobs in $J$ implies a completion time $C_j(\sigma)$ for every job/agent $j$.
We define the utility of agent $j$ in schedule $\sigma$ as $u_j(C_j(\sigma))$. Depending on the context, we will write equivalently $u_j(C_j)$ or $u_j(\sigma)$ or, simply, $u_j$.

With no loss of generality, we assume that $u_j \geq 0$ for all feasible schedules. This can be easily reached by adding a suitably large constant to all utility functions, namely $M:=\max\{0, - u_j(\bar{C})\}$ where
$\bar{C}$ is some upper bound on all completion times, e.g., $P:=\sum_j p_j$, or $\max_j r_j + \sum_j p_j$, if release dates $r_j$ are present.
We also sometimes  use of the following generalized inverse utility function \citep{bib:general_inverse} giving the latest possible completion for reaching a utility target value $u^T$:
\begin{equation}\label{eq:definverse}
u^{-1}_j(u^T) := \sup\{ C \in\Real \mid  u_j(C) \geq u^T\}.
\end{equation}
Hereafter, we  deal with the cases of general non-increasing linear utility functions but sometimes
also restrict our attention to linear utility functions $u_j(C_j)=b_j - a_j C_j$ with $a_j \geq 0$.

In the remainder of this work, we indicate by $\S$ the set of feasible semi-active schedules (in which each job cannot start earlier without violating the given order) for the problem at hand.
Table~\ref{tab:notation} reports the notation adopted for all the problems addressed in the remainder of the paper.

\medskip
The fair scheduling problems considered in this work aim at determining a \emph{fair schedule} $\sigma_F$, namely a schedule that maximizes the minimum utility among all agents.
Let $u_{\min}(\sigma) := \min_{j \in J} u_j(\sigma)$, then the fair schedule is defined as
\begin{equation}\label{eq:fairsoldef}
    \sigma_F \in \arg\max_{\sigma \in \mathcal{S}} \left\{ u_{\min}(\sigma) \right\},
    \qquad
    u_F := u_{\min}(\sigma_F).
\end{equation}
As a notable special case, when all agents share the same utility function, computing $\sigma_F$ reduces to minimizing the maximum completion time over all jobs, that is, the classical makespan objective $C_{\max}$.

For a given problem, in addition to a fair solution, we also consider a \emph{system-optimal solution}, that is, a schedule $\sigma_{\sys}$ maximizing the global utility (or social welfare) of the system.
Let $u_{\glob}(\sigma) := \sum_{j \in J} u_j(\sigma)$; then the system-optimal schedule is defined as
\begin{equation}\label{eq:syssol}
    \sigma_{\sys} \in \arg\max_{\sigma \in \mathcal{S}} \left\{ u_{\glob}(\sigma) \right\},
    \qquad
    u_{\sys} := u_{\glob}(\sigma_{\sys}).
\end{equation}
%
\begin{table}[htb]
    \centering
{\small    
\begin{tabular}{|l|l|}
\toprule
$J:=\{1,\ldots,n\}$ & set of jobs/agents indexed by $j$ \\
$p_j$, $r_j$, $d_j$, & processing time, release date, due date of job $j$\\
$P:= \sum_j^n p_j$ & total processing time of the jobs \\
$s_j(\sigma)$, $s_j$ & starting time of job $j$ in schedule $\sigma$\\
$C_j(\sigma)$, $C_j$ & completion time of job $j$ in schedule $\sigma$ 
\\
$u_j(C_j(\sigma))$, $u_j(C_j)$, $u_j(\sigma)$  & utility of job $j$ completing at $C_j$ in schedule $\sigma$ \\
$u_j(C_j) = b_j - a_j C_j$ & linear utility of job $j$ completing at $C_j$\\
$- a_j$ & slope of the linear function \\
$b_j$ & intercept of the linear function \\
$\S$ & set of feasible (semi-active) schedules \\
$u_{\min}(\sigma) = \min_{j\in J}\{u_j(\sigma)\}$ & minimum utility attained in schedule $\sigma$ \\
$u_{\glob}(\sigma) = \sum_{j\in J} u_j(\sigma)$ & total utility of schedule $\sigma$ \\
$\sigma_{F} = \arg\max_{\sigma\in\S}\{u_{\min}(\sigma)\}$ & fair (max--min) solution \\
$\sigma_{\sys} = \arg\max_{\sigma\in\S}\{u_{\glob}(\sigma)\}$ & system-optimal solution \\
$u_{F} = u_{\min}(\sigma_{F})$ & fair (max--min) solution value \\
$u_{\sys} = u_{\glob}(\sigma_{\sys})$ & system-optimal solution value \\
$\Delta a_j$ $\Delta b_j$ & var \\
$\delta_j$ & disruption \\
\bottomrule
\end{tabular}
}
  \caption{Notation}
    \label{tab:notation}
\end{table}
Depending on the characteristics of the agents' utility functions, finding a system-optimal schedule $\sigma_{\sys}$ may amount to minimizing a regular objective function, for which many results are available in the scheduling literature.
In particular, in the case of linear utility functions, since
\begin{equation}\label{eq:sum-utili}
\max \sum_{j \in J} u_j(C_j)
= \max \sum_{j \in J} (b_j - a_j C_j)
= \sum_{j \in J} b_j - \min \sum_{j \in J} a_j C_j,
\end{equation}
computing $\sigma_{\sys}$ is equivalent to minimizing the weighted sum of completion times.
This problem has been extensively studied.
Indeed, in the absence of additional constraints or special system features, when $u_j(C_j) = b_j - a_j C_j$, a system-optimal solution can be obtained by sequencing jobs in non-increasing order of $a_j / p_j$, according to Smith’s rule (also known as the WSPT rule).

Obviously, in the general case of non-increasing $u_j(\cdot)$, finding a schedule that maximizes the sum of utilities depends on the specific form of the utility functions and may be far from trivial.  
For instance, suppose the utility of agent $j$ is given by  
$
u_j(C_j) = \min\{d,\, b_j - a_j C_j\}.
$  
In this case, the utility of agent $j$ starts decreasing from $d$ whenever $j$ completes after the ``due date'' $(b_j - d)/a_j$. This formulation encompasses the classical scheduling problem $1||\sum w_j T_j$ (minimization of total weighted tardiness), which is well known to be strongly NP-hard (cf.\ p.~359 in~\cite{bib:lenstra1977}). Finding a system optimal solution remains strongly NP-hard even when $u_j = b_j - a_j f(C_j)$ and $f(\cdot)$ is a piecewise linear function \cite[cf. Theorem 5 in][] {bib:SmithNonLinear}.  
Interestingly, the quadratic case $u_j = b_j - a_jC_j^2$ becomes polynomially solvable when $a_j$ is identical for all agents, since the SPT rule yields an optimal schedule to the standard scheduling problem $1||\sum C_j^2$  \citep{townsend78}. However, the complexity status remains open for arbitrary agent weights in the case $u_j(C_j) = b_j - a_j C_j^2$~\citep{ChenPottsWoeginger1998,bib:SmithNonLinear}.
To summarize we can state the following:
\begin{observation}\label{obs:maxsum}
    For general (resp., linear) non-increasing utility functions of the agents, maximizing the total utility, i.e., finding a system optimal solution is strongly NP-hard (resp., polynomially solvable through the Smith's rule).
\end{observation}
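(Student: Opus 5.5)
The observation has two parts, and I will prove them separately. The linear case reduces to a classical result. The general case needs a reduction from a known strongly NP-hard problem.

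\emph{Linear case.} I start from identity (\ref{eq:sum-utili}). Since $\sum_j b_j$ is a constant, maximizing $u_{\glob}$ over $\S$ is the same as minimizing $\sum_j a_j C_j$ with nonnegative weights $a_j$. With no release dates there is no reason to leave idle time, so every semi-active schedule is a permutation processed from time $0$. We then have exactly the problem $1||\sum w_jC_j$ with $w_j=a_j$. Smith's rule solves it optimally in $O(n\log n)$ time by sequencing jobs in non-increasing order of $a_j/p_j$. For completeness I would recall the adjacent-interchange argument. Swapping adjacent jobs $i,j$ changes the objective by $a_jp_i-a_ip_j$, so any sequence violating the ratio order can be improved or left unchanged.

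\emph{General case.} I reduce $1||\sum w_jT_j$, which is strongly NP-hard by \cite{bib:lenstra1977}. Given an instance with processing times $p_j$, weights $w_j$ and due dates $d_j$, I define the utility
\[
u_j(C_j)=\min\{d,\; b_j - w_j C_j\},\qquad b_j := d + w_j d_j ,
\]
where $d$ is a large constant, for instance $d=\max_j w_j P$. This function is non-increasing. For $C_j\le d_j$ it equals $d$, and for $C_j> d_j$ it equals $d - w_j(C_j-d_j)$. Hence in all cases
\[
u_j(C_j)=d-w_jT_j .
\]
If needed, the constant $d$ guarantees the nonnegativity assumption on feasible schedules, and it is only a shift anyway. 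Summing over $j$ gives $u_{\glob}(\sigma)=nd-\sum_j w_jT_j(\sigma)$. So a system-optimal schedule minimizes total weighted tardiness, and the decision versions correspond threshold-for-threshold.

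The main point to check is that the reduction preserves strong NP-hardness. All new numbers, namely $b_j$ and $d$, are polynomially bounded in the magnitudes of the original data, since they are at most $O(\max_j w_j\cdot P+\max_j w_jd_j)$. A unary-encoded instance therefore maps to a unary-encoded instance. A secondary point is that a utility given as a minimum of two linear pieces is an admissible ``general non-increasing'' function. This holds under the paper's definition, and the same reasoning could alternatively invoke the piecewise linear $f$ result of \cite{bib:SmithNonLinear}.
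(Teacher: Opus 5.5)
Your proposal is correct and follows essentially the same route as the paper: the linear case goes through identity \eqref{eq:sum-utili} and Smith's rule, and the general case encodes $1||\sum w_jT_j$ via the capped utilities $\min\{d,\, b_j - a_jC_j\}$. Your explicit choice $b_j = d + w_j d_j$, the resulting identity $u_j = d - w_jT_j$, and the check that all new numbers stay polynomially bounded (so unary encodings are preserved) fill in details the paper leaves implicit.
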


It is worth to mention here that the global utility of a fair solution can be quite small compared to that of a system optimum and, concurrently, that the fairness value $u_{\min}(\sigma_{\sys}) = \min_j \{u_j(\sigma_{\sys})\}$ (i.e., the utility of the worse-off agent in a system optimal solution) can be arbitrarily small compared to that of a fair solution
$u_F = \min_j \{u_j(\sigma_F)\}$.
This can be shown through a simple example in which the ratio between 
$
u_{\glob}(\sigma_{\sys}) = \sum_{j \in J}u_j(\sigma_{\sys})
$
and $ u_{\glob}(\sigma_F) = \sum_{j\in J} u_j(\sigma_F)$
becomes arbitrarily large, even when all jobs have identical processing times.
Consider $n=2$ jobs with $p_1=p_2=1$ and linear utility functions $u_1(C_1)=2M+1-MC_1$ and $u_2(C_2)=2-C_2$. If job $1$ is scheduled first, we obtain $u_1=M+1$ and $u_2=0$, which yields the global optimal schedule $\sigma_{\sys}$. Conversely, scheduling job $2$ first results in $u_2=1$ and $u_1=1$, making this ordering the fair schedule $\sigma_F$.
Due to this negative result, in this study we do not address Price of Fairness issues any further.

In this work we consider four different classes of problems depending on the agents/jobs perspective.  These problems are rigorously  stated in the following Sections~\ref{subsec:fixed}--\ref{subsec:stackel}.
As it is customary in scheduling field, 
we denote the addressed problems using a three-field notation. The fair scheduling problem defined above is then 
$1|\beta|u_{\min}$, where additional peculiarities are specified in the middle field $\beta$. The corresponding system-optimum is instead found by solving problem $1|\beta|\sum u_j$.
We give an synthetic description of these problems and the corresponding notation in Table~\ref{tab:notation-all} settings below.

\begin{table}[h]
    \centering\footnotesize{
    \begin{tabular}{l l l l}
\toprule
N. & Problem specifications & Notation & Section\\ 
\midrule
1 & general utilities, objective $f=u_{\min}$ or $f=\sum u_j$ & $1|u_j| f$ & \ref{sec:general},\ref{sec:algo-greedy}\\
2 & linear utilities, objective  $f\in\{u_{\min},\sum u_j\}$ & $1| u_j= b_j - a_jC_j | f$ & \ref{sec:general},\ref{sec:algo-greedy}\\
3 & release dates, linear utilities, objective  $f\in\{u_{\min},\sum u_j\}$  & $1| r_j, u_j= b_j - a_jC_j | f$ & \ref{sec:release}\\
4 & release dates, unitary proc. times,  lin. utilities, obj.  $f\in\{u_{\min},\sum u_j\}$ & $1| p_j=1, r_j,  u_j= b_j - a_jC_j | f$ & \ref{sec:identical_proc_times}\\
5 & release dates, equal proc. times,  lin. utilities, obj.    $f\in\{u_{\min},\sum u_j\}$ & $1| p_j=p, r_j,  u_j= b_j - a_jC_j | f$ & \ref{sec:identical_proc_times} \\
6 & due dates, lin. utilities, bounded \# of late jobs, obj.   $f\in\{u_{\min},\sum u_j\}$ & $1|\sum U_j \le k, u_j= b_j - a_jC_j | f$ &  \ref{sec:duedates}\\ 
& & &\\
7 & increase of  intercepts  & $1 | u_j = \b_j - a_j C_j | u_{\min}$ & \ref{sec:bound-modb}\\
8 & decrease of  slopes & $1 | u_j = b_j - \a_j C_j | u_{\min}$ & \ref{sec:bound-moda}\\
9 & variation of utilities with constant area & $1 | u_j = \b_j - \a_j C_j, \Ae_j =A_j | u_{\min}$ & \ref{sec:fixed-area}\\
10 & variation of utilities of a single agent $\jstar$, 
objective $u_{\jstar}$  & $1 | \tilde{u}_{\jstar} | u_{\jstar}$ & \ref{sec:modi-single}\\
 & & &\\ 
11 & resched. with new agent, bounded individual disruptions $\delta_j$ 
& $1|\mbox{\it resch}, \delta_j\ge 0|u_{n+1}$  & \ref{sec:single_agent_utility}\\
12 & resched.  with new agent, bounded aggregate disruptions $\delta_j$ 
& $1|\mbox{\it resch}, \textstyle\sum\delta_j\ge 0|u_{n+1}$ & \ref{sec:single_agent_utility}\\
13 & resched.  with new agent, schedule optimality preservation
& $1|\mbox{\it resch}, u_j \ge u_F |u_{n+1}$ & \ref{sec:single_agent_utility}\\ 
 & & &\\ 
14 & target sequence $\sigma_F$ (leader-follower setting), minimize increase of $b_j$ 
& $1 | \sigma_F = \sigma_T, u_j = \b_j - a_j C_j | \sum \b_j$  & \ref{sec:target}\\
15 & target sequence $\sigma_F$ (leader-follower setting), minimize decrease of $a_j$ 
& $1 | \sigma_F = \sigma_T, u_j = b_j - \a_j C_j | \sum -\a_j$ & \ref{sec:target}\\
16 & target sequence $\sigma_{\mbox{\tiny\it SYS}}$ (leader-follower setting), minimize decrease of $a_j$ & $1 | \sigma_{\mbox{\tiny\it SYS}}=\sigma_T, u_j = b_j - \a_j C_j |  \sum -\a_j$ & \ref{sec:target}\\
\bottomrule
\end{tabular} 
    }
    \caption{Description of addressed problem variants and adopted Graham notation.}
    \label{tab:notation-all}
\end{table}

\subsection{Problem with fixed parameters}\label{subsec:fixed}
The first set of problems we address concerns classical offline single-machine scheduling problems, in which agents are equivalent to standard jobs characterized by fixed, given parameters. In particular, we consider several variants of the problem, with or without release dates and due dates, with arbitrary, unit, or constant processing times, and under two different objective functions corresponding to fair and system-optimal solutions. The notation for these variants is summarized in the first six rows of Table~\ref{tab:notation-all}, while the associated algorithmic and complexity results are presented in Section~\ref{sec:fixed-param}.

\subsection{Variable linear utility functions}\label{sec:variable linear}

A second class of problems involves agents whose utility functions are linear, i.e.,
$ u_j = b_j - a_j C_j$, with $a_j \ge 0$,
where $C_j$ denotes the completion time of agent $j$. 
Certain parameters of these utility functions can be modified to improve either the max-min fairness or the total system utility. To ensure realistic adjustments, modifications are subject to constraints.  
We indicate modifiable parameters with a tilde (see lines 7--10 in Table~\ref{tab:notation-all}).  
In the following we formalize the types of variations we consider.

\smallskip\par\noindent
\textbf{Increasing the intercepts $b_j$}.
We introduce a model in which each intercept  $b_j$ is allowed to be increased by a non-negative amount $\db_j$ yielding the new intercept $\b_j=b_j+\db_j$, under a total budget constraint
\[
\sum_j \db_j \le B.
\]
 The modified utility functions take the form $u_j = (b_j + \db_j) - a_j C_j$. 
Formally, the optimization problem is
\[
\max_{\db_j \ge 0} \ \min_j \big( (b_j + \db_j) - a_j C_j \big) \quad \text{s.t. } \sum_j \db_j \le B.
\]
This problem is studied in  Section~\ref{sec:bound-modb} and is denoted as $1 | u_j = {\b}_j - a_j C_j | u_{\min}$. 

\smallskip\par\noindent
\textbf{Decreasing the slopes $a_j$}. 
Analogously,  we consider decreasing the slopes $a_j$ by non-negative amounts $\da_j$, subject to
\[
0 \le \da_j \le a_j, \quad \sum_j \da_j \le A,
\]
so that modified utilities with new slopes $\a_j=a_j-\da_j$ remain non-increasing in $C_j$.  
The modified utility functions become
$u_j = b_j - (a_j - \da_j) C_j$.
We address this problem in Section~\ref{sec:bound-moda} and denote it  as
$1 | u_j = b_j - {\a}_j C_j | u_{\min}$. 

\medskip\par\noindent \textbf{Variation with area constraint}.
\label{sec:intro-var-area} 
We consider a variation of the linear utility model in which both the intercepts $b_j$ and the slopes $a_j$ are allowed to vary simultaneously, but subject to an aggregate constraint on the expected utility value taken by each agent.

Assuming that a job completion time is a random variable with uniform distribution $C_j\sim U(0,P)$, then, by linearity, the expected utility value for agent $j$ is $\mathbb{E}[u_j] = \b_j - \a_j\mathbb{E}[C_j] = \b_j - \a_j \frac P 2$.
We are interested in choosing $\a_j$ and $\b_j$ while keeping $\mathbb{E}[u_j]$ constant, in particular equal to the original value $b_j - a_j \frac P 2$.
As the  area of the trapezoid bounded by the utility line and the horizontal axis over $[0,P]$ is
\[
A_j \;=\; \int_0^P \big( b_j - a_j t \big)\, dt
        \;=\; P\big(b_j - a_j\frac{P}{2}\big) \;=: \; P\big(\b_j - \a_j\frac{P}{2}\big) = \Ae_j,
\]
our constraint is equivalent to impose that the area of the trapezoid remain constant.

Additionally, to ensure that the linear utility functions remain non-increasing and non-negative over the entire interval of feasible completion times, we impose $\a_j \ge 0$ and 
\[
\b_j - \a_j t \;\ge 0, \quad \forall\, t\in [0,P],
\]
which is equivalent to $\b_j \ge \a_j P$.
In conclusion we denote the problem in scheduling notation as
$
1| u_j = \b_j - \a_j C_j,\ \Ae = A_j| u_{\min}$
and address it in  Section~\ref{sec:fixed-area}.

\medskip\par\noindent \textbf{Variations for a single agent}.\label{sec:pre-single-var}
We also consider the case where only a single agent, say agent $\jstar$, can adjust its utility function, while all other agents data remain fixed.  
The three scenarios are:
\begin{enumerate}[noitemsep,nolistsep]
    \item the intercept $b_{\jstar}$ can be increased;
    \item the slope $a_{\jstar} > 0$ can be decreased (while remaining non-negative);
    \item the area $A_{\jstar}$ can be increased.
\end{enumerate}
For each variation, we analyze how the utility $u_{\jstar}$ changes and how the optimal schedule adapts, either to improve $u_{\jstar}$ or the overall objective. We denote these three versions of the problem as $1 | u_{\jstar}  = \b_{\jstar} - a_{\jstar} C_{\jstar}  | u_{\bar{\jmath}}$, 
$1 | u_{\jstar}  = b_{\jstar} - \a_{\jstar} C_{\jstar} | u_{\bar{\jmath}}$, and 
$1 | u_{\jstar}  = \b_{\jstar} - \a_{\jstar} C_{\jstar}, \tilde{A} = A_{\jstar}   | u_{\bar{\jmath}}$ and study them in  Section~\ref{sec:modi-single}.

\subsection{Rescheduling from the perspective of a new agent} \label{sec:singleagentdefinition}

Analogously to the single-agent variations presented in Section~\ref{sec:pre-single-var}, we consider a third class of problems from the perspective of a \emph{new agent}. 

We assume that an initial problem has been solved optimally, yielding a schedule $\sigma_F$ with utility values $u_j(\sigma_F)$ for each job $j=1,\dots,n$ and a maximum-minimum utility $u_F := \min_j \{ u_j(\sigma_F) \}$.
We then consider a subsequent \emph{rescheduling stage}, in which a new agent, denoted $n+1$, is introduced. The goal is to determine a revised schedule $\sigma$ that 
(1)  maximizes the utility of the new agent $u_{n+1}(\sigma)$, and
(2) preserves certain fairness conditions for the original agents $j = 1, \dots, n$.

Formally, given the initial schedule $\sigma_F$ and the original utilities $u_j(\sigma_F)$, the problem is
\[
\max_{\sigma} \ u_{n+1}(\sigma)\]
 subject to fairness preserving  constraints for the remaining agents.

To enforce fairness, we allow the possibility of providing a non-negative \emph{compensation} $\beta_j \ge 0$ to each of the original agents, e.g., in the form of monetary or resource adjustments.
We specifically consider three types of fairness-preserving constraints on the original agents’ utilities (including their compensation):
\begin{enumerate}
[noitemsep,nolistsep]
    \item {Individual non-decrease:} $u_j(\sigma) + \beta_j \ge u_j(\sigma_F)$ for all $j=1,\dots,n$;
    \item {Aggregate non-decrease:} $\sum_{j=1}^{n} \big(u_j(\sigma) + \beta_j\big) \ge \sum_{j=1}^{n} u_j(\sigma_F)$;
    \item {Schedule optimality preservation:} the minimum utility among all $n$ original agents in the new schedule, including compensation, is at least $u_F$.
\end{enumerate}
These formulations are general: the same reasoning applies if the rescheduling perspective is taken with respect to one of the original $n$ agents, rather than a newly introduced agent.
We denote by $\delta_j  = u_j(\sigma) + \compj - u_j(\sigma_F)$ the utility disruption of any job $j$ in the new schedule $\sigma$ compared to the original solution $\sigma_F$, including compensations.

The three addressed version of the problem are then denoted by $1 | resch, \delta_j \geq 0 | u_{n+1}$, 
$1 | resch, \sum \delta_j \geq 0 | u_{n+1}$, and 
$1 | resch,u_j \geq u_F | u_{n+1}$.
We refer to Section~\ref{sec:single_agent_utility}
for computational and analytical results for this class of problems.

\subsection{Leader-follower utility adjustment}\label{subsec:stackel}

Finally,  we consider a bi-level optimization setting inspired by a leader-follower scenario. 
An entity, the leader, aims to enforce a certain schedule $\sigma_T$, for example, one that minimizes the sum of completion times to optimize throughput or maximizes total utility. 
In this setting, the follower still solves the resulting problem $1 | \ldots | u_{\min}$ to achieve a fair solution for all jobs, but the leader influences the outcome by modifying the utility functions of the jobs. 
In order to do so, the leader is allowed to control either the intercepts $\b_j = b_j + \db_j$, or the slopes $\a_j = a_j - \da_j$, analogous to the modifications described earlier. 

These changes $\db_j \geq 0$ and $0 \leq \da_j \leq a_j$ affect the sequence determined by the follower when computing the fair solution. 
We are therefore interested in the following problem: Minimize the total modification cost, i.e., either $\sum_j \db_j$ or $\sum_j \da_j$, subject to the condition that the fair solution $\sigma_F$ produced by the follower (who wants to maximize the minimum utility under the modified utility functions) corresponds to the desired target sequence $\sigma_T$.

We denote these two versions of the  problem as 
$1 | \sigma^* = \sigma_T, u_j = \b_j - a_j C_j | \sum \b_j$  and 
$1 | \sigma^* = \sigma_T, u_j = b_j - \a_j C_j | \sum -\a_j$ 
where $\sigma^*$ can be the fair solution $\sigma_F$ or the system optimum $\sigma_{\sys}$,
and study them in Section~\ref{sec:target}. (As it is clarified there, note that problem $1 | \sigma_{\sys} = \sigma_T, u_j = \b_j - a_j C_j | \sum \b_j$ is not meaningful.)

\subsection{Summary of contributions}

Across the sixteen (plus six, considering the two different objectives of the first six problems) problem variants considered in this work, we conduct a detailed analysis of their computational complexity and provide, whenever possible, efficient solution methods. In particular, for each variant we either establish NP-hardness or design polynomial-time and, in some cases, pseudopolynomial-time algorithms. 

Table~\ref{tab:contribution-all} offers an overview of our main contributions, highlighting the complexity classification and the type of algorithmic results obtained for each problem setting. This summary aims to provide a guide  through the diversity of models analyzed and the methodological advances achieved in the present study.

Note that 
all our algorithms in Sections~\ref{sec:fixed-param} and~\ref{sec:single_agent_utility} apply for general non-increasing utility functions (assuming that a function value can be computed in constant time).

\begin{table}[h]
    \centering\footnotesize{
\begin{tabular}{l l l}
\toprule
Problem & Complexity  for $f = u_{\min}$ & Complexity  for $f = \sum u_{j}$ \\
\midrule
$1|u_j| f$ & Polynomial (Thm.~\ref{th:greedy}) & Strongly NP-hard (Obs.~\ref{obs:maxsum}) \\
$1 | u_j = b_j - a_j C_j | f$ & Polynomial (Thm.~\ref{th:greedy}) & Polynomial \citep{smith1956}\\
$1 | r_j, \, u_j = b_j - a_j C_j | f$ & Strongly NP-hard (Thm.~\ref{th:release_strongNPH}) & Strongly NP-hard (Thm.~\ref{th:release_strongNPHsum}) \\
$1 | p_j = 1, \, r_j, \,  u_j | f$ & Polynomial (Thm.~\ref{th:pj=1}) & Polynomial: assignment (Obs.~\ref{th:pj=1sum}) \\
$1 | p_j = p, \, r_j, \,  u_j | f$ & Polynomial (Thm.~\ref{th:pj=p}) & Polynomial 
\cite[see][ Obs.~\ref{obs:BaptisteEqualLength}]{Baptiste2000} \\
$1 | \sum U_j \le k, \, u_j = b_j - a_j C_j | f$ & Polynomial  (Thm.~\ref{th:bounded-latejobs}) & Strongly NP-hard (Thm.~\ref{th:bounded-latejobssum}) \\
\bottomrule
\end{tabular}
    }
    \caption{
    Overview of results for problems 1--6 of Table \ref{tab:notation-all} 
    }
    \label{tab:contribution-all}
\end{table}

\section{Problems with fixed parameters}
\label{sec:fixed-param}
\begin{table}[h]
\centering\footnotesize{
\begin{tabular}{l l}
\toprule
Problem & Complexity   \\
\midrule
\addlinespace
$1 | u_j = \b_j - a_j C_j | u_{\min}$ & Polynomial (Sect.~\ref{sec:bound-modb}) 
\\
$1 | u_j = b_j - \a_j C_j | u_{\min}$ & Polynomial (Sect.~\ref{sec:bound-moda}) 
\\
$1 | u_j = \b_j - \a_j C_j, \, \Ae_j = A_j | u_{\min}$ & Polynomial (Sect.~\ref{sec:fixed-area}) 
\\
$1 | \tilde{u}_{i} | u_{i}$ & Polynomial (Sect.~\ref{sec:modi-single}) \\
\addlinespace
$1 | \mbox{\it resch}, \, \delta_j \ge 0 | u_{n+1}$ & Weakly NP-hard (Thm.~\ref{th:NA1-hard}) \\
$1 | \mbox{\it resch}, \, \textstyle\sum \delta_j \ge 0 | u_{n+1}$ & Weakly NP-hard (Thm.~\ref{th:NA2-hard}) \\
$1 | \mbox{\it resch}, \, u_j \ge u_F | u_{n+1}$ & Polynomial (Thm.~\ref{th:NA3-poly}) \\
\addlinespace
$1 | \sigma_F = \sigma_T, u_j = \b_j - a_j C_j | \sum \b_j$ & Polynomial (Sect.~\ref{sec:target}) \\
$1 | \sigma_F = \sigma_T, u_j = b_j - \a_j C_j | \sum -\a_j$ & Polynomial (Sect.~\ref{sec:target}) \\
$1 | \sigma_{\mbox\tiny{\it SYS}} = \sigma_T, u_j = b_j - \a_j C_j |\sum -\a_j$ &
Polynomial (Sect.~\ref{sec:target}) \\
\bottomrule
\end{tabular}
    }
    \caption{
    Overview of results for problems 7--16 of Table \ref{tab:notation-all} }
    \label{tab:contribution-after-split}
\end{table}



In this section, we consider the general case of the problem with given and fixed parameters, namely the first six problems of Table \ref{tab:notation-all}. 
In particular, we first propose a general solution approach that can be applied to problem $1|u_j|u_{\min}
$, for any continuous utility functions $u_j$ and then refine it by introducing a (faster) greedy algorithm suitable for several versions of the problem. 
Finally, we examine several variants, including the case with release dates, the case with identical processing times, and the case with due dates. 
For each of these variants, we provide corresponding complexity results and solution algorithms.

\subsection{General solution approach}\label{sec:general}

We can describe a generic solution approach based on binary search given lower and upper bounds \underbar{$u$} and $\bar{u}$ on the optimal solution value $u_F=u_{\min}(\sigma_F)$ (assuming continuous utility functions).
A general upper bound is given by
$\bar{u}:= \max_j \{u_j(0)\}$, while a lower bound may depend on additional problem data, such as due dates and release dates.
For linear utility functions we can choose the upper bound $\bar{u}:= \max_j \{b_j\}$.
We can perform a binary search in $\log (\bar{u}-\underbar{$u$})$ iterations, in each of which we have a target value $u^T$.

To check whether $u_F \geq u^T$, the completion time $C_j$ of each job $j$ has to fulfill $u_j(C_j) \leq u^T$, i.e., recalling Equation~\eqref{eq:definverse}, $C_j \leq u^{-1}_j(u^T)$, which implies a due date 
\begin{equation}\label{eq:target-duedate}
    d^T_j = u^{-1}_j(u^T)
\end{equation}
for each job.
If the underlying scheduling problem with due dates can be solved by some known algorithm in polynomial time this immediately translates into an optimal polynomial algorithm for our problem.
In fact, it is enough to establish the existence of a feasible solution for the problem with due dates without any tardy job.

The running time of such an algorithm is polynomial, but not strongly polynomial, as it contains the $\log$-factor of $\bar u$, which  depends on the length of the encoded input. Thus, for several problem variants we will present new and simpler algorithms with strongly polynomial running times, i.e., depending only on the number of input values.

\subsection{Greedy-type algorithm}\label{sec:algo-greedy}

Through the binary search approach of Section \ref{sec:general}, provided that the due dates in Equation \eqref{eq:target-duedate} can be computed in polynomial time, an optimal solution can likewise be found in polynomial time. In fact, the maximum lateness can be evaluated by sorting the jobs in non-decreasing order of their due dates (EDD rule), which enables an efficient check of whether $u_F \geq u^T$.
However, we can avoid binary search by the following strongly polynomial greedy-type algorithm which we will call \gre.

For the completion time of all unscheduled jobs $T$, initially set to $\sum_j p_j$, we pick the job $k$ with highest utility if completed at time $T$. 
The procedure iterates by considering at each step time $T:=T - p_k$.
It works for arbitrary non-increasing utility functions $u_j$.
A sketch of \gre\ applied for an example with three jobs and linear utilities is shown in Figure~\ref{fig:greedy}.
Note that this approach is quite similar to \emph{Lawler's Algorithm} well known for the minimization of the maximum of regular functions 
under precedence constraints~\citep{lawler73}.
Note that, as for Lawler's algorithm, our approach can be easily adapted to the case of precedence relationships among jobs.
\begin{algorithm}[htbp]
\begin{algorithmic}[1]
\STATE {\bf Input}: a set of agent-jobs $J =\{J_1,J_2,\ldots,J_n\}$
\STATE {\bf Output}: permutation (schedule) $\pi = \pi_1, \ldots,\pi_n$ ($\pi_i$ is the $i$-th job in the sequence
\STATE $T := \sum_j p_j$
\FOR{$i = n$ downto $1$}
\STATE pick the remaining job $k\in J$ with highest utility at time $T$ and schedule it at  position $i$ (i.e. $\pi_i := k = \arg\max\{u_j(T) : j\in J\}$)
\STATE $T:= T - p_k$ 
\STATE $J:=J\setminus\{k\}$
\ENDFOR
\STATE {\bf Return:} $\pi$
\end{algorithmic}
     \caption{{\sc Max-Min Greedy}}
\label{alg:greedy}
\end{algorithm}

	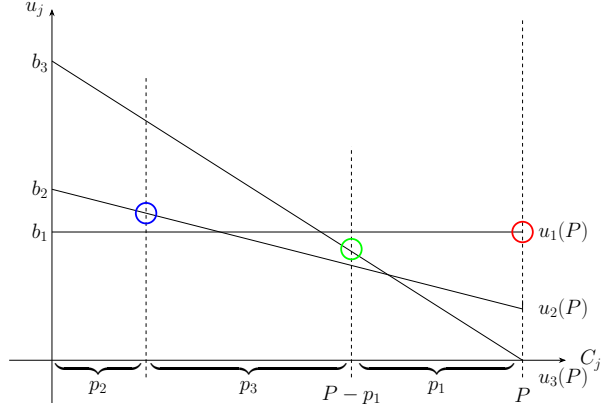
\begin{figure}[!ht]
		\centering
		\resizebox{0.5\textwidth}{!}{%
			\begin{circuitikz}
				\tikzstyle{every node}=[font=\LARGE]
				\node [font=\LARGE] at (15.0,6) {$P$};
				\draw [->, >=Stealth] (1.25,5.75) -- (1.25,17.25);
				\draw [->, >=Stealth] (0,7) -- (16.25,7);
				\draw [short] (1.25,15.75) -- (15,7);
				\draw [dashed] (15,6.5) -- (15,17);
				\draw [short] (1.25,10.75) -- (15,10.75);
				\draw [dashed] (10,6.5) -- (10,13.25);
				\draw [dashed] (4,6.5) -- (4,15.25);
				\draw [short] (15,8.5) -- (1.25,12);
				\draw [short] (15,8.75) -- (15,8.5);
				\node [font=\LARGE] at (0.75,17.25) {$u_j$};
				\node [font=\LARGE] at (17,7) {$C_j$};
				\node [font=\LARGE] at (16.2,10.75) {$u_1(P)$};
				\node [font=\LARGE] at (0.9,10.75) {$b_1$};
				\node [font=\LARGE] at (16.2,8.5) {$u_2(P)$};
				\node [font=\LARGE] at (0.9,12.0) {$b_2$};
				\node [font=\LARGE] at (16.2,6.5) {$u_3(P)$};
				\node [font=\LARGE] at (0.9,15.7) {$b_3$};
				\node [font=\LARGE] at (12.5,6.75) {$\underbrace{\mbox{\hspace{4.5cm}}}$};	
				\node [font=\LARGE] at (12.5,6.2) {$p_1$};
                \node [font=\LARGE] at (10,6) {$P-p_1$};
				\node [font=\LARGE] at (7,6.75) {$\underbrace{\mbox{\hspace{5.7cm}}}$};
				\node [font=\LARGE] at (7,6.2) {$p_3$};
				\node [font=\LARGE] at (2.6,6.75) {$\underbrace{\mbox{\hspace{2.5cm}}}$};
				\node [font=\LARGE] at (2.6,6.2) {$p_2$};
				\draw [color=red, line width=1.5pt ]  (15,10.75) circle (0.3cm);
				\draw  [color=green, line width=1.5pt]  (10,10.25) circle (0.3cm);			
				\draw  [color=blue, line width=1.5pt]  (4,11.3) circle (0.3cm);
			\end{circuitikz}
		}\caption{Algorithm~\gre\ applied to an instance of $1|u_j|u_{\min}$ with three jobs and linear utility functions.\label{fig:greedy}}
	\end{figure}

In conclusion we have the following result whose simple proof is reported in the Appendix.
\begin{theorem}\label{th:greedy}
If, for all $j\in J$,
$u_j$ is any non-increasing function of the completion time $C_j$ of job $j$, then 
\gre\ computes an optimal solution of \ $1|u_j|u_{\min}$ in $O(n^2)$ time.
\end{theorem}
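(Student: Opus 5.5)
Since there are no release dates, we may restrict attention to schedules without idle time: in a semi-active schedule every job starts as early as the order permits, so the machine is busy on $[0,P]$ and the last job completes at $P=\sum_j p_j$. A schedule is therefore just a permutation, and the job in the last position always completes at $P$, whatever the order of the others. The plan is the standard backward exchange argument used for Lawler's algorithm, reformulated for a maximization of the minimum utility. We will show by induction on the number of jobs that \gre\ attains $u_F$.

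\textbf{Key step (exchange lemma).} Let $k$ be the job chosen by \gre\ in the first iteration, i.e.\ $u_k(P)\ge u_j(P)$ for all $j\in J$. Take an optimal schedule $\sigma$ in which $k$ is not last, say $\sigma=(\alpha,k,\beta,\ell)$ with $\ell$ last, where $\alpha$ and $\beta$ are subsequences. Move $k$ to the end, giving $\sigma'=(\alpha,\beta,\ell,k)$. We compare utilities job by job:
\begin{itemize}[noitemsep,nolistsep]
\item jobs in $\alpha$ keep their completion times;
\item jobs in $\beta$ and job $\ell$ complete earlier by $p_k$, so by monotonicity their utilities do not decrease;
\item job $k$ now completes at $P$ with $u_k(P)\ge u_\ell(P)=u_\ell(\sigma)\ge u_{\min}(\sigma)$.
\end{itemize}
Hence $u_{\min}(\sigma')\ge u_{\min}(\sigma)$, so some optimal schedule places $k$ last.

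\textbf{Induction.} Once $k$ is fixed last, $u_{\min}(\sigma)=\min\{u_k(P),\,u_{\min}(\sigma|_{J\setminus\{k\}})\}$, where the remaining jobs form an instance on $J\setminus\{k\}$ with horizon $P-p_k$. The first term is constant, so maximizing the minimum reduces to the subproblem on $J\setminus\{k\}$. This subproblem is exactly what the remaining iterations of \gre\ solve, since they use $T=P-p_k$. By induction on $|J|$, with the base case of a single job being trivial, \gre\ is optimal. The only care needed is that optimality of the subproblem, combined with the constant $u_k(P)$, yields optimality of the whole. This holds because $\min\{c,\cdot\}$ is monotone.

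\textbf{Running time and main difficulty.} There are $n$ iterations. Each scans at most $n$ remaining jobs and evaluates each $u_j(T)$ in constant time, as assumed in the paper, for $O(n^2)$ time in total. The main point requiring care is the exchange step, specifically the claim that moving $k$ back does not hurt $\ell$ or the jobs in $\beta$. This uses only that each $u_j$ is non-increasing, with no continuity assumption, which is why the result holds for arbitrary non-increasing utilities.
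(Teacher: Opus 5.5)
Your proof is correct and follows essentially the same route as the paper. Both use an exchange argument that moves the job chosen by \gre\ to the last position without lowering any other agent's utility, and then recurse at time $P-p_k$. Your explicit induction, which relies on the monotonicity of $\min\{u_k(P),\cdot\}$, simply spells out the step the paper summarizes as ``the argument can be recursively iterated.''
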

%




\subsection{Release dates}\label{sec:release}

In this section we consider our problem when release dates are present, namely $1 | r_j,u_j | f$. 
We explore the case in which \emph{each} job has an arbitrary given release date and the special case in which all jobs \emph{but one} have null release dates.

\medskip\par\noindent \textbf{Fair solutions}.\label{sec:release-fair}
Let us first focus on the fair objective, i.e., $f = u_{\min}$. 
In what follows, we show that introducing release dates makes the problem strongly NP-hard, even when all utilities are linear. The result is obtained by reducing a utility target value to an equivalent due-date formulation. A detail proof is reported in the Appendix.

\begin{theorem}\label{th:release_strongNPH}
Problem $1|r_j,u_j|u_{\min}$ 
is strongly NP-hard, even for linear utility functions.
\end{theorem}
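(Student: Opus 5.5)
The plan is to reduce from the strongly NP-hard feasibility problem $1|r_j|L_{\max}$. Concretely, we use its decision version: given jobs with release dates $r_j$, processing times $p_j$ and due dates $d_j$, decide whether there is a non-preemptive schedule in which no job is late. This is strongly NP-hard by the classical result of Lenstra, Rinnooy Kan and Brucker, via a reduction from 3-Partition. As the remark preceding the statement suggests, the idea is to encode each due date as a utility target.

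Given such an instance, build an instance of $1|r_j,u_j|u_{\min}$ with the same jobs, processing times and release dates, and with linear utilities $u_j(C_j) = d_j - C_j$, i.e.\ $b_j = d_j$ and $a_j = 1$. Then $u_j(C_j) \ge 0$ if and only if $C_j \le d_j$. Hence a schedule with no late job exists if and only if the optimal fair value satisfies $u_F \ge 0$. Equivalently, with target $u^T=0$, Equation~\eqref{eq:target-duedate} gives back exactly $d^T_j = u^{-1}_j(0) = d_j$.

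Two technical points need care. \textbf{(1) Nonnegativity.} The paper normalizes utilities to be nonnegative by adding the constant $M$. We therefore state the decision version with threshold $M$ instead of $0$, i.e.\ $u_j = (d_j + M) - C_j$ and we ask whether $u_F \ge M$. Here $M$ is computed from the upper bound $\max_j r_j + \sum_j p_j$ on all completion times, so it is polynomially bounded in the numbers of the original instance. \textbf{(2) Strong hardness.} All numbers in the constructed instance, namely $r_j, p_j$, $b_j = d_j (+M)$ and $a_j = 1$, are bounded by a polynomial in the magnitudes of the original data. The reduction thus maps unary-encoded instances to unary-encoded instances, which is what strong NP-hardness requires. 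It is also clearly polynomial-time computable.

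The main obstacle is mostly bookkeeping. First, we must fix the right source problem and confirm that its strong NP-hardness holds even in the "all jobs on time" decision form. We can alternatively inline the 3-Partition construction directly: $3m$ jobs with release date $0$ and deadline $mB+m-1$, plus $m-1$ separator jobs of length $1$ with $r_j = d_j - 1$ fixed at times $kB+k-1$. Second, we must check that restricting to semi-active schedules in $\S$ loses nothing. This holds because every utility is non-increasing, so left-shifting a feasible schedule never decreases any utility.
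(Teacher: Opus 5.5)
Your proposal is correct and is essentially the paper's proof. The paper also reduces from $1|r_j|L_{\max}$ via $u_j = d_j - C_j + M$, so that $L_j = M - u_j$ and $u_F \ge K + M$ if and only if $L_{\max} \le -K$. Your version fixes $K=0$, which suffices because the 3-Partition reduction of Lenstra et al.\ already makes the ``no late job'' question strongly NP-hard. You also spell out the number-size and semi-active bookkeeping that the paper leaves implicit.
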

%
%
It can furthermore be shown that the problem remains NP-hard even when only a single job has a release date strictly greater than zero. In this restricted setting, the problem is in fact weakly NP-hard. A proof of this statement is given in the Appendix.

\begin{theorem}\label{th:release_binary}
Problem $1|r_{j},u_j|u_{\min}$ 
is weakly NP-hard, even if the  utility functions are linear and only 
one release date is nonzero.
\end{theorem}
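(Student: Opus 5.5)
The plan is to prove the two halves of weak NP-hardness separately. First, a reduction from \textsc{Partition} shows NP-hardness even when only one release date is nonzero. Second, a pseudopolynomial algorithm for this restricted case shows that the hardness is only in the weak sense.

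\emph{Reduction.} Let a \textsc{Partition} instance consist of integers $e_1,\dots,e_m$ with $\sum_i e_i=2B$. I create $m$ ordinary jobs with $p_j=e_j$ and $r_j=0$, plus one special job $x$ with $p_x=1$ and $r_x=B$, so that $P=2B+1$. All utilities are linear with slope $1$. I set $u_x(C)=B+1+K-C$ and $u_j(C)=2B+1+K-C$ for the ordinary jobs, where $K$ is a positive constant large enough to keep all utilities nonnegative over $[0,r_x+P]$, as assumed in Section~\ref{sect:problem}. The decision question is whether $u_F\ge K$.

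By \eqref{eq:target-duedate}, the target $u^T=K$ translates into the deadlines $d_x=B+1$ and $d_j=2B+1$. The ordinary jobs can only meet $d_j=2B+1=P$ if the schedule has makespan $P$, so there is no idle time. Job $x$ cannot start before $B$ and must finish by $B+1$, so it occupies exactly $[B,B+1]$. Since there is no idle time, the jobs placed before $x$ fill $[0,B]$ exactly and therefore have total length $B$, which gives a partition. Conversely, a partition yields such a schedule: one half in $[0,B]$, then $x$, then the other half. All numbers are polynomial in the input, so this is a valid polynomial-time reduction.

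\emph{Pseudopolynomial algorithm.} I first fix a target $u^T$, compute deadlines $d_j=u_j^{-1}(u^T)$, and decide feasibility. In any feasible schedule, $x$ splits the remaining jobs into a set $S$ placed before $x$ and a set placed after it. An exchange argument shows that each of the two groups can be sequenced in EDD order, and that idle time can occur only immediately before $x$. Sort the zero-release jobs by EDD and guess the completion time $c$ of $x$, which ranges over $O(r_x+P)$ integer values. For a fixed $c$, I run a Boolean DP over the first $k$ jobs in EDD order, with the state $t$ equal to the total processing time of the jobs already assigned to $S$.
\begin{itemize}
\item Adding job $k$ to $S$ is allowed if $t+p_k\le d_k$.
\item Adding job $k$ after $x$ is allowed if $c+\bigl(\sum_{i\le k}p_i-t\bigr)\le d_k$.
\end{itemize}
The guess $c$ is consistent if some final state $t$ satisfies $\max\{t,r_x\}+p_x\le c$ and $c\le d_x$. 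This decides feasibility in $O(nP(r_x+P))$ time.

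To optimize, I run this test inside the binary search of Section~\ref{sec:general}, which keeps the total running time pseudopolynomial. For integer data and linear utilities, I would alternatively enumerate the candidate values of $u_F$, namely values $u_j(C)$ for integer $C\le r_x+P$; there are pseudopolynomially many.

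\emph{Main obstacle.} The delicate step is the exchange argument, which must justify three things simultaneously: EDD order within each of the two groups, no idle time after $x$, and the fact that guessing $c$ captures every relevant schedule. I will handle it with a standard adjacent-swap argument inside each group, applied to the deadline version of the problem. In that version the utility structure no longer matters, which is what makes the argument clean.
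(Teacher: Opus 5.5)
Your proposal is correct and follows essentially the same route as the paper. Both use a \textsc{Partition} reduction in which the single job released at time $B$ is forced into a slot starting exactly at $B$ of an idle-free schedule, together with a pseudopolynomial dynamic program that enumerates the position of that job and assigns the other jobs, in EDD order, to the parts before and after it. The differences are minor. Your unit-length special job with common slope $1$ turns the hardness argument into a clean deadline-packing argument that already covers the $L_{\max}$ special case. Your one-dimensional state $t$ with guessed completion time $c$, plus the finite candidate set for $u_F$, is a slightly tighter and more carefully justified version of Algorithm~\ref{alg:dp-special}.
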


To show that this special case is indeed only weakly NP-hard, we briefly describe a pseudopolynomial dynamic programming algorithm.

\begin{algorithm}[hp]
\caption{Pseudo-polynomial dynamic programming algorithm for 
the special case of $1|r_j,u_j|u_{\min}$
with only one nonzero release date}
\label{alg:dp-special}
{\small
\begin{algorithmic}[1]
\STATE \textbf{Input}: Jobs $1,\ldots,n+1$, processing times $p_j$, release date $r$ for job $n{+}1$
\STATE \textbf{Output}: Maximum minimum utility $u_{\min}^*$
\STATE Let $P:= \sum_{j=1}^n p_j$, $u_F:=0$
\FOR{$s_{n+1} = r$ \TO $r + P$}
    \STATE Perform binary search on $u^T$
 \WHILE{there is a candidate value $u^T$}
         \STATE $\mathcal{S} \leftarrow \emptyset$
        \STATE Compute due dates $d_j^T  = u_j^{-1}(u^T)$ for all jobs $j$
        \STATE Identify set $A$ of jobs with $d_j^T \le s_{n+1} + p_{n+1} + p_j - 1$, set $P_A:=\sum_{j\in A}p_j$ 
        \IF{$P_A\leq s_{n+1}$}
        \STATE $\mathcal{S} \leftarrow \{(P_A,0)\}$
        \STATE Sort remaining jobs in non-decreasing order of $d_j^T$
        \FORALL{jobs $j$ in sorted order}
            \STATE $\mathcal{S}' \leftarrow \emptyset$
            \FORALL{states $(P_1, P_2) \in \mathcal{S}$}
                \IF{$P_1 + p_j \le s_{n+1}$}
                    \STATE Add $(P_1 + p_j, P_2)$ to $\mathcal{S}'$
                \ENDIF
                \IF{$d_j^T \geq s_{n+1} + p_{n+1} + P_2 + p_j$}
                    \STATE Add $(P_1, P_2 + p_j)$ to $\mathcal{S}'$
                \ENDIF
            \ENDFOR
            \STATE $\mathcal{S} \leftarrow \mathcal{S}'$
        \ENDFOR
        \ENDIF
        \IF{$\mathcal{S} = \emptyset$}
            \STATE Choose smaller candidate $u^T$
        \ELSE
            \STATE Choose larger candidate $u^T$
        \ENDIF
    \ENDWHILE
    \STATE $u_F:= \max\{u_F, u^T\}$
\ENDFOR
 \STATE \textbf{Return} $u_F$
\end{algorithmic}
}
\end{algorithm}

Let job $n+1$ be the job with a release date $r$, while all other jobs are available from the beginning. Let $P=\sum_{j=1}^n p_j$ be the total job processing time not including job $n+1$.
We perform an iteration over all feasible starting times $s_{n+1}$ of job $n+1$, i.e., $s_{n+1}=r, r+1, \ldots, r+P$.
For each starting time $s_{n+1}$ we perform binary search similar to the one described in Section~\ref{sec:general}.
For each candidate value $u^T$ we obtain, recalling Equation \eqref{eq:definverse}, due dates $d_j^T = u_j^{-1}(u^T)$ for every job $j$ and check if there exists a feasible schedule where all jobs are on time w.r.t.\ these due dates, i.e., where the maximum lateness is $\leq 0$.
To do so, we have to decide for each job whether it should be scheduled before job $n+1$ finishing before $s_{n+1}$ or after $n+1$ starting not before $s_{n+1}+p_{n+1}$.
Within each of the two resulting subsets, jobs must be scheduled in EDD (earliest due date) order to minimize the maximum lateness.
 All jobs with $d_j^T \leq s_{n+1}+p_{n+1}+p_j-1$ must be scheduled before $n+1$, and thus can be omitted from further consideration after recording their total processing time $P_A$.
For all remaining jobs we perform dynamic programming where each state $(P_1, P_2)$ consists of jobs with total processing time $P_1$, resp.\ $P_2$, scheduled before, resp.\ after job $n+1$.
After initializing a single state $(P_A,0)$, we go through the jobs in EDD order.
Each job $j$ is added to each existing state $(P_1, P_2)$ either in the first or the second subset.
This replaces $(P_1, P_2)$ by two new states $(P_1+p_j, P_2)$ and $(P_1, P_2+p_j)$, where the former is 
generated if $P_1+p_j \leq s_{n+1}$ and the latter if $d_j^T \geq s_{n+1} + p_{n+1} + P_2 + p_j$.
Clearly, the total number of states is bounded by $P^2$.
If we reach a job which cannot be added to any existing state, 
we have established that there is no solution with starting time $s_{n+1}$ for job $n+1$ reaching the desired utility target and have to lower $u^T$.
Otherwise, every state generated by adding the final job is a feasible solution with minimum utility at least $u^T$ and we can try to increase $u^T$. 
The procedure sketched above is further clarified by the pseudo-code provided in Algorithm~\ref{alg:dp-special}.

The overall pseudopolynomial running time is $O(P \cdot n P^2 \cdot \log(UB-LB))$ where $UB$ and $LB$ are, respectively, upper and lower bounds on the objective values.

\medskip\par\noindent \textbf{System optimal solution}.
In the following we consider problem $1|r_{j},u_j|\sum_j u_j$ with release dates and global objective function, i.e., we consider the maximization of the total utility.
It turns out that this problem has the same complexity as the one with maximum fairness objective.
Since completion times can be mapped to utilities, a simple reduction from the strongly NP-hard
problem $1|r_j | \sum Cj$ proves the following statement.
\begin{theorem}\label{th:release_strongNPHsum}
Problem $1|r_{j},u_j|\sum_j u_j$ 
is strongly NP-hard, even if the  utility functions are linear.
\end{theorem}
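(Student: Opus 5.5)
We reduce from $1|r_j|\sum C_j$, which is strongly NP-hard \citep{bib:lenstra1977}. Take an arbitrary instance of it: $n$ jobs with integer processing times $p_j$, release dates $r_j$, and a threshold $K$ for the decision version ``is there a schedule with $\sum_j C_j \le K$?''. We build an instance of $1|r_j,u_j|\sum_j u_j$ with the same jobs, processing times and release dates. Each agent receives the identical linear utility $u_j(C_j)=b-C_j$, so $a_j=1$ and $b_j=b$ for all $j$.

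To respect the paper's convention $u_j\ge 0$ on all feasible schedules, we choose $b:=\bar C=\max_j r_j+\sum_j p_j$. This is a valid upper bound on every completion time in a semi-active schedule. Every number in the new instance is bounded by a polynomial in the numbers of the original instance, so the reduction preserves strong NP-hardness.

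For any feasible schedule $\sigma$, Equation~\eqref{eq:sum-utili} gives $u_{\glob}(\sigma)=nb-\sum_j C_j(\sigma)$. Hence $\sigma$ has $\sum_j C_j(\sigma)\le K$ if and only if $u_{\glob}(\sigma)\ge nb-K$. The two instances have the same feasible schedule set $\S$, so the decision versions are equivalent.

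Membership of the decision version in NP is immediate. A job sequence is a certificate, and its semi-active schedule and total utility can be evaluated in polynomial time.

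The only point needing care is that the hardness source is genuinely strong. $1|r_j|\sum C_j$ is strongly NP-hard by Lenstra, Rinnooy Kan and Brucker (1977). An alternative is to reuse the reduction in the proof of Theorem~\ref{th:release_strongNPH}, from 3-Partition, with unit slopes. The nonnegativity shift by $b$ is a constant offset, so it does not affect optimality.
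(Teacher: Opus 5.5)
Your proof is correct and takes the same route as the paper, which also obtains the result by mapping completion times to utilities in a reduction from the strongly NP-hard problem $1|r_j|\sum C_j$; your choice $u_j=b-C_j$ with $b=\max_j r_j+\sum_j p_j$ correctly fills in the details the paper leaves implicit. The one slip is your closing aside: the proof of Theorem~\ref{th:release_strongNPH} reduces from $1|r_j|L_{\max}$, not from 3-Partition, and it concerns the max-min objective, so it cannot simply be reused for $\sum_j u_j$ --- but your main argument does not depend on it.
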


\begin{theorem}\label{th:release_weakNPHsum}
Problem $1|r_{j},u_j|\sum_j u_j$ 
is weakly NP-hard, even if the  utility functions are linear and only 
one release date is nonzero.
\end{theorem}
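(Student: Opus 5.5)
The plan is a reduction from \textsc{Partition}. By \eqref{eq:sum-utili}, maximizing $\sum_j u_j$ with linear utilities $u_j=b_j-a_jC_j$ is the same as minimizing $\sum_j a_jC_j$. So it suffices to show that $1|r_j|\sum w_jC_j$ is NP-hard when only one job has a positive release date. The intercepts $b_j$ are then taken as one common large constant (e.g.\ $b_j:=\max_k a_k\cdot(r+\sum_k p_k)$), which keeps all utilities non-negative.

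\textbf{Construction.} Let $a_1,\dots,a_n$ be a \textsc{Partition} instance with $\sum_j a_j=A=2B$.
\begin{itemize}[noitemsep,nolistsep]
\item For each $j=1,\dots,n$, create a job with $p_j=w_j=a_j$ and $r_j=0$.
\item Add a job $n+1$ with $r_{n+1}=r:=B$, some processing time $q\ge 1$ (e.g.\ $q=1$), and a huge weight $W$.
\end{itemize}
The key fact is that for jobs with $w_j=p_j$, the value of $\sum w_jC_j$ inside a contiguous block does not depend on their order. For a block of total length $x$ starting at time $t$ it equals $tx+\frac{1}{2}\bigl(x^2+\sum_{\text{block}}a_j^2\bigr)$.

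\textbf{Case 1: job $n+1$ starts exactly at $r$.} Let $S$ be the set of small jobs placed before it, with total length $x=\sum_{j\in S}a_j\le B$. The small jobs cost
\[
f(x)=\tfrac12\bigl(x^2+(A-x)^2+\textstyle\sum_j a_j^2\bigr)+(A-x)(r+q).
\]
Its derivative is $2x-A-r-q<0$ on $[0,B]$, so $f$ is strictly decreasing there. Hence the minimum is $f(B)$, and it is attained if and only if some subset sums to exactly $B$. By integrality, any $x\le B-1$ gives $f(x)\ge f(B)+\varepsilon$ for an explicit $\varepsilon>0$ (e.g.\ $\varepsilon\ge q$). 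The threshold is then $K:=W(r+q)+f(B)$. A schedule of cost at most $K$ exists if a partition exists.

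\textbf{Case 2: job $n+1$ starts at time $r+1$ or later.} Then job $n+1$ alone costs at least $W(r+q+1)$, while the small jobs cost at least their unavoidable minimum, which is some $g\ge 0$. Choosing $W>f(0)$, for instance $W:=A(A+r+q)+\sum_j a_j^2+1$, gives a total cost above $K$. This is still polynomial in the numbers, which is all that weak NP-hardness requires. Semi-active schedules in which job $n+1$ starts at $r$ with idle time before it are already covered by $x<B$ in Case 1. Therefore a schedule of cost at most $K$ exists if and only if the \textsc{Partition} instance is a yes-instance.

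\textbf{Pseudopolynomial algorithm.} To show the hardness is only weak, adapt the idea of Algorithm~\ref{alg:dp-special}, without the binary search:
\begin{itemize}[noitemsep,nolistsep]
\item Enumerate the start time $s_{n+1}\in\{r,\dots,r+P\}$.
\item Within each side of job $n+1$, order the jobs by Smith's rule.
\item Run a DP over states $(P_1,P_2)$ that stores the minimum weighted completion time. Adding job $j$ before job $n+1$ costs $a_j(P_1+p_j)$ and requires $P_1+p_j\le s_{n+1}$; adding it after costs $a_j(s_{n+1}+p_{n+1}+P_2+p_j)$.
\end{itemize}
This takes $O(nP^3)$ time, which together with the reduction proves weak NP-hardness.

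\textbf{Main obstacle.} The delicate step is the calibration of $W$ together with the strict gap $f(B-1)-f(B)>0$. These two facts together must exclude every alternative: delaying job $n+1$, leaving idle time before it, or splitting the small jobs unevenly. I would verify them explicitly using integrality of the data.
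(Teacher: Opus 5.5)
Your proof is correct and uses the same instance as the paper: jobs with $p_j=a_j=w_j$ and $r_j=0$, plus one heavy job released at $B$. Where you differ is in how you certify the key step.

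The paper argues by exchange. It assumes some balanced schedule exists and takes a non-balanced optimum $\sigma^*$, in which the heavy job is asserted to start at $B$ because its weight is ``sufficiently large''. It then extracts a block $E$ of jobs before job $n+1$ and a block $L$ after it with $p(L)=p(E)+\tau$. Swapping them and advancing the tail changes the total utility by $\bigl(p(E)+\bar p+\tau+p(R)\bigr)\tau>0$.

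You instead use the order-independence of $\sum w_jC_j$ inside a block with $w_j=p_j$. Once job $n+1$ starts at $B$, this lets you write the entire cost as an explicit quadratic $f(x)$ in the load $x$ placed before it. Everything then follows from $f'(x)=2x-A-r-q<0$ on $[0,B]$ and the integral gap $f(B-1)-f(B)=B+q+1$.

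Your route gives a fully quantified reduction. You have an explicit threshold $K=W(r+q)+f(B)$ and an explicit weight $W>f(0)$ that rules out delaying job $n+1$; this uses that semi-active schedules with integer data have integer start times. The paper leaves that point at ``$M$ sufficiently large''. Your route also avoids building $E$ and $L$ from a hypothetical balanced schedule. The paper's swap is more local, exhibiting a concrete improving move rather than a global formula. The two are really the discrete and continuous versions of the same fact: shifting load from after job $n+1$ to before it always helps.

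Your pseudopolynomial DP matches the one the paper sketches after its proof.
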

\begin{proof}
\def\p{\bar p}
We use a reduction from \textsc{Partition}. 
We consider an instance $I_P$ with integers
$w_1,\ldots,w_n$ and $B = \frac12 \sum_{j=1}^n w_j$. Correspondingly,
we construct an instance $I$ of $1|r_{j}, u_j = b_j - a_jC_j| \sum_j u_j$
with $n{+}1$ jobs. 
For each $j=1,\ldots,n$, let $p_j = a_j = w_j$ and $r_j=0$,
and let job $n{+}1$ have processing time $p_{n+1} = \p$, $a_{n+1} = M$
(for $M$ sufficiently large) and release date $r_{n+1} = B$.
Since $\sum_{j=1}^n p_j = 2B$, any feasible schedule has makespan at least $2B+\p$.
A feasible schedule for $I$ such that $n+1$ is processed in the interval $[B,B+\p]$ and having a makespan equal to $2B+\p$ will be called \emph{balanced} (see schedule $\tilde\sigma$ in Figure~\ref{fig:proofThm4}).
We show that $I_P$ is a YES instance if and only if an optimal schedule of $I$
is balanced. 

Hereafter, for any $S\subseteq \{1,\ldots, n\}$, we use $p(S)= \sum_{j\in S}p_j =\sum_{j\in S}w_j$.

\begin{figure}[htb]
    \centering
    \includegraphics[width=0.7\linewidth]{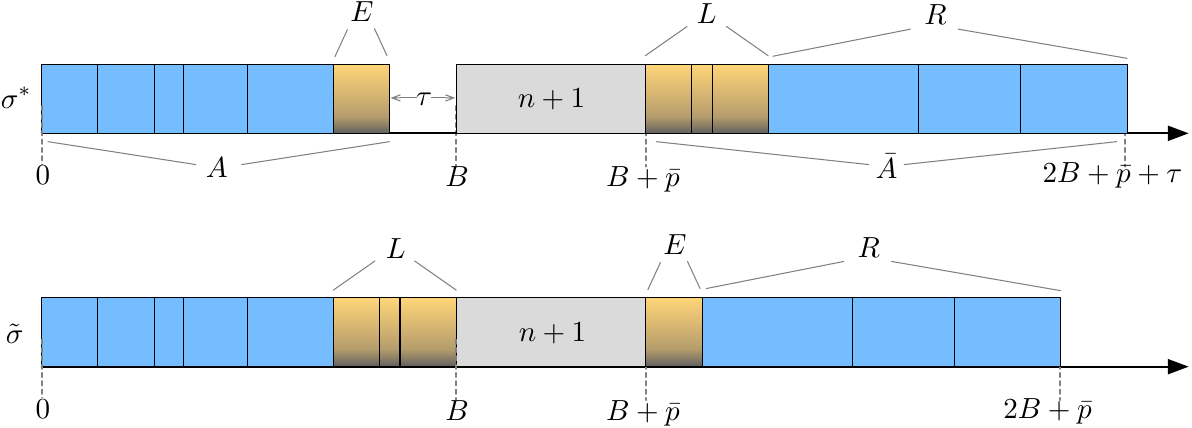}
    \caption{Proof of Theorem~\ref{th:release_weakNPHsum}: Non-balanced schedule $\sigma^*$ and balanced schedule $\tilde\sigma$ obtained by swapping blocks $E$ and $L$. }
    \label{fig:proofThm4}
\end{figure}

Assume that $I$ admits an optimal schedule $\sigma$ that is balanced. By definition of balanced schedule, there is a set $S\subseteq \{1,\ldots, n\}$ of jobs executed in $[0,B]$, and $p(S) = B$. Hence $I_P$ is a YES instance.

Conversely, assume that $I_P$ is a YES instance. Then there exist a subset $S\subseteq \{1,\ldots,n\}$ with $p(S) = B$ and a feasible schedule $\sigma$ where the jobs in $S$ are processed in the interval $[0,B]$, job $n{+}1$ in $[B,B+\p]$, and the set $\bar S$ of remaining jobs in $[B+\p,2B+\p]$. Such a schedule is therefore balanced. 
Note that since $a_j = p_j$, any internal permutation of jobs within $S$ or within $\bar S$ does not alter the overall utility. Moreover, by standard interchange arguments, it is easy to prove that 
all balanced schedules have identical total utility. 

We now show that, if a balanced schedule exists then it is optimal for $I$. 
Again, under the assumption that $I_P$ is a YES instance and hence a balanced schedule $\sigma$ exists for $I$, consider an optimal schedule $\sigma^*$ for $I$ which is not balanced. 
Let $A$ and $\bar A$ be the sets of jobs process before and, respectively, after job $n+1$ in  $\sigma^*$. 
As $a_{n+1}=M$ is very large, $\sigma^*$ would schedule $n+1$ to start as early as possible, i.e., at its release time $r_{n+1}=B$, and there is idle time of length $\tau \ge 1$ in $[0,B]$.

Since a balanced solution exists, there must be two subsets of jobs $E\subset A$  (possibly empty) and $\emptyset\neq L\subset \bar A$, such that $p(L) = p(E) + \tau$. With no loss of generality (possibly by re-arranging the jobs within $A$ and $\bar A$ with no loss of utility) we may assume that $E$ and $L$ are scheduled, respectively, as a block of the last consecutive jobs in $A$ and a block of the first consecutive jobs of $\bar A$. Let $R$ be the set of remaining jobs processed after $L$ (i.e., $R=\bar A\setminus L$). Now, consider a schedule $\tilde{\sigma}$ obtained by swapping the blocks $E$ and $L$ and shifting all jobs in $R$ earlier by $\tau$. Then $\tilde{\sigma}$ has makespan $2B+\p$ and is therefore balanced. Figure~\ref{fig:proofThm4} illustrates the two schedules $\sigma^*$ and $\tilde\sigma$.

The change in total utility when transforming $\sigma^*$ into $\tilde{\sigma}$ equals
\[
\Delta u
= \underbrace{-\, p(E)\bigl(\p+\tau+p(E)\bigr)}_{\mbox{\scriptsize loss for postponing $E$}}
  + \underbrace{p(L)\bigl(\p+p(L)\bigr)}_{\mbox{\scriptsize gain for advancing $L$}}
  + \underbrace{p(R)\tau.}_{\mbox{\scriptsize gain for advancing $R$}}
\]
Since $p(L) = p(E) + \tau$, simple algebra yields
\[
\Delta u
= \bigl(p(E)+\p+\tau+p(R)\bigr)\tau
> 0.
\]
Hence $u(\tilde{\sigma}) > u(\sigma^*)$, contradicting the optimality of $\sigma^*$.
In conclusion, if $I_P$ is a YES instance, an optimal schedule for $I$ is balanced and this concludes the proof.
\end{proof}
Similar to the fair-objective case discussed in the previous section, when we
focus on the special case in which only one job has a non-zero release date,
the problem $1 \mid r_j, u_j \mid \sum_j u_j$ is weakly NP-hard (in contrast to
the general case, which is strongly NP-hard; see
Theorem~\ref{th:release_weakNPHsum}).  
This follows from the fact that one can design a pseudopolynomial dynamic
programming algorithm analogous to the one sketched above for 
$1 \mid r_j, u_j \mid u_{\min}$.

The algorithm iterates over all feasible starting times $S$ of  job $n+1$, which is the only job with a positive release date $r$.  For each candidate starting time $s_{n+1}$, we must decide for every job whether to schedule it before $s_{n+1}$ or after $s_{n+1} + p_{n+1}$.  
Within each of the two resulting subsets, jobs must be scheduled in non-increasing order of $\frac{a_j}{p_j}$, in accordance with Expression~\eqref{eq:sum-utili}.
A dynamic programming algorithm considers the jobs in this order and proceeds
analogously to the case $1 \mid r_j, u_j \mid u_{\min}$.  
Starting from a single initial state $(0,0)$, the algorithm processes the jobs
one by one.  For each job $j$, any existing state $(P_1, P_2)$ is replaced by
the two successor states $(P_1 + p_j,\, P_2)$ and/or $(P_1,\, P_2 + p_j)$.  
After all jobs have been processed, the state yielding the maximum total
utility is retained as the best solution associated with the candidate value
$s_{n+1}$.  
Optimality follows from the prescribed sorting of the jobs.

%

\medskip\par\noindent \textbf{Unit or identical processing times}.\label{sec:identical_proc_times}
A well-studied special case in the scheduling literature \citep{Baptiste99,Baptiste2000,bib:CHEN2026_equallenghtmultiagent,lawler78,simons1978} considers instances in which all processing times are identical, that is, $p_j = p$ for all jobs $j$, or in the even more restricted case, $p_j = 1$.
It is important to note that these two assumptions are not equivalent.
In what follows, we show that both special cases $1|p_j=1,r_{j},u_j| u_{\min}$ and $1|p_j=p,r_{j},u_j | u_{\min}$ admit polynomial-time algorithms, although the structural properties of their corresponding optimal solutions differ.

When all processing times are unitary, any solution of $1|p_j=1,r_j,u_j|u_{\min}$  can be represented solely by the assignment of each job to a  unitary time slot corresponding to the completion/starting time of the job.  Clearly, this time slot must be feasible, i.e. each job $j$ can be scheduled only in the time slots between its (integer) release date $r_j$ and the time slot corresponding to the schedule makespan. 

In searching for an optimal solution of  $1|p_j=1,r_j,u_j|u_{\min}$, it is also clear that we may restrict our attention to semi-active schedules in which the makespan is minimum and unnecessary idle times are avoided. 
Note that the minimum makespan value $T$ can be easily computed---and so are the available time slots for the jobs in any semi-active schedule---as shown below (see Algorithm~\ref{alg:completion_new1}). 

This property allows for a straightforward algorithmic approach: The solution of the scheduling problem corresponds to a bipartite matching between the set of jobs and the time slots identified by the minimum makespan schedule, where each job can only be assigned to time slots not earlier than its release date.
The profit of assigning job $i$ to time slot $C_j$ is given by $u_i(C_j)$.
Hence, solving $1|p_j=1,r_j,u_j|u_{\min}$ reduces to a linear bottleneck assignment problem, which can be solved in $O(n^{2.5}/\sqrt{\log n})$ time \cite[see Sec.~6.2,][]{assign2012}.

\begin{algorithm}[htbp]
{\small
\begin{algorithmic}[1]
\STATE {\bf Input:} set of jobs $J = {J_1,J_2,\ldots,J_n}$ with release dates $r_1, r_2, \ldots, r_n$
\STATE Sort and rename jobs in non-decreasing order of release dates, i.e., $r_j \le r_{j+1}$ for $j = 1, \ldots, n-1$ \COMMENT{This ordering remains valid for all successive subsets}
\STATE $pred(j+1):=j$ \COMMENT{Defines the current job sequence}
\STATE $C := 0$
\FOR {$j = 1$ to $n$}\label{l:for}
\STATE $C := C_j := \max\{r_j + 1, C + 1\}$
\ENDFOR\label{l:endfor}
\STATE $T := C_n$
\STATE $J' := J$ \COMMENT{Set of unscheduled jobs}
\STATE Let $last$ denote the index of the last job in $J'$\label{l:last_1}
\REPEAT
\STATE $j'' := j' := last$ \COMMENT{$j'$ is the candidate job to be scheduled last}
\WHILE{$C_{j''} - 1 > r_{j''}$}
\STATE $j'' := pred(j'')$
\IF{$u_{j''}(T) > u_{j'}(T)$}
\STATE $j' := j''$
\ENDIF
\ENDWHILE\label{l:last_2}
\STATE Schedule $j'$ at position $|J'|$ with $C_{j'} := T$\label{l:sched_j}
\STATE $J' := J' \setminus \{j'\}$\label{l:update_J}
\STATE Move all jobs $\ell \in J'$, $\ell > j'$, earlier by 1, and update $C_\ell$, %
$pred$, 
and $last$
\STATE $T := C_{last}$ \COMMENT{Marks the endpoint of a new block if applicable}\label{l:update_T}
\UNTIL{$|J'| = 1$}
\STATE Schedule the remaining job in $J'$ at position $1$
\end{algorithmic}
}
\caption{Exact algorithm for $1|p_j=1,r_j,u_j| u_{\min}$}
\label{alg:completion_new1}
\end{algorithm}

We can develop a more efficient procedure based on the same principle as Algorithm \gre, introduced in Section~\ref{sec:algo-greedy}.
This procedure is detailed in Algorithm~\ref{alg:completion_new1}.
First, the jobs are sorted in non-decreasing order of their release dates, and a semi-active schedule with minimum makespan $T$ is constructed by starting each job as soon as it is released and all its predecessors in the sorted order have been completed (see lines~\ref{l:for}--\ref{l:endfor} of Algorithm~\ref{alg:completion_new1}).

Then, starting from the last available time slot and proceeding iteratively (lines~\ref{l:last_1}--\ref{l:last_2}), the algorithm identifies a job $j'$ with the highest utility at that time slot among those that can be moved to the end of the sequence without increasing the makespan.
This relocation is performed only if it is feasible, that is, only if all jobs succeeding $j'$ in the current sequence can be moved earlier.
Such a shift is possible only when none of these jobs starts exactly at its release date.
The while-loop terminates, because the first job is always started at its release date.
After identifying such a job $j'$, it is moved to the end of the sequence (line~\ref{l:sched_j}), and the set of remaining jobs $J'$ and their associated parameters are updated accordingly (lines~\ref{l:update_J}--\ref{l:update_T}).
If idle times are present in the original sequence, each resulting block of jobs is implicitly handled; note that such a block always begins with a job starting exactly at its release date.
It is straightforward to verify that the running time of Algorithm~\ref{alg:completion_new1} is $O(n^2)$.
Hence, we obtain the following result.

\begin{theorem}\label{th:pj=1}
The problem with release dates and unit processing times $1|p_j=1,r_j,u_j|u_{\min}$ can be solved in $O(n^2)$ time.
\end{theorem}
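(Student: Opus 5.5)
We prove Theorem~\ref{th:pj=1} by showing that Algorithm~\ref{alg:completion_new1} is correct and runs in $O(n^2)$ time. The argument mirrors the one for Theorem~\ref{th:greedy}: a backward exchange argument over the set of available unit time slots.

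\textbf{Step 1: the slot set is fixed.} Lines~\ref{l:for}--\ref{l:endfor} compute the ERD schedule, whose set of occupied slots $\mathcal{T}$ has minimum makespan $T$. More strongly, for every $t$ it maximizes the number of completed jobs up to $t$. Since each $u_j$ is non-increasing, we will show that some fair schedule uses exactly the slots in $\mathcal{T}$. Take any feasible schedule and repeatedly move jobs earlier into free slots of the ERD block structure. Any job can be placed, in ERD order, into the left-justified slot positions, and doing so never increases a completion time. This reduces the problem to a bottleneck assignment of jobs to the slots in $\mathcal{T}$, with the constraint $C_j \ge r_j+1$.

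\textbf{Step 2: the exchange argument, by induction from the last slot.} Let $T$ be the current last slot and $J'$ the current job set. Call a job $j$ \emph{movable} if $J'\setminus\{j\}$ can be feasibly assigned to $\mathcal{T}\setminus\{T\}$. We first claim that the movable jobs are exactly those scanned by the while-loop. These are the jobs from $last$ back to, and including, the first job in the current block that starts at its release date. Jobs after that point can each be shifted left by one slot. The job that starts at its release date, or any job before it, can be removed with the jobs in between moving left only if none of them is pinned; the loop stops at the first pinned job, and that job itself can be taken out. Removing an earlier job would force some job in the remaining set to violate a release date, by a counting (Hall-type) argument on the interval $[r_{j''},T]$.

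Given this characterization, let $\sigma^*$ be any fair schedule on $\mathcal{T}$, and let $k$ be the job it places in slot $T$. Then $k$ is movable. If $k\ne j'$, swap $k$ and $j'$ in $\sigma^*$. Because $u_{j'}(T)\ge u_k(T)$, the utility now sitting in slot $T$ does not decrease. Job $k$ takes the former slot $t<T$ of $j'$. This is feasible, because we can instead use the left-shifted assignment of $J'\setminus\{j'\}$, which is feasible by movability, together with the fact that any feasible assignment of the remaining jobs can be optimized independently. Also $u_k(t)\ge u_k(T)\ge u_{\min}(\sigma^*)$. So the new minimum is no worse. Recursing on $J'\setminus\{j'\}$ with the slot set $\mathcal{T}\setminus\{T\}$ completes the induction. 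That recursive slot set is precisely the updated schedule in line~\ref{l:update_T}, including the case where a block ends and $T$ jumps to the end of the preceding block.

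\textbf{Step 3: running time.} Sorting costs $O(n\log n)$. Each outer iteration scans at most $|J'|$ jobs and performs one shift and update pass costing $O(n)$, so the total is $O(n^2)$.

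The main obstacle is the characterization of movable jobs in Step 2, and in particular justifying the swap rigorously. The swap is easiest to phrase at the level of sub-instances: we need $\max_{\sigma} u_{\min}$ over the remaining jobs after fixing $j'$ at $T$ to be at least as large as after fixing $k$ at $T$. I will argue this by taking an optimal assignment for $J'\setminus\{k\}$ and replacing $j'$ with $k$ in its slot $t$. This requires $t\ge r_k+1$, which is not automatic. If it fails, I will instead perform an alternating-path exchange, along the ERD chain between $j'$ and $k$, that moves jobs by one slot each. In every such move a job either stays in place or moves to a slot no later than $T$, and each of these jobs has utility at least $u_{\min}$ at its new slot.
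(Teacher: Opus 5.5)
Your route is the paper's own: prove Algorithm~\ref{alg:completion_new1} correct as a backward greedy over the non-delay slot set, choosing among the jobs that can be sent to the last slot. Your characterization of those jobs via the count on $[r_{j''},T]$ is correct, and more explicit than what the paper states.

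The gap is in the exchange step, which is the heart of the argument. Swapping $k$ and $j'$ in $\sigma^*$ is in general infeasible. Take $q,j'$ with $r=0$, $k$ with $r=1$, and $\sigma^*=\langle j',q,k\rangle$: all three jobs are scanned by the loop, but $k$ cannot occupy slot~$1$. Neither of your patches repairs this.
\begin{itemize}
\item ``Use the left-shifted assignment of $J'\setminus\{j'\}$, feasible by movability'' gives a feasible assignment, but no comparison of its utilities with $u_{\min}(\sigma^*)$.
\item The alternating-path fix keeps the wrong invariant. A displaced job $j\neq k$ landing in some slot no later than $T$ only satisfies $u_j\ge u_j(T)$, and $u_j(T)$ may be below $u_{\min}(\sigma^*)$.
\end{itemize}
What you need is that every job other than $j'$ finishes no later than it does in $\sigma^*$.

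That property follows from your Step~1.
\begin{itemize}
\item Delete $j'$ from $\sigma^*$ and repeatedly move a released job into an idle unit slot. Completion times only decrease, so every job of $J'\setminus\{j'\}$ keeps utility at least $u_{\min}(\sigma^*)$. This includes $k$, which was at $T$. The end result is a non-delay schedule.
\item For unit jobs with integer release dates, all non-delay schedules occupy the same slots. Whether a slot is busy depends only on how many released jobs are waiting, not on which one was chosen earlier.
\item The algorithm's shifted ERD schedule of $J'\setminus\{j'\}$ is non-delay with slot set $\mathcal{T}\setminus\{T\}$. This is exactly what the while-loop certifies for a scanned job.
\end{itemize}
Hence the left-shifted $\sigma^*\setminus\{j'\}$ occupies $\mathcal{T}\setminus\{T\}$. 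Adding $j'$ at $T$, where $u_{j'}(T)\ge u_k(T)\ge u_{\min}(\sigma^*)$, gives an optimal schedule with $j'$ last, and your induction then goes through. Equivalently, run EDD on $J'\setminus\{j'\}$ with due dates $C_j(\sigma^*)$.

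The same non-delay argument is what justifies Step~1. Your sentence there about placing jobs ``in ERD order into the left-justified slot positions'' does not preserve completion times as stated.
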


As in the previous case, it is easy to show that determining a system–optimal solution under unit processing times and general utility functions can be straightforwardly reformulated as a linear sum assignment problem. Hence, the following statement holds:
\begin{observation}\label{th:pj=1sum}
    The problem with release dates and unit processing times $1|p_j=1,r_j,u_j|\sum u_j$ can be solved in polynomial time.
\end{observation}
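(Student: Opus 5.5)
The plan is to reduce $1|p_j=1,r_j,u_j|\sum u_j$ to a linear sum assignment problem on a polynomial set of time slots. The key point is that with unit processing times and integer release dates, an optimal schedule can be assumed to use integer completion times from a set of $O(n)$ candidate slots.

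\textbf{Step 1: a small slot set.} Since every $u_j$ is non-increasing, any schedule can be turned into a semi-active one without decreasing utilities. In a semi-active schedule each block of consecutive jobs starts at some release date. With unit jobs, every completion time therefore has the form $r_k+t$ for some job $k$ and some $t\in\{1,\ldots,n\}$. So all completion times lie in
\[
\mathcal{T}:=\{r_k+t : k\in J,\ t=1,\ldots,n\},
\]
which has at most $n^2$ elements. This set is computable in polynomial time.

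\textbf{Step 2: the assignment instance.} Build a bipartite graph with jobs on one side and slots $\tau\in\mathcal{T}$ on the other. Job $j$ is adjacent to slot $\tau$ exactly when $\tau-1\ge r_j$, that is, when $j$ can start at $\tau-1$. Give edge $(j,\tau)$ the profit $u_j(\tau)$.

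\textbf{Step 3: correctness.} I would check that feasible schedules with completion times in $\mathcal{T}$ correspond to job-saturating matchings with the same total utility.
\begin{itemize}
\item From schedule to matching: distinct unit jobs on one machine have distinct integer completion times, and each job respects its release date. Hence a schedule gives a matching.
\item From matching to schedule: processing job $j$ in $[\tau-1,\tau]$ for its matched slot $\tau$ causes no overlaps, because the slots are distinct integers. It also respects release dates by construction.
\end{itemize}
A maximum-weight job-saturating matching therefore gives a system-optimal schedule. Such a matching can be found by the Hungarian algorithm in time polynomial in $n$ and $|\mathcal{T}|$. To make it a standard square instance, pad the job side with dummy jobs of zero profit.

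\textbf{Main obstacle.} The only delicate step is Step 1: justifying the restriction to a polynomial slot set. This needs the non-increasing property, so that shifting jobs left to remove idle time does not hurt. It also needs the observation that each block starts at a release date. Unlike the $u_{\min}$ case handled by Algorithm~\ref{alg:completion_new1}, we cannot fix the minimum-makespan slot set a priori. However, a schedule attaining minimum makespan with the left-shifted structure would also work here, since the sum objective is monotone as well. Taking the larger candidate set $\mathcal{T}$ avoids having to argue this and keeps the proof simple.
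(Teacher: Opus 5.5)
Your proof is correct and takes the paper's route: reduce $1|p_j=1,r_j,u_j|\sum u_j$ to a linear sum assignment problem between jobs and unit time slots, under the integer release dates that the paper also assumes. The only difference is the slot set. The paper uses the $n$ slots of the minimum-makespan semi-active schedule, which gives a square $n\times n$ instance but relies on the dominance of that slot set (which the paper simply takes as clear), whereas your set $\mathcal{T}$ of at most $n^2$ candidate slots follows directly from semi-activeness, at the cost of a larger padded or rectangular instance.
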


By scaling, the same reasoning also applies to the case $p_j = p > 1$ 
\emph{if $p$ is a common factor of all release dates}.
However, this does not hold for the case of constant but arbitrary processing times $p_j = p > 1$, which is discussed below.

\medskip
The special case of $1|p_j=1,r_j,u_j|u_{\min}$ where $u_j(C_j) = b_j - C_j$ for given constants $b_j \in \mathbb{Z}{+}$, $j = 1, \ldots, n$, is equivalent to the classical scheduling problem $1|p_j=1,r_j|L_{\max}$. For integer-valued release dates, this problem can be efficiently solved in $O(n \log n)$ time using the so-called preemptive EDD rule \cite[see Sect.~3.2 in][]{elementsched}. 
In the same paper, the authors further observed that the situation where all processing times $p_j$ are equal to some arbitrary positive integer $p$ is fundamentally different from the case where $p_j = 1$ for all jobs. Although one can rescale the time to make $p = 1$, this transformation would generally turn the release dates into rational numbers. This violates the underlying assumption under which it is claimed that the preemptive EDD rule solves the problem $1|p_j=1,r_j|L_{\max}$.

A similar situation applies for our problem. Indeed, if all processing times are equal, but not unitary, it may make sense to postpone already released jobs and thereby accrue ``unforced'' idle times. Thus, the optimal schedule may be a not semi-active schedule.
This aspect is illustrated in Figure~\ref{fig:example unitary} by an example with two jobs and the following data:
$r_1=0$, $r_2=p/2$, 
$u_1=3p-C_1$,  $u_2=2p-C_2$.
The minimum makespan is $C_{\max}=2p$, reached by assigning first job $1$ followed by job $2$ starting at time $p$.
For this schedule there is $u_1=2p$ and $u_2=0$.
Scheduling job $2$ first at time $r_2=p/2$ and job $1$ later, namely at time $3p/2$, leaves the machine idle from time $0$ to $p/2$ although job $1$ is already released.
It gives $u_1=u_2=p/2$ which is a better objective function value.

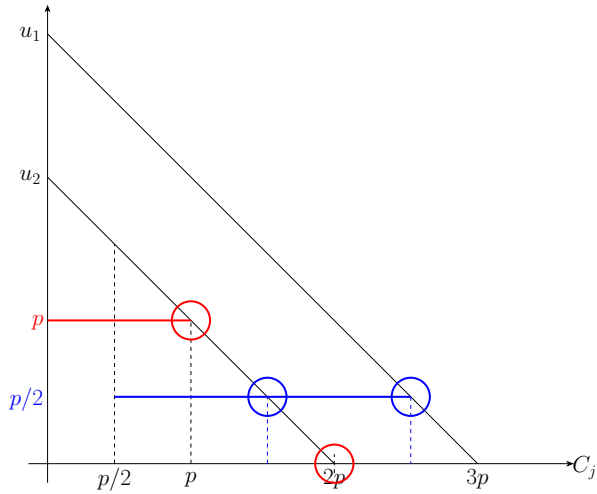
\begin{figure}[!ht]
	\centering
	\resizebox{0.5\textwidth}{!}{%
		\begin{circuitikz}
			\tikzstyle{every node}=[font=\huge]
			\node [font=\LARGE] at (0.8,15.75) {$u_1$};
			\node [font=\LARGE] at (0.8,12) {$u_2$};
			\draw [->, >=Stealth] (1.25,4) -- (1.25,16.5);
			\draw [->, >=Stealth] (0.75,4.5) -- (15.0,4.5);
			\draw [short] (1.25,15.75) -- (12.5,4.5);
			\draw [short] (1.25,12) -- (8.75,4.5);
			\node [font=\LARGE] at (15.3,4.4) {$C_j$};
			\draw [dashed] (5,4.5) -- (5,8.25);
			\draw [dashed] (8.75,4.25) -- (8.75,4.75);
			\node [font=\LARGE] at (5,4.1) {$p$};
			\node [font=\LARGE] at (8.75,4.1) {$2p$};
			\node [font=\LARGE] at (12.5,4.1) {$3p$};
			\node [font=\LARGE] at (3,4.1) {$p/2$};
			\draw [dashed] (3,4.5) -- (3,10.25);
			
			\node [color=blue,font=\LARGE] at (0.7,6.2) {$p/2$};
			
			\draw [color=blue,dashed] (7,4.5) -- (7,6.25);
			\draw [color=blue,line width=1.5pt] (3,6.25) -- (10.75,6.25);
			\draw [color=blue,dashed] (10.75,6.25) -- (10.75,4.5);
			\draw [color=blue,line width=1.5pt] (7,6.25) circle (0.5cm);
			\draw [color=blue,line width=1.5pt] (10.75,6.25) circle (0.5cm);
            			
				\node [color=red,font=\LARGE] at (1,8.2) {$p$};
					
				\draw [color=red,line width=1.5pt] (1.25,8.25) -- (5,8.25);
				\draw [color=red,line width=1.5pt] (5,8.25) circle (0.5cm);
				\draw [color=red,line width=1.5pt] (8.75,4.5) circle (0.5cm);
                
		\end{circuitikz}
	}\caption{Example to illustrate that idle times may be useful.
    Schedule $\langle 1,2 \rangle$ without idle times in red, optimal schedule $\langle 2,1 \rangle$ with idle time in blue.\label{fig:example unitary} }
\end{figure}

Therefore, the iterative greedy approach described above is no longer applicable, as it relies on knowing the total completion time of all jobs.
Nevertheless, a polynomial-time solution can still be obtained by performing a binary search over the objective function values and exploiting a result from the literature to solve the resulting subproblem. 
In particular, Carlier developed an algorithm for the problem of minimizing the total number of late jobs with release dates and equal processing times $1|p_j=p,r_j|\sum U_j$, which runs in $O(n^3\log n)$ time~\citep{Baptiste99,Carlier81}.
Accordingly, for a given target value $u^T$ within the feasible range of objective values, we can repeatedly apply Carlier's algorithm with job due dates defined as in Expression~\eqref{eq:target-duedate}, and combine this with the general binary search framework presented in Section~\ref{sec:general} to address our problem variant. In conclusion, the following result holds.

\begin{theorem}\label{th:pj=p}
    Problem $1|p_j=p,r_j,u_j| u_{\min}$ 
    can be solved in polynomial time $O(n^3\log n\log(UB-LB))$ where $UB$ and $LB$ are, respectively, upper and lower bounds on the objective values.
\end{theorem}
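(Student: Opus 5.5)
The proof reduces the optimization problem to a sequence of feasibility tests, one per target utility, and carries these out by binary search as in Section~\ref{sec:general}. For a target value $u^T$ we set the due dates $d^T_j = u^{-1}_j(u^T)$ as in Expression~\eqref{eq:target-duedate}. The first step is the equivalence: $u_F \ge u^T$ holds iff there is a feasible schedule of the jobs with release dates $r_j$, common processing time $p$, and no tardy job with respect to the $d^T_j$.

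The equivalence follows directly from monotonicity of the $u_j$. If a schedule $\sigma$ has $u_{\min}(\sigma)\ge u^T$, then $u_j(C_j(\sigma))\ge u^T$ for every $j$, so $C_j(\sigma)\le d^T_j$ by the definition of the generalized inverse in~\eqref{eq:definverse}. Conversely, suppose $C_j\le d^T_j$ for all $j$ and the supremum is attained. Since $u_j$ is non-increasing, $u_j(C_j)\ge u_j(d^T_j)\ge u^T$. Attainment of the supremum uses the continuity assumption stated in Section~\ref{sec:general}, or right-continuity as a minimum.

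The feasibility question is exactly whether the optimum of $1|p_j=p,r_j|\sum U_j$ equals $0$ for these due dates. One call to Carlier's $O(n^3\log n)$ algorithm~\citep{Carlier81,Baptiste99} answers it. That algorithm handles arbitrary integer release dates and due dates and allows inserted idle time, so the phenomenon illustrated in Figure~\ref{fig:example unitary}, where non-semi-active schedules can be optimal, causes no difficulty. One point needs checking here: for that instance we should work in the class of all feasible schedules rather than $\S$, or note that the optimal schedule found is semi-active with respect to its own order. If the $d^T_j$ are not integral, we round them down when the data are integral, which does not change feasibility.

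The remaining step is the binary search over $[LB,UB]$. With $UB=\max_j u_j(0)$ and $LB$ any achievable value, for instance $0$ after the normalization in Section~\ref{sec:problem}, it uses $O(\log(UB-LB))$ iterations, each costing one due-date computation plus one call to Carlier's algorithm. This gives $O(n^3\log n\log(UB-LB))$ in total.

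The main obstacle is termination and exactness of the search: we must certify that after $\log(UB-LB)$ steps the value found is optimal and not merely approximate. I would handle this by assuming integral data, so that relevant completion times are integers, or by searching over a discrete candidate set of utility values $u_j(C)$. Alternatively, the optimum $u_F$ equals $u_j(C_j)$ for some job whose completion time is determined by the schedule's structure, so a precision of one unit (or of the input encoding) suffices. The statement's phrasing in terms of $\log(UB-LB)$ suggests this integrality assumption is intended, and I would state it explicitly.
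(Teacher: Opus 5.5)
Your proposal is correct and follows the paper's own argument: binary search over the target $u^T$, due dates $d^T_j=u^{-1}_j(u^T)$ as in \eqref{eq:target-duedate}, and at each step one call to Carlier's $O(n^3\log n)$ algorithm for $1|p_j=p,r_j|\sum U_j$ to test whether a schedule with zero late jobs exists. Your extra care fills in details the paper leaves implicit, namely attainment of the supremum, rounding the $d^T_j$ down to integers, and exactness of the search; one correction is that attainment needs left-continuity (upper semicontinuity) of the non-increasing $u_j$, not right-continuity.
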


Note that the problem under consideration can be reformulated as the standard scheduling problem of minimizing the maximum value of a regular objective function in the presence of release dates and identical processing times. 
In particular, the above result generalizes the work of 
\citet[][Sect.~3.3]
{elementsched}, who studied the specific case of minimizing maximum lateness, i.e., $1|p_j = p,r_j|L_{\max}$. They proposed an algorithm with running time $O(n^3 \log^2 n)$, which is tailored to the $L_{\max}$ objective only. 
Their approach relies on a more sophisticated technique, namely an iterative application of the earliest due date (EDD) rule, which allows the problem to be decomposed and solved efficiently under that objective.

\medskip
As for the corresponding system-optimal solution $\sigma_{\sys}$, \citet{Baptiste2000} presents a $O(n^5)$ dynamic programming algorithm to minimize total weighted completion time for single machine scheduling problems with release dates and equal-length processing times. We may then state:
\begin{observation}\label{obs:BaptisteEqualLength}
    Problem $1|p_j=p,r_j,u_j =b_j-a_jC_j| \sum u_{j}$ can be solved in polynomial time. 
\end{observation}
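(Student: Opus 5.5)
The statement to prove is Observation~\ref{obs:BaptisteEqualLength}. The plan is a direct reduction to the weighted completion time problem solved by \citet{Baptiste2000}, in the same way that Equation~\eqref{eq:sum-utili} reduces the unconstrained case to Smith's rule.

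First, for any feasible schedule $\sigma$ with linear utilities we have
\[
u_{\glob}(\sigma)=\sum_{j\in J} b_j-\sum_{j\in J} a_j C_j(\sigma).
\]
The term $\sum_j b_j$ is a constant independent of $\sigma$. So over the same feasible set (single machine, release dates $r_j$, $p_j=p$), maximizing $\sum_j u_j$ is equivalent to minimizing $\sum_j a_j C_j$. Because $a_j\ge 0$, this is exactly an instance of $1|p_j=p,r_j|\sum w_jC_j$ with weights $w_j:=a_j$.

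One point needs care. The paper restricts attention to the set $\S$ of semi-active schedules, whereas Baptiste's algorithm optimizes over all feasible non-preemptive schedules. I would argue that this difference is harmless. The objective $\sum_j a_j C_j$ is regular, that is, non-decreasing in every completion time. Starting from any optimal schedule, we can left-shift jobs while keeping their order until the schedule becomes semi-active. This never increases any $C_j$, so the optimal values over the two sets coincide. The example in Figure~\ref{fig:example unitary} does not contradict this: there unforced idle time helped only for the max--min objective, where it came from a change of job order, and not from delaying jobs within a fixed order. The same left-shifting argument also covers the normalization that adds the constant $M$ to all utilities, since adding a constant does not change the argmax.

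Finally, I would invoke the $O(n^5)$ dynamic program of \citet{Baptiste2000} on the instance with weights $a_j$. Its output sequence, left-shifted if necessary, is a system-optimal schedule $\sigma_{\sys}$, and $u_{\sys}=\sum_j b_j-\sum_j a_jC_j(\sigma_{\sys})$. The reduction itself takes linear time, so the total running time is polynomial. I expect no real obstacle here. The only step requiring attention is the regular-objective argument that reconciles the semi-active restriction with the feasible set used by Baptiste's algorithm.
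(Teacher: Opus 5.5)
Your proposal is correct and follows the paper's route: via Equation~\eqref{eq:sum-utili} the problem becomes $1|p_j=p,r_j|\sum w_jC_j$ with $w_j=a_j\ge 0$, which the dynamic program of \citet{Baptiste2000} solves in polynomial time. Your left-shifting argument reconciling the semi-active set $\S$ with Baptiste's feasible set is a correct extra detail that the paper leaves implicit.
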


\subsection{Due dates}\label{sec:duedates}

We now turn our attention to the problem in which each job $j$ is associated also with a due date $d_j$.  
Hereafter, we use the well-known notation $U_j$ to denote whether job~$j$ is late or not, where $U_j = 1$ if job $j$ is late (i.e., $C_j > d_j$) and $U_j = 0$ otherwise.

We first assume that there exists a schedule in which all jobs can meet their due dates, that is, the EDD rule produces a schedule without late jobs.
In this case, the entire schedule contains no idle time.  
For this setting, a simple variation of algorithm~\gre\ can be applied. 
We consider all due dates as deadlines and, at each iteration corresponding to a time step $T$, we restrict attention to the unscheduled jobs with deadlines satisfying $d_j \ge T$.  
The correctness of this approach follows from the same arguments used in the proof of Theorem~\ref{th:greedy}, applied to the restricted set of feasible jobs $J$ at each time step.

\begin{theorem}\label{th:no-latejobs}
Problem  $1 | \sum U_j \le k, \, u_j = b_j - a_j C_j | u_{\min}$ , in the special case where all jobs can be completed on time (i.e., $k = 0$), is polynomially solvable.  
Moreover, an optimal schedule can be computed in $O(n^2)$ time.
\end{theorem}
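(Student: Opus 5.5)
We prove the statement by adapting the exchange argument behind Theorem~\ref{th:greedy}, treating the due dates as hard deadlines. Since $k=0$ and the EDD schedule has no late job, the jobs are all released at time $0$ and the problem has a feasible schedule without idle time. We first note that we may restrict to schedules without idle time. Removing idle time only decreases completion times, so utilities do not decrease and deadlines stay met. Hence every candidate schedule occupies $[0,P]$ with $P=\sum_j p_j$.

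The algorithm is the deadline-restricted version of \gre. Set $T:=P$. At each step, among the unscheduled jobs $j$ with $d_j\ge T$, choose one maximizing $u_j(T)$, place it last in the remaining block, and set $T:=T-p_k$.

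\textbf{Step 1 (the greedy never gets stuck).} We show by induction that the remaining set $J'$ always admits a deadline-feasible schedule on $[0,T]$ with $T=\sum_{j\in J'}p_j$. This holds initially by the EDD assumption. Suppose $J'$ has such a schedule $\sigma'$. Its last job completes at $T$, so at least one job with $d_j\ge T$ exists and the greedy choice is well defined. Let $k$ be the chosen job. Take $\sigma'$, move $k$ to the last position and shift the jobs that followed $k$ earlier by $p_k$. Every other job's completion time weakly decreases, so deadlines remain met, and $k$ completes at $T\le d_k$. The resulting schedule restricted to $J'\setminus\{k\}$ is feasible on $[0,T-p_k]$, which completes the induction.

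\textbf{Step 2 (optimality).} This exchange step is the main point, though it mirrors Lawler's argument. We show that some optimal fair schedule agrees with the greedy on the last position, and then induct on the remaining jobs. Let $\sigma^*$ be optimal, without idle time and deadline-feasible, and let $\ell$ be its last job. Then $d_\ell\ge P$, so $\ell$ was an admissible candidate. If $\ell\ne k$, modify $\sigma^*$ by moving $k$ to the end and shifting its successors earlier. The shifted jobs do not lose utility and remain on time, since utilities are non-increasing. Job $k$ now has utility $u_k(P)\ge u_\ell(P)$ by the greedy choice, and $u_\ell(P)\ge u_{\min}(\sigma^*)$. The minimum therefore does not drop, and deadlines are respected by Step 1's argument.

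\textbf{Step 3 (recursion).} After fixing the last job, the subproblem on $J\setminus\{k\}$ over $[0,P-p_k]$ is an instance of the same form. The overall objective is the minimum of $u_k(P)$ and the subproblem's minimum, so optimality of the greedy on the subproblem yields optimality overall. The same argument in fact works for arbitrary non-increasing $u_j$, linear ones included. For the running time, each of the $n$ iterations scans at most $n$ jobs, checking $d_j\ge T$ and evaluating $u_j(T)$ in constant time, which gives $O(n^2)$.
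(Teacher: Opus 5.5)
Your proposal is correct and takes the paper's route: run \gre\ with due dates treated as deadlines, restrict each choice at time $T$ to unscheduled jobs with $d_j \ge T$, and justify optimality by the exchange argument of Theorem~\ref{th:greedy} applied to this restricted candidate set. Your Step~1, which shows the restricted candidate set is never empty, makes explicit a feasibility point that the paper leaves implicit.
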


\medskip
We now consider the case in which every feasible schedule necessarily contains late jobs.  
In this situation, we may assume that all late jobs receive a constant utility (strictly smaller than the utility obtainable before the due date, and possibly equal to~$0$).  
This setting can be represented by a non-increasing (not necessarily continuous) utility function, and thus the result of Theorem~\ref{th:greedy} continues to apply.

\medskip
Next, we consider the more standard due-date setting in which all late jobs are
discarded, that is, they contribute no utility to the objective function.
To prevent an excessive ``waste of jobs'', we impose an upper bound $k$
on the number of late jobs (equivalently, a lower bound of $n - k$ on the
number of jobs that must be completed on time).  
Clearly, this requirement is meaningful only if $k \ge U^*$, where $U^*$
denotes the minimum possible number of late jobs as can be obtained using Moore's
algorithm \citep{moore1968latejobs}.
We can now apply a variation of the general approach described in Section~\ref{sec:general}.  
For a given target utility value $u^T$, we determine, for each job $j$, the latest completion time at which job $j$ still finishes on time while achieving utility at least $u^T$.
Accordingly, we define updated due dates 
$d^T_j :=\min \{d_j, u_j^{-1}(u^T)\}$ 
meaning that any job completed by time $d^T_j$ is both on time and satisfies the target utility requirement.

We then apply Moore's algorithm (running in $O(n \log n)$ time) to the instance
with modified due dates $d^T_j$, yielding $k^T$ late jobs.
If $k^T > k$, we decrease the target utility $u^T$, whereas if $k^T \le k$,
we may increase it.   
Hence, we obtain the following result for the problem variant with a bounded
number of late jobs.

\begin{theorem}\label{th:bounded-latejobs}
    Problem $1 | \sum U_j \le k, \, u_j = b_j - a_j C_j | u_{\min}$  i.e., determining the minimum utility attained by any on-time job under the
    constraint that at most $k$ jobs may be tardy, can be solved in
    $O(n\log n\log(UB-LB))$ time, where $UB$ and $LB$ are, respectively, upper and lower bounds on the objective value.
\end{theorem}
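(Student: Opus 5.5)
The plan is to reduce the optimization to a monotone sequence of feasibility tests and binary-search over the target value, exactly as in Section~\ref{sec:general}. For a target $u^T$, set $d^T_j := \min\{d_j, u_j^{-1}(u^T)\}$. With linear utilities $u_j^{-1}(u^T) = (b_j - u^T)/a_j$ when $a_j>0$, and $+\infty$ when $a_j=0$ and $b_j\ge u^T$. The key claim is the following equivalence: there is a schedule with at most $k$ late jobs (with respect to the original $d_j$) in which every on-time job has utility at least $u^T$ if and only if Moore's algorithm on the instance with due dates $d^T_j$ returns $k^T \le k$ late jobs.

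To prove this equivalence, I would treat the two directions separately.

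\emph{($\Rightarrow$)} Take a schedule $\sigma$ of the required type and let $E$ be its set of on-time jobs. Each $j\in E$ satisfies $C_j\le d_j$ and $u_j(C_j)\ge u^T$, so $C_j \le d^T_j$. Hence $E$ is an on-time set for the modified instance. Because $|E|\ge n-k$ and Moore's algorithm maximizes the number of on-time jobs, it returns $k^T\le k$.

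\emph{($\Leftarrow$)} Take Moore's schedule: its on-time jobs, say at least $n-k$ of them, are sequenced first in EDD order of $d^T_j$, and the late jobs are appended at the end. Each on-time job finishes by $d^T_j\le d_j$, so it is on time in the original sense and reaches utility at least $u^T$. The late jobs are discarded, so they impose no utility constraint. Hence the objective is at least $u^T$.

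Monotonicity follows because $d^T_j$ is non-increasing in $u^T$, so $k^T$ is non-decreasing in $u^T$. This makes binary search over $u^T\in[LB,UB]$ valid, with $UB=\max_j b_j$ and $LB$ for instance $0$ after the normalization $u_j\ge 0$. Each step costs $O(n)$ to compute the due dates plus $O(n\log n)$ for Moore's algorithm, which gives $O(n\log n\log(UB-LB))$ in total. If integrality of the optimum is needed to terminate exactly, note that $u_F$ equals some $b_j - a_jC_j$ with integer $C_j$, so the search can stop once the interval is below the resolution determined by the data.

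The main obstacle I expect is pinning down the objective precisely: whether the late jobs, which are discarded, are excluded from the minimum. Under the paper's convention in which they contribute nothing, the equivalence above goes through. A second point to check is the case $a_j=0$, where the inverse utility is unbounded or empty. I would handle that by setting $d^T_j=d_j$ when $b_j\ge u^T$, and otherwise forcing $j$ to be late, for example by setting $d^T_j=-\infty$.
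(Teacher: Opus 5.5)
Your proposal is correct and is essentially the paper's own argument. It uses a binary search on $u^T$ with modified due dates $d^T_j=\min\{d_j,u_j^{-1}(u^T)\}$, and Moore's algorithm serves as the $O(n\log n)$ feasibility test ($k^T\le k$); your two-direction equivalence, your monotonicity argument and your handling of $a_j=0$ spell out what the paper leaves implicit.

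The one point worth making explicit concerns your ($\Leftarrow$) step, which is exactly like the paper's. It treats every job that Moore's algorithm marks late with respect to $d^T_j$ as discarded, even if that job would finish by its original $d_j$ and so could pull the minimum below $u^T$. In other words, it reads ``late'' as ``rejected'', which is the convention the paper tacitly adopts.
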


Following the previous theorem, it is natural to ask whether an analogous polynomial-time result can be obtained when the objective is to maximize the  total utility. 
Somewhat surprisingly, the situation is now fundamentally different from what happens with other versions of the problem. In fact,  while the
min-utility problem with at most $k$ tardy jobs is solvable in polynomial time,
the corresponding problem of seeking a system optimum turns out to be
strongly NP-hard.

This problem is equivalent to minimizing the total weighted completion time subject to deadline constraints.  
\citet{ChenPottsWoeginger1998} state that minimizing the weighted sum of completion times with deadlines is strongly NP-hard, and they attribute this to \citet{bib:lenstra1977}.  
However, the paper by Lenstra et al.\ establishes only weak NP-hardness  via a reduction from the knapsack problem. 
For the setting considered here, we provide a direct proof showing that the problem is indeed strongly NP-hard.
 
\def\M{M}
\def\k{\kappa}
\def\MM{L}

\begin{theorem}\label{th:bounded-latejobssum}
Problem $1 | \sum U_j \le k, \, u_j = b_j - a_j C_j | \sum u_j$ that is, maximizing the total utility of on-time jobs under the constraint
    that at most $k$ jobs may be tardy, is strongly NP-hard.
\end{theorem}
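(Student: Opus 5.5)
We reduce from \textsc{3-Partition}, which is strongly NP-hard, and use only $k=0$. With $k=0$ every job must be on time, so the due dates act as deadlines. By~\eqref{eq:sum-utili}, maximizing $\sum_j u_j$ is then the same as minimizing $\sum_j a_j C_j$ under deadlines. All numbers in the construction are polynomial in the unary size of the \textsc{3-Partition} instance, which gives strong NP-hardness.

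\emph{Construction.} Let the instance have integers $x_1,\ldots,x_{3m}$ with $\sum_j x_j = mB$ and $B/4<x_j<B/2$. We build the following jobs.
\begin{itemize}[noitemsep,nolistsep]
\item \emph{Item jobs} $j=1,\ldots,3m$ with $p_j=a_j=x_j$ and the loose due date $d_j = m(B+1)$.
\item \emph{Separator jobs} $s_1,\ldots,s_{m-1}$ with $p_{s_i}=1$, $a_{s_i}=0$ and $d_{s_i}=i(B+1)$.
\end{itemize}
The intercepts $b_j$ are set to a common large constant so that all utilities are nonnegative; it plays no further role. The makespan is $m(B+1)-1$, so with $k=0$ the only binding constraints are the separator deadlines. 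The separators have zero weight, so the objective is $\sum_{\text{items}} x_jC_j$.

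\emph{Rewriting the objective.} All item jobs have ratio $a_j/p_j=1$. Hence, by the interchange argument behind Smith's rule, permuting items among themselves does not change the cost. It also suffices to consider schedules without idle time. The separators are identical except for their deadlines, so an interchange argument lets us assume they appear in EDD order $s_1,\ldots,s_{m-1}$. Each separator placed before an item delays that item by exactly one time unit. This gives
\[
\sum_{\text{items}} x_jC_j \;=\; K \;+\; \sum_{i=1}^{m-1} W_i ,
\]
where $K$ is the (order-independent) cost of the items alone scheduled without separators, and $W_i$ is the total item processing time scheduled after $s_i$.

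\emph{Lower bound and the equivalence.} The deadline of $s_i$ gives $C_{s_i}\le i(B+1)$. Exactly $i-1$ separators precede $s_i$, so the item work before $s_i$ is at most $i(B+1)-1-(i-1)=iB$. Therefore $W_i\ge (m-i)B$ for every $i$, and every feasible schedule has cost at least $\Theta := K+\sum_{i}(m-i)B$.

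Equality holds if and only if, for every $i$, the item work before $s_i$ is exactly $iB$. That means the items between consecutive separators (and before $s_1$, and after $s_{m-1}$) each sum to exactly $B$. Since $B/4<x_j<B/2$, each such block contains exactly three items, so this is a valid 3-partition. Conversely, a 3-partition yields such a schedule with every separator completing exactly at its deadline.

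Thus the instance has a schedule with no late job and total utility at least $\sum_j b_j-\Theta$ if and only if the \textsc{3-Partition} instance is a YES instance.

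The main care point is the lower bound $W_i\ge (m-i)B$. It needs two reductions to be argued cleanly: excluding idle time, and putting the separators in EDD order. I would settle both first by exchange arguments. These are harmless because separators have zero weight and items all share the same weight-to-processing ratio.
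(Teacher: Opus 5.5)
Your proposal is correct and follows essentially the paper's route. Both reduce from \textsc{3-Partition} using item jobs with $a_j=p_j=w_j$, so the item-only cost is order-independent, together with zero-utility separator jobs with staggered deadlines; the cost then equals a constant plus a multiple of the item work scheduled after each separator, and it meets its lower bound exactly when the instance has a 3-partition. The differences are minor: you fix $k=0$ and use unit-length separators, whereas the paper keeps a general $k$ by adding $k$ long jobs that are forced to be late and uses separators of length $M$; and your prefix bound $W_i\ge (m-i)B$ with its equality case makes rigorous the paper's rather informal argument for \textsc{no}-instances.
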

\begin{proof}
In order to prove the thesis, we reduce {\sc 3-Partition} to our problem. So, given an instance $I_{3P}$ of {\sc 3-Partition}, with 3$n$ integers, $w_1,\ldots,w_{3n}$, such that $\sum_{j=1}^{3n} w_j = W = n T$ and $T/4 < w_j < T/2$ for $j=1,\ldots,3n$, we generate an instance $I$ of problem {$1|\sum U_j \le k, u_j = b_j - a_j C_j | \sum u_j$} as follows.

In instance $I$ there are $4n-1 + {k}$ jobs and the maximum number of allowed late jobs is $k\ge 0$.
The idea, in the reduction, is that the first $3n$ jobs (``item-jobs'') correspond to the integers of $I_{3P}$, the next $n-1$ jobs are used for the reduction as ``separators'' of the schedule, and the last $k$ ``long'' jobs are those who are late in any feasible schedule.
So, in instance $I$ we have
\[ p_{j} = \left\{ \begin{array}{ll}
    w_j & \mbox{for } j=1, \ldots, 3n  \\
    \M & \mbox{for } j=3n + 1, \ldots, 4n-1 \\
    \MM & \mbox{for } j=4n, \ldots, 4n-1+k
\end{array}
\right .\]
where $\M > 0$ and $\MM \ge nT + (n-1)\M =: E$ are given integers. 
Moreover, 
\[ d_{j} = \left\{ \begin{array}{ll}
    E & \mbox{for } j=1, \ldots, 3n  \\
    (j-3n)(T+\M) & \mbox{for } j=3n + 1, \ldots, 4n-1 \\
    \MM & \mbox{for } j=4n, \ldots, 4n-1+k.
\end{array}
\right .\]
Note that at most one long job can meet its deadline; however, if this happens it would be the only on-time job in the schedule, since $L \ge E$. 
This would make the solution infeasible, since more than $k$ jobs would then be late. 
As a consequence, in any feasible solution the $k$  long jobs will be the late ones, while all the other $4n -1$ jobs (item and separation jobs) must be on time. 

The jobs' utility functions for instance $I$ are defined as
\[ u_{j}(C_j) = \left\{ \begin{array}{ll}
   H - w_j C_j & \mbox{for } j=1,2, \ldots, 3n  \\
0 & \mbox{for } j=3n + 1, \ldots, 4n-1+k
\end{array}
\right .\]
with $H$ large enough to guarantee nonnegative utilities of the item-jobs over all the 
schedule's span (e.g., $H \ge \max_j\{w_j\}\cdot E$).

Observe that any feasible schedule $\sigma$ for $I$ can be seen as a sequence of $n$ (possibly empty) blocks $B_i \subseteq \{1,\ldots,3n\}$, $i=1,\ldots, n$, of item-jobs divided by the $n-1$ separators $3n+1,\ldots,4n-1$, followed by the $k$ late long jobs:
$
\langle B_1, 3n+1, B_2, 3n+2, \ldots, 4n-1, B_n, 4n, \ldots, 4n - 1 +k\rangle.
$
Recalling that the contribution of the late jobs and that of the separators to the total utility is null, the objective function value for schedule $\sigma$ can be expressed as $\sum_{j=1}^{3n} u_j(\sigma) = 3nH - \sum_{j=1}^{3n} w_j C_j$. Hence, an optimal solution for $I$ is obtained minimizing $\sum_{j=1}^{3n} w_j C_j$. 

Let $W_i = \sum_{j \in B_i} w_j$ denote the total processing time of block $i$, and let $s_i$ denote its starting time. 
We write $i \prec j$ (respectively, $i \preccurlyeq j$) to indicate that job $i$ precedes (respectively, precedes or is equal to) job $j$ in the schedule $\sigma$. Then:

{\small
\begin{align}\label{eq:obj-proof}
    &\sum_{j=1}^{3n} w_j C_j = \sum_{i=1}^n \sum_{j\in B_i} w_j \big(s_i + \sum_{\ncoppa{\ell\in B_i}{\ell\preccurlyeq j}} w_{\ell}\big) =\sum_{i=1}^n \bigg\{\sum_{j\in B_i} w_j \bigg((i-1)M +  \sum_{\ell\in \cup_{r=1}^{i-1} B_{r}} w_{\ell}\bigg)  + w_j \sum_{\ncoppa{\ell\in B_i}{\ell\preccurlyeq j}} w_{\ell}\bigg\} =
    \\
    &= \sum_{i=1}^n \bigg\{\bigg(M(i - 1)\sum_{j\in B_i} w_j \bigg) +  \sum_{j\in B_i} \bigg( w_j\!\! \sum_{\ncoppa{\ell\in \{1,\ldots,3n\}}{\ell\preccurlyeq j}} w_{\ell}\bigg)\bigg\} = \\
    &= M\ \sum_{i}^n (i-1)W_i + \sum_{j=1}^{3n} w_j^2 + \sum_{j=1}^{3n}\sum_{\ncoppa{\ell=1}{\ell < j}}^{3n} w_j\, w_{\ell} = M\ \sum_{i=1}^n (i-1)W_i + \frac 1 2 \bigg(\sum_{j=1}^{3n} w_j^2 + W^2\bigg).\label{eq:obj-last}
\end{align}
}
Note that the double sum $\sum_j\sum_{\ell} w_j w_{\ell}$ in~\eqref{eq:obj-last} is equal to the sum of the ${3n}\choose 2$ products $w_i w_j$ with $i$ and $j$ distinct jobs.
Moreover, observe that the last term $\tfrac{1}{2}(\cdots)$ in the same equation 
is independent of the composition of the blocks and depends only on the specific values of $w_j$. 
Therefore, for a given instance $I$ with fixed $w_j$ values, the decision problem reduces to determining the composition of the blocks $B_i$ so as to minimize the term $\sum_{i=1}^n (i-1) W_i$. 

Due to the increasing coefficients in the summation $\sum_{i=1}^n (i-1) W_i$, an optimal solution packs as much units of processing time as possible in the earlier blocks, while ensuring that the separators are not late.
\begin{figure}[ht]
    \centering
    \includegraphics[width=0.8\linewidth]{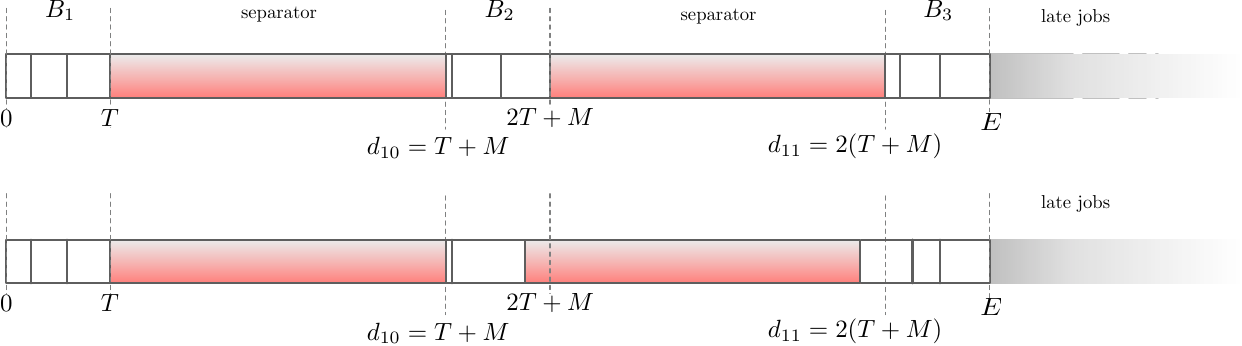}
    \caption{Optimal schedules corresponding to a {\sc yes}-instance (above) and
    a {\sc no}-instance of {\sc 3-Partition} with $n=3$.}
    \label{fig:thm11fig}
\end{figure}

If $I_{3P}$ is a {\sc yes}-instance, then we are in an ideal configuration (illustrated by the upper chart in Figure~\ref{fig:thm11fig}) in which we may set $W_i = T = \frac W n$ for all $i=1,\ldots,n$.
Thus, each block is assigned exactly the length necessary to force every
separator job in the sequence to complete exactly at its designated deadline.
In this case, by \eqref{eq:obj-last}, the value $u_{\mathrm{YES}}$ of an optimal
solution $\sigma_{\sys}$ to instance $I$, under the assumption that $I_{3P}$ is
a \textsc{Yes}-instance, is given by:
\begin{eqnarray}
    u_{\mbox{\scriptsize YES}} &=& 3nH - \tfrac{1}{2}\!\left(\sum_{j=1}^{3n} w_j^2 + W^2\right) - M \frac W n \sum_{i=1}^n (i-1) \nonumber\\
    &=&3nH - \tfrac{1}{2}\!\left(\sum_{j=1}^{3n} w_j^2 + W^2\right) - \frac 1  2 (MW(n-1)).
    \label{eq:obj-yes-n}
\end{eqnarray}
If $I_{3P}$ is a \textsc{no}-instance, then there must exist a block $i \in \{1,\ldots, n\}$ such that the sequence of blocks consists of $W_{\ell} = T$, for $\ell = 1,\ldots,i-1$, and $W_i < T$ (see the lower chart in Figure~\ref{fig:thm11fig} for $i=2$).
This means that at least one unit of processing time must be shifted from block $i$ to a later block, which is associated with a strictly larger coefficient in $\sum_{j=1}^{3n} w_j C_j$.
Because of the deadlines assigned to the separator jobs, any increase in $\sum_{j=1}^{3n} w_j C_j$ cannot be compensated in later blocks: the amount by which a block can exceed $T$ is limited  by the slack created when earlier blocks
finish before $T$.
Thus, the optimal utility $u_{\sys}$ decreases by at least $M$ compared to $u_{\mbox{\scriptsize YES}}$ in~\eqref{eq:obj-yes-n}.

In conclusion, $I_{3P}$ is a \textsc{yes}-instance if and only if the optimal utility value $u_{\sys}$ of the corresponding instance $I$ of
$1 | \sum U_j \le k | \sum u_j$
is not larger than $u_{\mbox{\scriptsize YES}}$.
\end{proof}

\section{Adjustable linear utility functions}
\label{sec:variable}

The idea of modifying or adjusting problem parameters in order to improve
system performance has been extensively studied in the scheduling and
optimization literature under the general framework of \emph{controllable
processing times} or, more broadly, \emph{controllable parameters}.
In these models, one can alter certain job characteristics, typically at some cost or within a given budget, in order to achieve a more desirable schedule or objective value \citep{Controllable_LearningEffectBiskup,Controllable_DueDateAssign,Controllable-group,Controllable-position,Controllable-TimeDeptSched,Controllable-Total,ControllablePT_Shabtay}.

Motivated by this line of research, we study how controlled adjustments to the utility functions of individual jobs can improve the overall performance criterion.  Throughout this section, we restrict attention to linear utility functions of the form $u_j = b_j - a_j C_j$, with $a_j \ge 0$ for every job $j$ which represent decreasing valuations as completion times increase.
We first examine the setting in which a fixed global bound is imposed on the total amount of permissible modification.  We then turn to a complementary framework in which utilities may be adjusted subject to a fixed ``area'' constraint, reflecting situations where improvements must be distributed in a balanced manner among the agents.

Across the next two sections, we consider two natural types of modifications:
increasing the intercepts $b_j$, and decreasing the negative slopes $a_j$.
Both operations enhance the utilities of on-time jobs, and our goal is to raise
the maximum minimum utility achieved across all agents.  Since such
modifications cannot be made arbitrarily large, we assume a global budget or
capacity constraint that limits the total amount of allowed change.
Note that the natural analogy of these problems with fairness replaced by efficiency can be solved in a rather straightforward way.

\subsection{Modifications of the intercepts $b_j$}
\label{sec:bound-modb}

{\bf Problem $1 | u_j = {\b}_j - a_j C_j | u_{\min}$}.
At first, we consider increasing each $b_j$ by some $\db_j$ such that the total increase is bounded by the given budget bound $B$, i.e., $\sum_j \db_j \leq B$ with $\db_j \geq 0$.

There is an iterative algorithm to apply this increase (with the technical assumption that $a_j>0$) as follows.
Starting with $\db_j=0$ for $j=1,\ldots,n$, we define $J$ as the set of all jobs  attaining the lowest utility value in the current schedule.
In the beginning, we will usually have $|J|=1$, say $J=\{j'\}$. (It is trivial to extend the procedure in the case $|J|>1$.)
We start by increasing $\db_{j'}$ until the increased utility of $j'$ matches the second largest utility value of some job $j''$, or the increase $\db_{j'}=B$, in which case we stop. 
Otherwise, insert $j''$ into $J$.
We proceed by increasing all $\db_{\ell}$ for $\ell \in J$ uniformly until the increased utility reaches the lowest utility of any job not in $J$, which is then added to $J$, or until $\sum_{\ell\in J} \db_\ell =B$.
The latter case would end the procedure, in the former case we iterate.

Note that this modification process does not change the sequence of jobs as given initially by \gre.
To see that consider any job $j'$ for which $\db_{j'}$ is increased.
For any job $i \not\in J$ with $C_{i} < C_{j'}$ the algorithm did not schedule $i$ at time $C_{j'}$ since $u_{i}(C_{j'}) \leq  u_{j'}(C_{j'})$.
Clearly, increasing $\db_{j'}$ will only strengthen this condition.
For any job $k \not\in J$ with $C_{k} > C_{j'}$ the increase of $\db_{j'}$ would stop (at the latest) when $u_{j'}(C_{j'}) =  u_{k}(C_{k})$ is reached.
Since utilities are decreasing functions also after this increase we still have $u_{k}(C_{k}) \geq  u_{j'}(C_{k})$ and the decision of \gre\ remains unchanged.\footnote{In the special case $a_{j'}=0$ it would be beneficial to change the ordering of $j'$ and $k$.}
Comparing job $j'$ with any job in $J$ it is clear that the utilities of both jobs are increased uniformly and thus their differences in utility remain unchanged for any completion time.

This approach works not only for linear utility functions but also for arbitrary utilities, where the increase $\db_{j}$ is added to the given utility $u_j(C)$.
The corresponding procedure will be utilized in the proof of Theorem~\ref{th:NA3-poly}.

While the above procedure can be easily performed in $O(n^2)$ time, we can also do better.
Given a sequence of jobs, the problem actually asks for a \emph{break} job $j_b$, such that for all jobs $j$ with $u_j \leq u_{j_b}$ the intercepts $b_j$ are increased to reach a common utility value at least $u_{j_b}$, while all jobs $j$ with $u_j > u_{j_b}$ remain unchanged.
This task can be solved in $O(n)$ by a variant of the linear time median algorithm (proceed in the same way as finding the split item for the knapsack problem in linear time, \cite[Sect.~3.1]{KePfPi04}

\medskip
It should be noted that if we allow also $\db_j < 0$ the problem will always end up with a solution where all utility values are equal. 
Only the level of this joint utility value depends on $B$.
Assuming otherwise, we could reduce the value $\db_j$ for any job $j$ with utility larger than the current minimal utility and use the resulting excess in the budget constraint to increase the $\db_\ell$ values for all $\ell \in J$, thus improving the minimum utility.
Also in this case the sequence of the jobs is not affected by reducing some $b_j$ value. 
This is trivial for jobs completing after job $j$ and also clear for jobs completing earlier than $j$ since $j$ is chosen as the job with highest utility.
If its decrease would reach a tie with jobs scheduled earlier, these jobs have a higher utility at their current position due to the nonincreasing utility functions and thus the algorithm would move stopped decreasing $j$ and proceeded with those jobs instead.

\subsection{Modifications of the slopes $a_j$}
\label{sec:bound-moda}

{\bf Problem $1 | u_j = {b}_j - {\a}_j C_j | u_{\min}$}.
We apply the same concept to a decrease of the slopes $a_j$ by considering values $\da_j \geq 0$ with $\sum_j \da_j \leq A$ and modified utilities given by $u_j = b_j - \a_j C_j = b_j - (a_j-\da_j) C_j$.
To preserve the property of nonincreasing utilities we will impose $\da_j \leq a_j$.

We can give an iterative algorithm quite similar to the above case where we chose the $\db_j$ values.
As before, we start with the job $j'$ having the lowest utility value in the schedule given by \gre\ and increase $\da_{j'}$ until the increased utility of $j'$ reaches some $j''$.
The same procedure as above applies although the increase for the $\da_{\ell}$ values for $\ell \in J$ is not uniform.
To reach any new common lowest utility value $\bar u$, one has to set $\da_\ell := a_\ell - \frac{b_\ell - \bar u}{C_\ell}$ for each $\ell \in J$.
Note that one can stop the procedure as soon as some job $\ell \in J$ reaches $\da_\ell = a_\ell$, since at this point the utility of job $\ell$ cannot be improved anymore.

The above arguments, that the sequence of jobs is not affected by these modifications, still hold.
For jobs $i$ this is trivial, for $k$ the above argument holds for $b_{j'} \geq u_{k}(C_{k})$, 
while for
$b_{j'} < u_{k}(C_{k})$,
it is clear that job $j'$ can never reach the utility of $k$ and thus $k$ will never be included in $J$.

\medskip
Similar to the case of intercepts, we can also consider the extension of the problem to the case where $\da_j<0$ is allowed.
As above, we would expect to reach a solution where all utility values are equal. 
However, this situation cannot be reached in all cases, since $b_j$ is an upper bound for $u_j(C)$ for every $j$ reached for $\da_j = a_j$.
%
The idea is to first increase all $a_j$ until all utilities are equal to $u_F$ and increase the budget $B$ by the accrued increases of the slopes. 
Then all slopes are simultaneously decreased and thereby the common utility value for all jobs are increased (consuming different $\da_j$ for every job $j$), until the budget runs out or one job reaches $\da_j = a_j$.

\subsection{Modifications of agent utilities with fixed area}
\label{sec:fixed-area}

If we allow both the intercepts $b_j$ and the slopes $a_j$ to vary, it makes sense to impose additional restrictions on the shape of the linear utility function. While we enforce that the modified utility $\tilde u_j$ remains nonnegative for all completion times up to $P$, we also impose an upper bound on the expected utility for agent $j$.

As mentioned in Section~\ref{sec:intro-var-area}, the (varied) area $\Ae_j = P(\b_j - \a_j (P / 2))$  (which forms a trapezoid for every choice of a linear function)  can be interpreted as a measure of expected utility return for agent $j$, under the assumption that the unknown variable $C_j$ is uniformly distributed over the interval $[0, P]$. Note also that the slope $\a_j$ can be viewed as a measure of the risk attitude of agent $j$. Smaller (resp.\ larger) values of $\a_j$ indicate an agent which is more risk-averse (resp.\ risk-tolerant).

In our setting, we enforce that $\Ae_j$ cannot exceed the original expected utility return $A_j$. 
Since every agent will fully exploit the given area $A_j$, i.e., $\Ae_j := A_j$  we have that 
$\b_j = \frac {A_j} P + \frac{\a_jP} 2$ while $\a_j\in[0,2\frac {A_j} {P^2}]$ to keep $u_j \ge 0$. 
It is easy to see that every line of a utility function must pass through the midpoint $(P/2, A_j/P)$.
Looking at each job separately, it is clear that for a job $j$ with completion time $C_j\geq P/2$ it would be the best choice to set $\a_j=0$ and $\b_j=A_j/P$.
On the other hand, for $C_j\leq P/2$ a triangular shape maximizes the utility, i.e.\ setting $\b_j=2A_j/P$ as large as possible and $\a_j=2A_j/P^2$, such that $\u_j(P)=0$.
Summarizing, the linear utility function $\u_j(C)$ chosen for a job $j$ in a schedule $\sigma$ can be expressed as
\begin{align}\label{eq:utilarea}   
\u_j(C_j(\sigma)) = \left\{\begin{array}{ll}
     \frac{2A_j}{P}(1 - \frac{C_j(\sigma)}{P})& \mbox{if } C_j(\sigma) \le \frac P 2;\\[\smallskipamount]
     \frac{A_j}{P} &  \mbox{if } C_j(\sigma) > \frac P 2\,.
\end{array}\right.
\end{align}

\begin{figure}[!hp]
	\centering
	\resizebox{0.4\textwidth}{!}{%
		\begin{circuitikz}
			\tikzstyle{every node}=[font=\LARGE]
			\draw [->, >=Stealth] (1.25,4) -- (1.25,17);
			\draw [->, >=Stealth] (0.75,4.5) -- (17.5,4.5);
			\draw [short] (1.25,14.5) -- (16.25,4.5);
			\draw [dashed] (16.25,4.25) -- (16.25,10.75);
			\draw [dashed] (8.75,4.25) -- (8.75,10.75);
			\draw [short] (8.75,9.5) -- (16.25,9.5);
			\draw [dashed] (1.25,9.5) -- (8.75,9.5);
			
			\node [font=\LARGE] at (16.25,3.75) {$P$};
			\node [font=\LARGE] at (8.75,3.75) {$P/2$};
			\node [font=\LARGE] at (0.35,9.5) {$A_j/P$};
			\node [font=\LARGE] at (0.25,14.5) {$2A_j/P$};
			\node [font=\LARGE] at (17.9,4.4) {$C_j$};
			\node [font=\LARGE] at (1.25,17.3) {$\u_j$};
			\draw [color=red, line width=1.5pt] (1.25,14.6) -- (8.7,9.65);
			\draw [color=red, line width=1.5pt] (8.7,9.6) -- (16.2,9.6);	
		\end{circuitikz}
	}\caption{Function $\u_j(C_j(\sigma))$ as defined in (\ref{eq:utilarea}).\label{fig:utilarea}}
\end{figure}
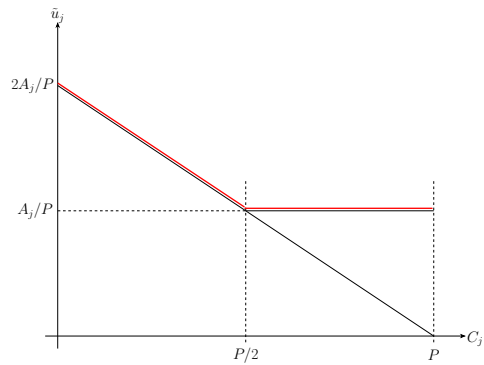
Therefore, we can represent each utility function by the maximum of the two cases, constant or triangular, as depicted in Figure~\ref{fig:utilarea}.
Thus, for problem $1 | u_j = \b_j - \a_j C_j, \Ae_j =A_j | u_{\min}$  the agents' utility functions are non-increasing with completion times and we immediately get the optimal solution by applying \gre.
In fact, this means that jobs are scheduled in the order of non-decreasing areas $A_j$.

\subsection{Modification for a single job}\label{sec:modi-single}

Given an instance of $1 | u_j = b_j - a_j C_j | u_{\min}$ and an optimal solution computed by \gre\ we consider the situation where a single job/agent, say $\jstar$, has the possibility of modifying its utility function, while all other jobs remain unchanged.
For every change of $u_{\jstar}$, the solution of the resulting problem is again computed by \gre.

\medskip
We will first discuss the case where the intercept $b_{\jstar}$ of job $\jstar$ can be increased by $\db_{\jstar} \geq 0$.
Clearly, a marginal increase $\db_{\jstar}$ will directly increase $u_{\jstar}$ by the same amount $\db_{\jstar}$.
However, at some point the increase of $\db_{\jstar}$ will imply a change in the decision of \gre\ for some job $k$ scheduled later than $\jstar$ in the original schedule. 
This will happen for the minimal value of $\db_{\jstar}$ such that there exists a job $k$ with $C_{k} > C_{\jstar}$ and $b_{\jstar} + \db_{\jstar} - a_{\jstar}C_{k} = u_{k}(C_{k})$.
Clearly, $k$ can be any of the jobs with $C_{k} > C_{\jstar}$ and not necessarily the successor of $\jstar$.

At this point, job $\jstar$ assumes the position of $k$ in the schedule which sets its completion time equal to $C_{k}$ and causes a sudden decrease of utility by $a_{\jstar}(C_{k}-C_{\jstar})$.
Also the jobs previously scheduled between $\jstar$ and $k$ will be reconsidered by \gre\ and possibly scheduled differently, but that will not affect $\jstar$.

Proceeding in this way, we can determine the \emph{modified utility function} $\tilde u_{\jstar}(\db_{\jstar})$ giving the utility of job $\jstar$ with intercept $b_{\jstar} + \db_{\jstar}$ as 
it results from \gre\footnote{Be aware that $\tilde u_{j}(\db_{j})$ expresses the utility as a function of the intercept change, while $u_j(C_j)$ gave the utility of a job depending on its completion time.}.
It is a sawtooth function with slope $1$ and discontinuous points whenever $\jstar$ reaches a new completion time, having a local maximum at this point, see Figure~\ref{fig:modify-intercept}.
At some point, $\jstar$ is relegated to the last position in the schedule with $C_{\jstar}=P$ and thus $\tilde u_{\jstar}(\db_{\jstar})$ increases infinitely with slope $1$.

\begin{figure}[!ht]
	\centering
	\resizebox{0.55\textwidth}{!}{%
		\begin{circuitikz}
			\tikzstyle{every node}=[font=\LARGE]
			\draw [->, >=Stealth] (1.25,2.5) -- (1.25,17.25);
			\draw [->, >=Stealth] (0,3.25) -- (25,3.25);
			\draw [short] (1.25,9.5) -- (3.75,12);
			\draw [short] (3.75,7) -- (7.5,10.75);
			\draw [short] (7.5,8.25) -- (12.5,13.25);
			\draw [short] (12.5,10.75) -- (16.25,14.5);
			\draw [short] (16.25,9.5) -- (18.75,12);
			\draw [short] (18.75,10.75) -- (23.75,15.75);
			\draw [dashed] (3.75,7) -- (3.75,12);
			\draw [dashed] (7.5,8.25) -- (7.5,10.75);
			\draw [dashed] (12.5,10.75) -- (12.5,13.25);
			\draw [dashed] (16.25,9.5) -- (16.25,14.5);
			\draw [dashed] (18.75,10.75) -- (18.75,12);
			\node [font=\LARGE] at (25.75,3.25) {\Huge $\db_{\jstar}$};
			\node [font=\LARGE] at (1.25,17.75) {\Huge $\tilde u_{\jstar}$};
		\end{circuitikz}
	}
\caption{Increased utility $\tilde u_{\jstar}$ as a function of the increase of the intercept $\db_{\jstar}$.}
\label{fig:modify-intercept}
\end{figure}

We are interested in an increasing sequence of local maxima. 
Clearly, an increase of $\db_{\jstar}$ only makes sense for job $\jstar$ if it leads to a new local maximum with higher value than all previously determined local maxima.
It may also be natural to consider an upper bound $B$ and ask for the best utility achievable for $\db_{\jstar} \leq B$.
This is either the last maximum of this increasing sequence obtained without exhausting $B$, or it arises for $\tilde u_{\jstar}(B)$.

A similar approach works for decreasing the negative slope $- a_{\jstar}$ by $\da_{\jstar}$.
In this case, the corresponding modified utility function remains a sawtooth function with discontinuous drops whenever the completion time of $\jstar$ changes.
Since $\tilde u_{\jstar}(\da_{\jstar})=b_{\jstar} - a_{\jstar}C_{\jstar} + \da_{\jstar}C_{\jstar}$,
the slope of this function is $C_{\jstar}$ which increases with every new position of $\jstar$ in the schedule as depicted in Figure~\ref{fig:modify-slope}.
Clearly, we impose the upper bound $\da_{\jstar} \leq a_{\jstar}$ which also yields the maximum of the modified utility function.

	\begin{figure}[!ht]
	\centering
	\resizebox{0.7\textwidth}{!}{%
		\begin{circuitikz}
			\tikzstyle{every node}=[font=\LARGE]
			\draw [->, >=Stealth] (-11.25,-1) -- (-11.25,15.25);
			\draw [->, >=Stealth] (-12,0.25) -- (22,0.25);			
			\node [font=\LARGE] at (22.75,0.25) {\Huge $\da_{\jstar}$};
			\node [font=\LARGE] at (-11.0,15.75) {\Huge $\tilde u_{\jstar}$};			
			
			\draw [short] (-11.25,2) -- (-5,3.25);
			\draw [short] (-5,1.25) -- (2.5,3.75);
			\draw [short] (2.5,0.75) -- (11.25,5.75);
			\draw [short] (11.25,3.25) -- (16.25,8.5);
			\draw [short] (16.25,5.75) -- (21.25,13.25);
			\draw [dashed] (-5,1.25) -- (-5,3.25);
			\draw [dashed] (2.5,0.75) -- (2.5,3.75);
			\draw [dashed] (11.25,3.25) -- (11.25,5.75);
			\draw [dashed] (16.25,5.75) -- (16.25,8.5);
		\end{circuitikz}
	}
\caption{Increased utility $\tilde u_{\jstar}$ as a function of the increase of the slope $\da_{\jstar}$.}
\label{fig:modify-slope}
\end{figure}

A third version of modifying the utility function considers the increase of the area of the utility function $A_{\jstar}$ by $\dA_{\jstar}$, assuming that all other agents $j$ remain at their areas $A_j$.
Recall that \gre\ results in a simple sorting of the jobs in increasing order of $A_j$.
The breakpoint of the utilities at $P/2$, as discussed in Section~\ref{sec:fixed-area}, implies that as soon as $C_{\jstar}\geq P/2$, the modified utility function $\tilde u_{\jstar}(\dA_{\jstar})$ increases linearly with slope $1/P$.
For $C_{\jstar}< P/2$ we observe again a sawtooth function, with positive slope $\frac{2}{P}(1 - \frac{C_{\jstar}}{P})$.
As $C_{\jstar}$ increases whenever $A_{\jstar}+\dA_{\jstar}$ matches one of the other areas $A_j$, the slopes of the linear pieces decrease, converging towards $1/P$ when $C_{\jstar}$ goes to $P/2$.
A visualization of this behavior is given in Figure~\ref{fig:modify-area}.

	\begin{figure}[!ht]
		\centering
		\resizebox{0.7\textwidth}{!}{%
			\begin{circuitikz}
				\tikzstyle{every node}=[font=\LARGE]
				\draw [->, >=Stealth] (-11.25,-1) -- (-11.25,15.25);
				\draw [->, >=Stealth] (-12,0.25) -- (24,0.25);			
				\node [font=\LARGE] at (24.75,0.25) {\Huge $\dA_{\jstar}$};
				\node [font=\LARGE] at (-11.0,15.75) {\Huge $\tilde u_{\jstar}$};			
\draw [short] (-11.25,3.25) -- (-7.5,8.25);
\draw [short] (-7.5,5.75) -- (-2.5,9.5);
\draw [short] (-2.5,5.75) -- (2.5,8.75);
\draw [short] (2.5,7) -- (10,10.75);
\draw [short] (10,8.25) -- (15,10.75);
\draw [short] (15,7) -- (24.75,12);
\draw [dashed] (-7.5,5.75) -- (-7.5,8.2);
\draw [dashed] (-2.5,5.75) -- (-2.5,9.5);
\draw [dashed] (2.5,7) -- (2.5,8.75);
\draw [dashed] (10,8.25) -- (10,10.75);
\draw [dashed] (15,7) -- (15,10.75);
\node [font=\LARGE] at (12.75,5.25) {\Huge slope $\frac 1 P$};
\end{circuitikz}}
\caption{Increased utility $\tilde u_{\jstar}$ as a function of the increase of the area $\dA_{\jstar}$.}
\label{fig:modify-area}
\end{figure}

\section{Rescheduling to maximize the utility of a new agent}\label{sec:single_agent_utility}

In this section we consider a two-stage optimization problem where in the first stage the standard problem  $1 | u_j | u_{\min}$ with $n$ agents is solved by \gre\ on a set of $n$ jobs.
This step yields an optimal schedule $\sigma_F$ with utility values $u_j(\sigma_F)$ for each job $j$ and the maximum minimum utility $u_F:=\min_j \{u_j(\sigma_F)\}$.

In the second stage we consider the arrival of a \emph{new agent} $n+1$ and ask for a new schedule $\sigma$ comprising the new agent's job as well as all the original jobs $1,\ldots, n$.
The new agent has a high priority and should be given a maximal utility (i.e., should be completed as early as possible), under a certain restriction for the utility of the original jobs.
Clearly, scheduling an additional job at a preferred position may worsen the utilities of other jobs.
Therefore, we introduce a budget $R$ which can be used to \emph{compensate} agents (jobs) by increasing (or, more generally, keeping some restrictions on) their utilities.
We will identify budget payment with utility increase and assume that each agent $j$ can be given an arbitrary nonnegative apportionment $\compj \ge 0$ to be added to its utility as long as $\sum_{j=1}^n \compj \leq R$.

This problem aligns with a few preceding studies about \emph{rescheduling}, in which 
a certain schedule must be rearranged in order to optimize a given objective function under bounded changes with respect to the original schedule.  
For papers on single machine rescheduling problems see, e.g.,
\citet{HallPotts2004}, \citet{NicosiaPacificiPferschy2021}, and 
\citet{RenerSalassaTkindt2025}. 
In these works, the authors define different measures of \emph{disruption} associated to a job, such as the difference in the positions (or in the completion times) between the original and the reordered schedules.


As the agents are pursuing the maximization of their utilities, here we measure {disruption}  of one agent $j$ as the difference among the utilities in the new schedule $\sigma$ (including possible compensations) and the original solution $\sigma_F$, i.e., 
\begin{align}\label{def:Delta}
\delta_j = u_j(\sigma) + \compj - u_j(\sigma_F).
\end{align}
We consider three reasonable versions of our rescheduling problem with different restrictions on agents' disruptions:
\begin{description}
    \item[Problem $1|\mbox{\it resch}, \delta_j\ge 0|u_{n+1}$]. 
    In this problem, we impose (in some sense) the strongest requirement since after including the new job, each single agent in $\sigma$ should have a utility (including compensation) at least as large as before in $\sigma_F$. 
    Recalling Equation~\eqref{def:Delta}, we can easily recast the constraints in terms of agents' disruption, as follows:
    $u_j(\sigma) + \compj \ge u_j(\sigma_F) \Leftrightarrow   \delta_j \geq 0$ and write the problem as
    \begin{align}
    &\max_{\sigma} \left\{ u_{n+1}(\sigma)\ :\  \delta_j \geq 0,\ j=1,\ldots,n\right\}.\label{NA1}
    \end{align}
    \item[Problem $1|\mbox{\it resch}, \textstyle\sum\delta_j\ge 0|u_{n+1}$]. Here, we relax the requirements of the previous problem \eqref{NA1} by bounding the sum of utilities instead of bounds on each individual utility. Again, we can easily recast the constraints in terms of agents' disruption, as follows:
    $\sum_{j=1}^n u_j(\sigma) + R \ge \sum_{j=1}^n u_j(\sigma_F) \Leftrightarrow \sum_{j=1}^n \delta_j \geq 0 $ and thus we obtain the following formulation of the problem: 
    \begin{align}
    &\max_{\sigma} \big\{ u_{n+1}(\sigma)\ :\  \sum_{j=1}^n \delta_j \geq 0 \big\}.\label{NA2}
    \end{align}
    In this case, the budget $R$ does not have to be distributed among the agents but can simply be added to the total utility.
    \item[Problem $1|\mbox{\it resch}, u_j \ge u_F|u_{n+1}$]. In this problem, we require that the value of the fairness in the rescheduled solution $\sigma$ has not worsened with respect to $\sigma_F$, i.e., all agents would reach a utility not less than $u_F = \min\{u_j(\sigma_F)\}$. 
    Recalling Equation~\eqref{def:Delta}, we note that in this case we are bounding agents' disruptions, as shown by the equivalence
    $u_j(\sigma) + \compj \ge u_F \Leftrightarrow \delta_j \geq u_F - u_j(\sigma_F)$. (Note that $u_F - u_j(\sigma_F) \le 0$ for all $j$.)
    Eventually we may write the problem as follows:
    \begin{align}
        &\max_{\sigma} \big\{ u_{n+1}(\sigma)\ :\  \delta_j \geq u_F - u_j(\sigma_F),\ j=1,\ldots,n\big\}. \label{NA3-SA}
    \end{align}
\end{description}
We start by showing that our first two versions of the problem, $1|\mbox{\it resch}, \delta_j\ge 0|u_{n+1}$ and $1|\mbox{\it resch}, \textstyle\sum\delta_j\ge 0|u_{n+1}$, are hard.


\begin{theorem}\label{th:NA1-hard}
Problem 
$1|\mbox{\it resch}, \delta_j\ge 0|u_{n+1}$
is weakly NP-hard, even for linear utility functions.  
\end{theorem}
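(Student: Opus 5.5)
We would reduce from \textsc{Partition}, along the lines of the proof of Theorem~\ref{th:release_weakNPHsum}. The new agent can only be completed early if a set of original jobs of total length about $B$ is moved behind it. Those jobs are delayed by the (long) new job, and the budget $R$ pays for exactly $B$ units of such delay. So the new job can complete at time $B+p_{n+1}$ if and only if the items split evenly.

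\textbf{Construction.} Take a \textsc{Partition} instance $w_1,\ldots,w_n$ with $\sum_j w_j=2B$. For each original job $j=1,\ldots,n$ set $p_j=a_j=w_j$. Choose $b_j$ large enough (e.g.\ $b_j=w_j\cdot(2B+L)$) that all utilities stay nonnegative over the whole horizon. Then $a_j/p_j=1$ for every job, and any order is a fair schedule $\sigma_F$; we fix the one \gre\ returns.

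The new job $n{+}1$ gets $p_{n+1}=L$ with $L>4B^2$ and utility $u_{n+1}(C)=b_{n+1}-C$. Maximising $u_{n+1}$ is therefore the same as minimising $C_{n+1}$. The compensation budget is $R:=LB+4B^2$. We ask whether $C_{n+1}\le B+L$, i.e.\ whether $u_{n+1}\ge b_{n+1}-B-L$, can be achieved.

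\textbf{Compensation needed by a schedule.} In a new schedule $\sigma$, let $S$ be the jobs processed before $n{+}1$ and $\bar S$ those after. By Equation~\eqref{def:Delta}, keeping $\delta_j\ge 0$ requires a compensation of at least $\beta_j=\max\{0,\,a_j(C_j(\sigma)-C_j(\sigma_F))\}$.
\begin{itemize}[noitemsep,nolistsep]
\item A job $j\in\bar S$ is delayed by at least $L-2B$ and at most $L+2B$. Its required compensation thus lies between $w_j(L-2B)$ and $w_j(L+2B)$.
\item A job $j\in S$ finishes no later than $2B$, so it needs compensation at most $2B\,w_j$.
\end{itemize}

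\textbf{Direction YES $\Rightarrow$ good schedule.} Take $S$ with $p(S)=B$, schedule $S$, then $n{+}1$, then $\bar S$. This gives $C_{n+1}=B+L$. The total compensation is at most
\[
B(L+2B)+2B\cdot B=LB+4B^2=R,
\]
so the schedule is feasible.

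\textbf{Direction good schedule $\Rightarrow$ YES.} Suppose some feasible schedule has $C_{n+1}\le B+L$. Then $p(S)\le B$, hence $p(\bar S)\ge B$. Feasibility requires $p(\bar S)(L-2B)\le LB+4B^2$. If $p(\bar S)\ge B+1$, the left side is at least $(B+1)(L-2B)=LB+L-2B^2-2B$. This exceeds $LB+4B^2$ as soon as $L>6B^2+2B$, so we enlarge $L$ accordingly, e.g.\ $L=7B^2+7$. Therefore $p(\bar S)=B$ and the \textsc{Partition} instance is a YES instance.

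\textbf{Expected obstacle.} The main difficulty is the bookkeeping of second-order delay terms: jobs in $S$ may also move relative to each other, and jobs in $\bar S$ are reshuffled. We absorb all of these into the $O(B^2)$ slack by letting $L$ dominate, and we should double-check that every such bound uses only $|C_j(\sigma)-C_j(\sigma_F)|\le 2B$ plus the $L$ shift. All numbers in the construction are polynomial in $B$, so the reduction only gives weak NP-hardness, consistent with the statement of Theorem~\ref{th:NA1-hard}. A pseudopolynomial upper bound is not needed for the claim.
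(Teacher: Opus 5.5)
Your reduction is correct. It uses the same skeleton as the paper: \textsc{Partition}, items with $p_j=a_j=w_j$, a long block that pushes every postponed item about one block-length later, and a budget of (block length)$\cdot B$ plus an $O(B^2)$ slack, so that postponing total length $B+1$ is unaffordable. The gadget is different, though.

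The paper keeps the new job short ($p_{n+3}=B$, target $C_{n+3}\le 2B$) and adds two original jobs:
\begin{itemize}
\item a block job $n{+}1$ of length $M=5B^2$ with prohibitive slope $M^2$, so it cannot be delayed by even one time unit within the budget;
\item a nearly flat filler job $n{+}2$.
\end{itemize}
It then has to verify that \gre\ places $n{+}1$ and $n{+}2$ behind the items, rule out delaying $n{+}1$, and argue that no item should sit between the new job and $n{+}1$.

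You instead let the new job itself be the block ($p_{n+1}=L$). This gives several simplifications:
\begin{itemize}
\item The bound $C_j(\sigma)-C_j(\sigma_F)\ge L+w_j-2B$ for every item behind it is immediate.
\item No auxiliary original jobs are needed.
\item The only property of $\sigma_F$ you use is that the items fill $[0,2B]$ without idle time, so you never need to analyse the output of \gre{}.
\end{itemize}
What the paper's heavier gadget buys is hardness even when the newcomer's job is short and comparable to the items. Your version yields a structurally simpler instance in which every original agent has the same shape $a_j=p_j$.

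Two small corrections. First, the remark that ``any order is a fair schedule'' because $a_j/p_j=1$ is false. The ratio $a_j/p_j$ governs the system optimum, not $u_{\min}$; here \gre\ orders the items by nondecreasing $w_j$, and other orders can have strictly smaller minimum utility. The slip is harmless, since you only use that $\sigma_F$ has no idle time in $[0,2B]$. Second, the ``obstacle'' you anticipate does not arise:
\begin{itemize}
\item In the ``only if'' direction you may simply drop the nonnegative compensations of the jobs in $S$.
\item In the ``if'' direction the schedule is explicit. The items in $S$ even finish by $B$, so your per-item bound $2Bw_j$ is looser than needed, though still sufficient.
\end{itemize}
Finally, state $L=7B^2+7$ from the outset rather than first writing $L>4B^2$. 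The choice of $b_j$ depends on $L$, and the ``only if'' direction needs $L>6B^2+2B$.
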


\begin{proof}
We consider an instance $I_{P}$ of {\sc Partition} defined by $n$ integers $w_1,\ldots,w_{n}$ and let $B= \frac 1 2 \sum_{i=1}^n w_i$.
The problem asks for a subset $S$ of numbers summing up to $B$.

Now we construct an instance of the decision version of Problem~\eqref{NA1} with $n+2$ original jobs with linear utility functions, a new job $n+3$ and a budget $R$.
The large constant $M$ will be set later.
The $n+2$ jobs are scheduled by \gre\ yielding a schedule $\sigma_F$ with utilities $u_j(\sigma_F)$.
Then we ask, whether a new schedule $\sigma$ exists where job $n+3$ completes at time $2B$ (at the latest) while all the other jobs receive a utility and a payment $\compj$ fulfilling $u_j(\sigma) + \compj \geq u_j(\sigma_F)$. 
    
The data for the jobs are reported in the following table and we set $R=MB+3B^2$.
%
%
\begin{center}
\begin{tabular}{cccc}
\toprule
   $j$           & $p_j$  & $a_j$ & $b_j$ \\\midrule
   $1,\ldots, n$ &  $w_j$ & $w_j$ & $4B(M + 2B)$ \\
   $n+1$         & $M$    & $M^2$ & $\frac 5 2 M^3$ \\
   $n+2$         & $B$    & $\epsilon$ & $2 M^3 - 3M^2 B$ \\
   $n+3$         & $B$    & - & -   \\\bottomrule
\end{tabular}
\end{center}
We  choose $M$ large enough so that the schedule produced by \gre\ consists of jobs $1$ to $n$ in some unspecified order followed by jobs $n+1$ and $n+2$, see the upper schedule depicted in Figure~\ref{fig:Q2.1proof}.
To see that, we evaluate the utilities at time $P=2B+M+B$ which is the overall makespan.
We set $M:=5 B^2$ and obtain:
\begin{align*}
    &u_j(P) = 4B(M+2B) - a_j P \leq 20 B^3 + 8 B^2 \quad j=1,\ldots,n,\\
    &u_{n+1}(P) = \frac 5 2 M^3 - M^2(3B+M) = \frac 3 2 (5^3B^6) - 75 B^5,\\
    &u_{n+2}(P) =2 M^3 - 3 M^2 B - \epsilon(3B+M) = 2(5^3 B^6) - 75 B^5 - \epsilon(5 B^2 + 3 B).
\end{align*}
Hence $n+2$, which has the largest utility at $P$, is scheduled last. 

Before $n+2$, at time $2B+M$, we have 
$u_{n+1}(2B+M)=\frac 5 2 M^3 - M^2(2B+M) =\frac 3 2(5^3 B^6) - 50 B^5$, which is clearly larger than $u_j(2B+M)= 4B(M+2B) - a_j (2b+M) \leq 20 B^3 + 8 B^2$ for $j=1,\ldots,n$. 
So $n+1$ is scheduled directly before $n+2$.

The argument for placing the new job $n+3$ works as follows.
Inserting job $n+3$ to start not later than at time $B$ requires to postpone a subset $S$ of the jobs $1$ to $n$ with total processing time at least $B$. 
We will see that postponing job $n+1$ would cause excessive payments, while job $n+2$ can be deferred to the end of the schedule with minimal loss of utility.
Thus, we want to find a set $S$ of total length exactly $B$ and move these jobs in the gap left by job $n+2$.

\begin{figure}
    \centering
    \includegraphics[width=0.9\linewidth]{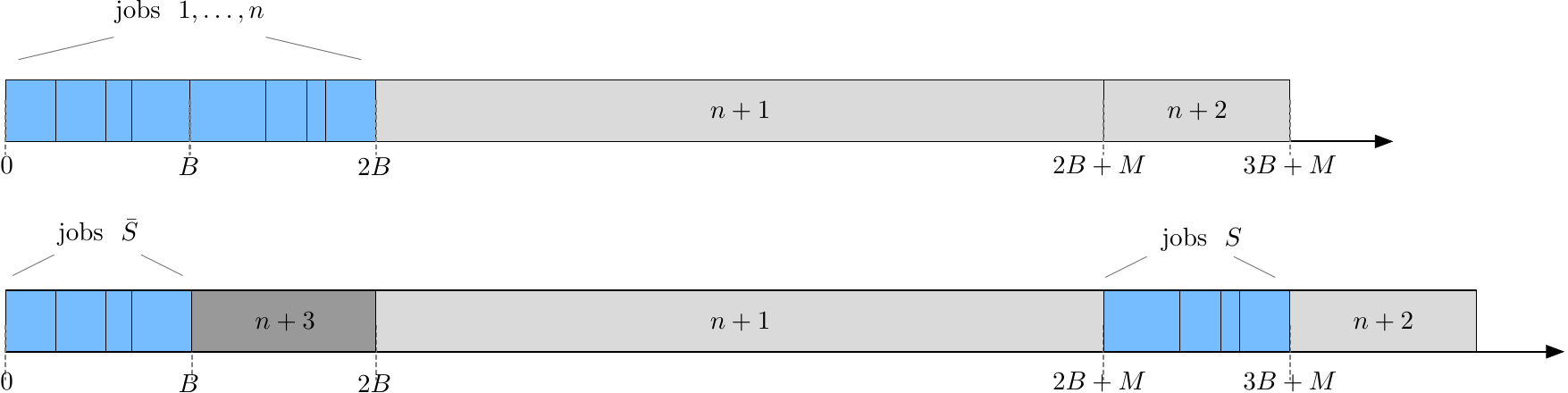}
    \caption{Original schedule $\sigma_F$ output by \gre\ (above) and the schedule after inserting job $n+3$ (below).}
    \label{fig:Q2.1proof}
\end{figure}

Clearly, jobs which complete in $\sigma$ not later than before will not decrease their utility. 
However, every job $j$ with a later completion time loses $a_j$ units of utility for each unit of time that it completes later than in the original schedule.

Increasing the completion time of job $n+1$ by only one unit of time would require a compensation payment of $M^2=25 B^4$, which exceeds the budget $R=5 B^3 + 3 B^2$.
Therefore, in $\sigma$ this job can not finish later than before.
On the other hand, job $n+2$ can be moved to the end of the new schedule with minimal loss of utility.

To place job $n+3$ somewhere in the interval $[0, 2B]$ it does not make sense to schedule jobs from $1,\ldots,n$ between jobs $n+3$ and $n+1$. 
Since $n+1$ starts at time $2B$ (or earlier) in schedule $\sigma$, we can always move $n+3$ to complete at the start of $n+1$. 
Therefore, some of the jobs from $1,\ldots,n$, denoted by subset $S$, have to be moved to start after jobs $n+3$ and $n+1$ with an earliest starting time of $M+B$. 
The other jobs in the set $\bar S = \{1,\ldots,n\}\setminus S$ will be scheduled at the beginning of $\sigma$ in the same sequence as before without reducing their utilities, see the lower schedule in Figure~\ref{fig:Q2.1proof}.

The payment required for moving some job $j\in S$ is at least its minimal loss of utility, namely $a_j (M+B - (2B-a_j)) = a_j(M-B+a_j)$ (earliest new starting time minus latest old starting time).

On the other hand, the maximum loss of utility for $j \in S$ is reached by moving the first job to the last position completing just before job $n+2$ at time $M+3B$ which is 
$a_j(M+3B-a_j)$.

Clearly, $\sum_{j\in S} p_j$ must be at least $B$ to accommodate job $n+3$.
For every subset $S$ with $\sum_{j\in S} p_j=B$, the total required payment is at most $\sum_{j\in S} a_j (M+3B -a_j) \leq \sum_{j\in S} a_j (M+3B) = MB + 3B^2 =R$, which is a feasible solution.
However, for every subset $S$ with $\sum_{j\in S} p_j\geq B+1$, the required payment is at least $\sum_{j\in S} a_j(M-B+a_j) \geq \sum_{j\in S} a_j(M-B) \geq (B+1)(M-B) = 5B^3 + 4 B^2 -B$ which is greater than $R$.
Thus, the completion time $2B$ can be reached for the new job $n+3$ if and only if there is a subset $S$ with $\sum_{j \in S}w_j=B$.
\end{proof}

\begin{theorem}\label{th:NA2-hard}
Problem 
$1|\mbox{\it resch}, \textstyle\sum\delta_j\ge 0|u_{n+1}$  is weakly NP-hard, even for linear utility functions.
\end{theorem}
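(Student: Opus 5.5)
We reduce again from \textsc{Partition}, reusing the construction in the proof of Theorem~\ref{th:NA1-hard} with changed parameters. Under the aggregate constraint $\sum_j\delta_j\ge 0$ the budget $R$ no longer has to be split among agents; it is simply added to the total utility. The question becomes: can job $n+3$ complete by time $2B$ while the total utility loss of the original jobs is at most $R$? The difficulty is that some jobs may now \emph{gain} utility, and these gains offset losses. So the construction must ensure that every feasible rearrangement produces almost no gains. The balance of utility must therefore be controlled by how much processing time of the partition jobs is postponed past the block $\{n+3,n+1\}$.

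We keep the instance of Theorem~\ref{th:NA1-hard}: partition jobs $1,\dots,n$ with $p_j=a_j=w_j$ and a large common intercept; job $n+1$ with $p_{n+1}=M$ and huge slope $M^2$; job $n+2$ with $p_{n+2}=B$ and tiny slope $\epsilon$; and the new job $n+3$ with $p_{n+3}=B$. The first step is to verify, exactly as before, that \gre\ outputs $\sigma_F=\langle 1..n, n+1, n+2\rangle$. Since every new schedule has makespan $P+B$, no original job can gain from being scheduled later. Gains can only come from jobs completing earlier than in $\sigma_F$.

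Next we show that job $n+1$ cannot be postponed. Delaying it by one unit costs $M^2$, which exceeds $R=MB+3B^2$. Could this loss be offset by gains elsewhere? The total possible gain is bounded by $\sum_{j\le n} a_j\cdot 3B+\epsilon\cdot(\text{span})=O(B^2)$, which is far less than $M^2$. So job $n+1$ completes no later than $2B+M$. As in the earlier proof, job $n+3$ can then be assumed to run immediately before $n+1$. The jobs of a set $S\subseteq\{1,\dots,n\}$ are moved after $n+1$, and job $n+2$ goes to the end, losing only $O(\epsilon B)$.

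The key accounting step is to bound the net utility change. Each $j\in S$ loses between $a_j(M-B+a_j)$ and $a_j(M+3B)$. Jobs in $\bar S$ may move earlier, but their total gain is at most $a(\bar S)\cdot 2B\le 4B^2$. If $p(S)=B$, the net loss is at most $MB+3B^2+O(\epsilon)$. Taking $R=MB+3B^2+1$ (with $\epsilon$ small) makes such a schedule feasible. If $p(S)\ge B+1$, the loss is at least $(B+1)(M-B)-4B^2$, and with $M=5B^2$ this exceeds $R$.

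The main obstacle is exactly this offset argument: the slack between the two cases must dominate all possible gains of jobs that are moved earlier. I would first check the inequality $(B+1)(M-B)-4B^2>MB+3B^2+1$, which reduces to $M>8B^2+B+1$. So I would enlarge $M$, taking for example $M:=10B^2$, and then recheck the \gre\ ordering computations with this value. The conclusion is that $n+3$ can complete by $2B$ if and only if a partition exists.
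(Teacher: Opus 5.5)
Your net-loss accounting for the case $p(S)\ge B+1$ leaves out the largest possible gain, which is that of job $n+1$ itself. With $\bar S$ in front and $n+3$ immediately before $n+1$, you have $C_{n+3}=p(\bar S)+B=3B-p(S)$. So $n+1$ completes at $3B-p(S)+M$, that is, $p(S)-B$ time units \emph{earlier} than in $\sigma_F$, and it gains $M^2\,(p(S)-B)\ge M^2$. In Theorem~\ref{th:NA1-hard} such a gain was harmless, because each agent had to be compensated individually. Under $\sum_j\delta_j\ge 0$, however, it offsets all other losses, which are only of order $MB$. Concretely, take the schedule $\langle n+3,n+1,1,\dots,n,n+2\rangle$ with the partition jobs in their $\sigma_F$ order. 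It has $C_{n+3}=B\le 2B$ and $\sum_j\delta_j = M^2B-2B(B+M)-\epsilon B>0$ already with zero compensation. Hence every instance your construction produces is a {\sc yes}-instance, whether or not a partition exists. Enlarging $M$ only makes this worse: the huge slope that stops $n+1$ from being postponed is exactly what makes advancing it profitable.

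The remedy is to discard jobs $n+1$ and $n+2$ rather than tune them. Keep only the partition jobs with $p_j=a_j=w_j$, and let the new job (now $n+1$) have length $B$ and target $C^T=2B$. Since $a_j=p_j$, every idle-free order of the original jobs has the same value $\sum_j a_jC_j=\frac12\big(\sum_j w_j^2+4B^2\big)$. Inserting the new job raises this by exactly $B\cdot p(S)$, where $S$ is the set of original jobs after it. So with $R=B^2$ the aggregate constraint says $p(S)\le B$, while the deadline forces $p(S)\ge B$. This is in essence the paper's argument. It notes that for linear utilities $\sum_j\delta_j\ge0$ is just a bound on $\sum_j a_jC_j$. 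It then reduces from the weakly NP-hard maintenance problem $1|\tau\le D_{MP},\Gamma|\sum w_jC_j$, letting the new job play the maintenance period: $p_{n+1}=\Gamma$, $C^T=D_{MP}+\Gamma$, $a_j=w_j$ and $R=\bar W+\bar f-\sum_j b_j$.
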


\begin{proof}
To show NP-hardness of Problem 
$1|\mbox{\it resch}, \textstyle\sum\delta_j\ge 0|u_{n+1}$, we employ a scheduling problem discussed by~\cite{Struse16-14}, who introduced the problem of scheduling jobs on a single machine where a break of length $\Gamma$ has to be included in the schedule to allow some maintenance operation. 
The start time $\tau$ of the non-availability period can be chosen freely as long as it is not later than a given deadline $D_{MP}$.
For the objective of total weighted completion time the problem is denoted as $1|\tau \leq D_{MP}, \Gamma| \sum w_jC_j$
with the decision version asking for a schedule with $\sum w_jC_j \leq \bar W$. 
It is shown by~\cite{Struse16-14} 
that this problem is weakly NP-hard.

Given an instance $I_{MP}$ of the decision version of $1|\tau \leq D_{MP}, \Gamma| \sum w_jC_j \leq \bar W$, we construct an instance $I$ of the decision version of our problem 
with target value $C^T$ (for the
new agent $n+1$) as follows. Jobs $1$ to $n$ have the same processing times of the $n$ jobs in $I_{MP}$,
while $p_{n+1}=\Gamma$ and $C^T = D_{MP}+\Gamma$.
Linear utility functions are given with $a_j=w_j$ and arbitrary $b_j$.
For convenience, let $\bar f := u_{\sys}(\sigma_F) = \sum_{j=1}^n u_j(\sigma_F)$.
Finally, the budget in $I$ is set to $R:= \bar W + \bar f - \sum_j b_j$.
(Note that we may have $R<0$, but this can be accommodated 
representing a required increase of total utility.)

We will consider identical schedules of jobs $1,\ldots,n$ for both problems.
If $I$ 
is a {\sc yes}-instance with schedule $\sigma$ and completion times $C_j$, then $C_{n+1} \leq C^T$, which means that job $n+1$, corresponding to the maintenance operation of $I_{MP}$, starts at $C^T-p_{n+1} = D_{MP}$ at the latest, thus guaranteeing feasibility for $I_{MP}$. 
Moreover, the restriction on the utility compensation $\sum_{j=1}^n \delta_j\ge 0$ yields:
$$      \sum_{j=1}^n \big( u_j(\sigma) + \beta_j - u_j(\sigma_F) \big) \ge 0 \Leftrightarrow  \sum_{j=1}^n b_j - \sum_{j=1}^n a_j C_j + \sum_{j=1}^n  \beta_j - \bar f \ge 0.
$$
Since the budget $R$ is clearly fully exhausted, this is equivalent to
$
\sum_{j=1}^n b_j - \sum_{j=1}^n a_j C_j +  \bar W + \bar f - \sum_{j=1}^n b_j - \bar f \geq 0$
and therefore we have 
$\sum_{j=1}^n a_j C_j = \sum_{j=1}^n w_j C_j \leq \bar W$.
Thus, $I_{MP}$ 
is a  {\sc yes}-instance.
The equivalence of the inequalities between the two problems completes the reduction.
\end{proof}
%
%
Note that the same argument applies if the initial schedule is $\sigma_{\sys}$ instead of $\sigma_F$. In fact, the specific initial schedule plays no role in the reduction; what matters is only its total utility value, $\sum_j u_j(\sigma_F)$.
As a consequence, we can state that 
     $1|\mbox{\it resch}, \textstyle\sum\delta_j\ge 0|u_{n+1}$ 
     remains weakly NP-hard, even when the initial solution is the system-optimal schedule $\sigma_{\sys}$ and the utility functions are linear.

We now turn to the last of the three versions of our single-agent problem---namely, 
$1|\textit{resch},\, u_j \ge u_F|u_{n+1}$---and show that this one can be solved efficiently.

\begin{theorem}\label{th:NA3-poly}
Problem~\eqref{NA3-SA}  $1|\mbox{\it resch}, u_j \ge u_F|u_{n+1}$ can be solved in $O(n^2 log P)$ time.  
\end{theorem}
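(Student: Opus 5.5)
Our approach is a binary search on the completion time $C^T$ of the new job $n+1$; since $u_{n+1}$ is non-increasing, maximizing $u_{n+1}$ is equivalent to minimizing $C_{n+1}$. Because the machine is never idle in an optimal semi-active schedule, all completion times lie in $[0,P+p_{n+1}]$, so at most $O(\log P)$ candidate values of $C^T$ are needed (assuming integer data). For each fixed $C^T$ we must decide whether there is a schedule $\sigma$ with $C_{n+1}(\sigma)\le C^T$ and compensations $\compj\ge 0$, $\sum_j\compj\le R$, such that $u_j(\sigma)+\compj\ge u_F$ for all $j=1,\ldots,n$. For a fixed $\sigma$ the cheapest compensation is clearly $\compj=\max\{0,u_F-u_j(\sigma)\}$. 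The feasibility test therefore reduces to minimizing the total deficit
\[
D(\sigma):=\sum_{j=1}^n \max\{0,\,u_F-u_j(C_j(\sigma))\}
\]
over all schedules with $C_{n+1}\le C^T$, and checking whether $\min D\le R$.

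The first structural step is to fix the position of job $n+1$. Given $C^T$, an exchange argument shows that job $n+1$ may be assumed to complete exactly at time $C^T$, provided $C^T\ge p_{n+1}$. Only the set of original jobs placed before it matters, and these jobs gain from completing earlier. Pushing $n+1$ later, up to $C^T$, only moves original jobs earlier, and their utilities are non-increasing, so this weakly decreases $D$. The machine is then split into the interval $[0,C^T-p_{n+1}]$, filled by original jobs, and the remainder. This still leaves a subset choice, which looks knapsack-like, and that is precisely the obstacle. The hardness in Theorem~\ref{th:NA1-hard} came from individual thresholds $u_j(\sigma_F)$; here all thresholds equal the common value $u_F$, and I plan to exploit this. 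With a common threshold, the minimum-deficit question becomes a max-min question. A deficit is paid only by jobs below $u_F$, and the compensation mechanism is exactly the uniform ``water-filling'' raise of intercepts from Section~\ref{sec:bound-modb}, which (as noted there) works for arbitrary utilities. Hence, for fixed $C^T$, I will run a variant of \gre\ over the $n$ original jobs with the slot of job $n+1$ ending at $C^T$ reserved. Going backward from the makespan $P+p_{n+1}$, the algorithm at each step assigns to the current time $T$ the remaining job of highest utility at $T$, jumping over the reserved interval. It then applies the Section~\ref{sec:bound-modb} procedure, in its ``with threshold'' form, to compute the minimal budget needed to lift every job to at least $u_F$.

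The key claim to prove, and the main obstacle, is that this greedy sequence minimizes $D(\sigma)$ among schedules with $n+1$ ending at $C^T$. It is not enough that it maximizes the minimum utility. I would argue via the same exchange argument as in Theorem~\ref{th:greedy}, combined with the fact noted in Section~\ref{sec:bound-modb} that water-filling does not alter the \gre\ sequence. Concretely, I would show that for every level $v\le u_F$ the greedy order minimizes the total raising cost $\sum_j\max\{0,v-u_j\}$, via a pairwise interchange: swapping two jobs so that the later slot holds the one of higher utility there never increases the deficit sum, because the function $x\mapsto\max\{0,u_F-x\}$ is convex and non-increasing. If the interchange fails in general, my fallback is to note that feasibility requires every $j$ to satisfy $u_j(C_j)+\compj\ge u_F$. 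Then I would binary-search jointly on a common lifted level, as in Section~\ref{sec:general}, turning the test into an EDD due-date check with a blocked interval.

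Each test costs $O(n^2)$: the greedy step is $O(n^2)$, and the water-filling step is $O(n)$ or $O(n^2)$. With $O(\log P)$ binary-search steps, this gives the claimed $O(n^2\log P)$ bound.
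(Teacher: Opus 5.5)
\textbf{Where the argument breaks.} Your plan matches the paper's proof. It binary-searches on $C^T$, then runs \gre\ with job $n+1$ forced to finish by $C^T$ (the paper gives that job utility $M$ up to $C^T$ and $0$ afterwards). It then applies the water-filling procedure of Section~\ref{sec:bound-modb} with budget $R$, and accepts $C^T$ if the lifted minimum reaches $u_F$.

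Your reformulation of the test as $\min_\sigma D(\sigma)\le R$ is correct, and you rightly note that max-min optimality of the greedy sequence is not enough. But the claim you then need, that the greedy sequence minimizes $D$, is false, and so is the interchange lemma behind it. Here is a counterexample.
\begin{itemize}
\item Take unit original jobs $A,B,C$ and a unit new job, with $(u_A(t))_{t=1}^{4}=(5,3,2.5,2)$, $(u_B(t))_{t=1}^{4}=(10,10,3,0)$ and $(u_C(t))_{t=1}^{4}=(10,10,10,3)$. Then $\sigma_F=\langle A,B,C\rangle$ and $u_F=5$.
\item With the new job first, \gre\ fills slots $2,3,4$ as $\langle A,B,C\rangle$, giving utilities $(3,3,3)$ and $D=6$.
\item The order $\langle B,C,A\rangle$ gives $(10,10,2)$ and $D=3$. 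So for $R=3$ the optimum is $C_{n+1}=1$.
\item Your test, and the paper's, rejects both $C^T=1$ and $C^T=2$ (greedy gives $\langle A,n{+}1,B,C\rangle$ with $D=4$) and ends at $C^T=3$.
\item Swapping the adjacent pair $C,A$ in $\langle B,C,A\rangle$ puts last the job with larger utility at time $4$, exactly as your rule prescribes. Yet it raises $D$ from $3$ to $4.5$. Your rule compares values at the later slot only, while $D$ depends on how much each job loses by moving, truncated at $u_F$.
\end{itemize}
Your fallback fails for the same reason: a single common level checked by EDD cannot capture compensations $\compj$ that are individual and schedule-dependent.

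\textbf{Why this cannot be repaired.} For linear utilities, $\max\{0,u_F-b_j+a_jC_j\}=a_j\max\{0,C_j-d_j\}$ with $d_j=(b_j-u_F)/a_j$. So $D$ is a total weighted tardiness. Any instance of $1||\sum w_jT_j$ with bound $K$ embeds as follows.
\begin{itemize}
\item Fix a level $v$ and set $a_j=w_j$ and $b_j=v+w_j(d_j+p_{n+1})$, with $p_{n+1}$ at least the total original processing time and $v$ large enough to keep utilities nonnegative.
\item Add a dummy job with $a_0=0$ and $b_0=v$. In the original instance every other job then has utility at least $v$, so $u_F=v$.
\item Set $R=K$.
\item With the new job first, job $j$'s shortfall is exactly $w_j\max\{0,C'_j-d_j\}$, where $C'_j$ is its completion time among the original jobs. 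The dummy is best placed last.
\end{itemize}
Hence already the single test $C^T=p_{n+1}$ is strongly NP-hard, and no polynomial test per $C^T$ exists unless $\mathrm{P}=\mathrm{NP}$.

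The paper's proof rests on the same unproved step. The invariance remark in Section~\ref{sec:bound-modb} only shows that the lifted greedy schedule remains greedy, not that no other schedule needs less budget. So it does not establish the statement either. The argument is valid only for $R=0$, where the test reduces to a max-min question that \gre\ does decide correctly.
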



Also for Problem~\eqref{NA3-SA}, starting from an initial system-optimal solution $\sigma_{\sys}$ and hence replacing the lower bound $u_F$ by $u_{\sys}$, does non change the validity of the procedure described in the proof of Theorem~\ref{th:NA3-poly}, which works for an arbitrary lower bound on the individual utility values. Therefore we may conclude that
    $1|\mbox{\it resch}, u_j \ge u_{\sys}|u_{n+1}$, i.e., Problem~\eqref{NA3-SA} with initial solution $\sigma_{\sys}$ and linear utility functions, is weakly NP-hard.

We close this section with a few additional observations about the latter result.
\begin{itemize}
[noitemsep,nolistsep]
    \item If no compensations are intended for agents $1,\ldots, n$, i.e., $R=0$, Problem~\eqref{NA3-SA} can be solved (again in $O(n^2)\log (\sum_{j=1}^{n+1}p_j)$ time), with a simplified procedure.
As in the proof of Theorem~\ref{NA3-SA}, we perform a binary search over the completion time $C_{n+1}$ of job $n+1$.
For each target value $C^T$ in the binary search we obtain a corresponding due date $d^T_{n+1}$ of job $n+1$ (while all other jobs have no due date). 
Thus, the decision problem arising in each iteration can be solved in $O(n^2)$ time by Theorem~\ref{th:no-latejobs}.
If the maximum minimum utility value of the resulting schedule is smaller than $u_F$, we have to consider a larger target value, otherwise we can test a smaller target value.
 \item Clearly, the process described in the proof of Theorem~\ref{th:NA3-poly} could be recast to the case in which we want to maximize the utility of one job $i$ among the original jobs $1,\ldots, n$. 
 In this case the binary search is restricted in the interval between $p_i$ and the completion time $C_i(\sigma_F)$ determined by {\gre}. 
\item It is also possible to extend the procedure by considering more than one job which should have a utility as high as possible (and thus a minimal completion time). For instance, we may maximize the utilities of an ordered list of jobs $j_1,\ldots,j_k$, $k\le n$, in a lexicographic order.
After determining the earliest completion time $C^*_{j_1}$ for the first job $j_1$ in the list (by the binary search process sketched above) we fix this value as due date $d_{j_1}:=C^*_{j_1}$ and iterate with minimizing the completion time of the next job in the list. The implied decision problems can still be solved by Theorem~\ref{th:no-latejobs}.
\end{itemize}


\section{Enforcing a target sequence in a bi-level setting}\label{sec:target}

Like in a leader-follower scenario, an external party (the leader) might be interested to enforce a certain schedule $\sigma_T$, e.g., minimizing sum of completion times to optimize throughput, or maximizing total utility.
In such a bi-level setting, the follower would still be solving  Problem $1 | u_j = b_j - a_j C_j | u_{\min}$ to reach a fair solution $\sigma_F$ for all jobs, but the leader influences the outcome of this process by modifying the utility functions of jobs.

Suppose first that---in analogy to Section~\ref{sec:bound-moda}---the leader is allowed to either modify the intercept values by increasing $b_j$ by adding $\db_j$ to it, or to decrease the slopes by subtracting $\da_j$ from $a_j$. Hence, it is required that $\db_j\geq 0$, resp.\ $0 \leq \da_j \leq a_j$.
In this way, the leader influences the sequence determined by the follower for the modified utility functions. Our problem asks for minimizing $\sum_j \db_j$, resp.\ $\sum_j \da_j$, such that the fair solution $\sigma_F$ generated by the follower, i.e., the maximization of the minimum utility, corresponds to the desired target sequence $\sigma_T$.
This setting is not enough for a meaningful problem definition, since the maximization of minimum utility leaves a wide range of different sequences with identical minimum utility. The leader would have no possibility to influence the sequence chosen by the follower among all the solutions tied for fairness. Thus, we will make the action taken by the follower more precise and assume that it uses \gre\ for determining the fair solution.

According to the Graham-like notation introduced above we are denoting the above two problems respectively      
$1 | \sigma_F = \sigma_T, u_j = \b_j - a_j C_j | \sum \b_j$ and 
$1 | \sigma_F = \sigma_T, u_j = b_j - \a_j C_j | \sum \a_j$.

These problems can be solved as follows.
 Consider jobs numbered in increasing order of completion times as given by the target sequence $\sigma_T$, i.e.,
$$
C_j(\sigma_T) \ge C_{j-1}(\sigma_T), \quad j = 2,\ldots, n.
$$
The key idea is that, for \gre\ to schedule job $j \ge 2$ at time $C_j$, it suffices to choose the parameters $\db_j$ or $\da_j$ so that $u_j(C_j) \geq u_\ell(C_j)$ for all $\ell = 1,\ldots,j-1$.
All jobs $\ell$ with earlier completion times $C_\ell < C_j$ will not change their values anymore.
Conversely, all jobs with larger completion times than $C_j$ will take the new utility value of $j$ into account when their values are determined.
At any point in time $T$, \gre\ considers only the jobs to be placed before $T$. 
Thus, the later jobs in the sequence do not have to be considered in the choice of $\db_j$ resp.\ $\da_j$.

Formally, we start by choosing $\db_1 := 0$ (resp., $\da_1=0$) and, for $j\ge 2$, we increase the $b_j$ values by setting
    $\db_j := \max\{0, \b_\ell-a_\ell C_j - b_j + a_j C_j, \ell=1,\ldots, j-1\}$.
Note that any desired target sequence can be imposed by increasing the $b_j$ values.

For the decrease of $a_j$ we set
    $\da_j:= \max\{a_j-\a_\ell + \frac{b_\ell-b_j}{C_j}, \ell=1,\ldots, j-1\}$.
If this choice yields $\da_j > a_j$, this would yield an increasing utility function violating the general assumption of nonincreasing utilities.
Thus, not every desired target sequence can be obtained by reducing the $a_j$ values (although this changes when an increase of $a_j$ is also allowed, see below).

If the follower aims at the maximization of global utility,
it will schedule the jobs in non-increasing order of $a_j/p_j$.
Obviously, the modification of intercepts $b_j$ is useless for this case.
As for the slopes $a_j$, they are modified such that this ordering corresponds to the schedule $\sigma_T$ desired by the leader.

In the resulting problem 
$1 | \sigma_{\sys} = \sigma_T, u_j = b_j - \a_j C_j | \sum \da_j$
we have to choose $\da_j$ such that
$\a_j/p_j \leq \a_\ell/p_\ell$ for all $\ell < j$.
This yields 
$$\da_j := \max\{0, a_j - p_j \cdot\min_{1\leq\ell < j}\{\frac{\a_\ell}{p_\ell}\}\}.$$

We can also extend the model and allow the modification of $b_j$ in both direction, meaning that $\db_j$ could be positive or negative, where the objective function asks for the minimization of the sum of absolute values of all $\db_j$.
Note that in this case it is not clear whether it would be beneficial from a global perspective to increase the intercept of the currently considered job $j$, or rather decrease the intercepts of all the other jobs, which so far prevent $j$ from being chosen by \gre. 
Such a decrease of other jobs might be beneficial for future decisions of the algorithm.
A similar extension can be considered for the slope where we allow negative $\da_j$, i.e., an increase of the negative slope.
Clearly, the above approach does not work anymore for this general case and we are not aware of a combinatorial algorithm for the resulting optimization problem.
However, we can give an LP-formulation, which implies that the problem remains polynomially solvable.
For convenience, we denote by $P_j:=p_1+\ldots + p_j$ the desired completion time of job $j$.
\begin{eqnarray}
\min \sum_{j\in N} |\db_j|&&\\
\mbox{s.t. } \db_j - \db_i &\geq& b_i-b_j - (a_i-a_j)P_j\quad 1\leq i<j \leq n\label{eq:utility-constraint}\\
\db_j &\in& \Real \quad \quad j=1,\ldots,n
\end{eqnarray}
Conditions (\ref{eq:utility-constraint}) are derived from the selection criterion of \gre\ for job $j$ at time $P_j$, i.e., $\db_j + u_j(P_j) \geq \db_i + u_i(P_j)$.
The absolute values in the objective function can be handled by standard methods of LP-modeling.

An analogous model works for the modification of the slopes.
\begin{eqnarray}
\min \sum_{j\in N} |\da_j|&&\label{eq:absolute slopes}\\
\mbox{s.t. } (\da_j - \da_i) P_j &\geq& b_i-b_j - (a_i-a_j)P_j\quad 1\leq i<j \leq n\label{eq:utility-constraint slope}\\
\da_j &\leq& a_j,\quad \da_j \in \Real \quad j=1,\ldots,n
\end{eqnarray}
Note that contrary to the case $\da_j \geq 0$, in the general case every target sequence can be attained by a suitable choice of $\da_j$.

For the corresponding case of $1 | \sigma_T:=\sigma_{\sys}, u_j = b_j - \a_j C_j | \sum \da_j$ with possibly negative $\da_j$, we replace \eqref{eq:utility-constraint slope}
by
\begin{equation}\label{eq:utility-constraint slope sum}
   \da_j -  \frac{p_j}{p_i} \da_i \leq a_j - \frac{p_j}{p_i}a_i \quad 1\leq i<j \leq n.
\end{equation}

\def\R{\mathcal{R}}
For this case we can also derive a simple combinatorial algorithm based on dynamic programming.
It is based on the following observation.
When we choose $\da_j$ to reach an ordering of jobs in non-increasing order of $\a_j/p_j$, the new slopes will always be set in such a way that the new ratios $\a_j/p_j$ attain one of the $n$ original values $a_i/p_i$.
Let $\R:=\{a_i/p_i \mid 1\leq i \leq n \}$ be this set of all relevant ratio values.
Otherwise, let $J_r$ be the set of all jobs $j$ with $\a_j/p_j=r$ and $r \not\in\R$.
A marginal increase of $r$ for all jobs in $J_r$ to $r+\epsilon$ would allow a decrease of the slope reduction for jobs with positive $\da_j$ by $\epsilon p_j$ (say these are jobs $J_r^+$), but require a further reduction of all negative $\da_j$ by $\epsilon p_j$ (for jobs $J_r^-$).
The same applies with exchanged signs for a marginal decrease of $r$ to $r-\epsilon$.
Therefore, the total increase of the objective function \eqref{eq:absolute slopes}
is $\sum_{j \in J_r^-} \epsilon p_j - \sum_{j \in J_r^+} \epsilon p_j$ for reaching $r+\epsilon$, while it is $ - \sum_{j \in J_r^-} \epsilon p_j + \sum_{j \in J_r^+} \epsilon p_j$ for reaching $r-\epsilon$.
Clearly, one of the two contributions has to be negative and thus improves the objective function.
The value of $\epsilon$ in this direction can be extended until the nearest value in $R$ is reached.

We introduce a dynamic programming array representing the cost of setting the ratio of job $j$ to $r \in \R$.
Formally we define:
\begin{quote}
$c_j[r] = $ cost of setting $\a_j/p_j=r$ plus the minimum cost such that the ratios $\a_i/p_i$ of all jobs $i=j+1,\ldots,n$ are sorted in non-increasing order.
\end{quote}
The computation of $c_j[r]$ is performed in decreasing order of $j$, i.e.\ $j=n, n-1,\ldots,1$ for every $r\in \R$.
To reach $\a_j/p_j=(a_j-\da_j)/p_j=r$, we have costs of $\da_j=a_j - p_j\cdot r$.
Including the cost for the correct ordering of jobs $j+1, \ldots, n$, which is already collected in $c_{j+1}[r]$, we get
\begin{equation}
    c_j[r] := | a_j - p_j\cdot r | + \min_{\rho \in \R}\{ c_{j+1}[\rho] : \rho\leq r\}.
\end{equation}
Going through the values of $r$ in increasing order, we can compute $c_j[r]$ in linear time so that the total running time for solving
$1 | \sigma_T:=\sigma_{\sys}, u_j = b_j - \a_j C_j | \sum \da_j$ with arbitrary $\da_j$ is  $O(n^2)$.

\section{Conclusions}\label{sect:concl}

In this work, we have investigated a broad family of multi‑agent scheduling problems with the aim of reaching a fair solution (in a maximin sense) for agents which may differ in their utility functions, temporal constraints, and degrees of flexibility in adjusting their parameters. 
Our analysis clarifies how release dates, due dates, and agent‑specific utility models shape the underlying difficulty of the problem, and how controlled modifications of the input parameters can be leveraged to attain desirable schedules.
Building on these insights, we have developed exact approaches that enable the computation of fair schedules, and -- whenever this seems meaningful for the considered problem setting -- we also treated the efficient version, where the sum of utilities is maximized. 
Furthermore, our exploration of adjustable utility functions and leader–follower scenarios highlights how external interventions, such as budget‑bounded modifications of utilities, can influence the resulting schedules, revealing connections between fairness, incentives, and strategic control in multi‑agent environments.

It is worth to mention here that our study extends to more general notions of fairness, such as the \emph{Kalai-Smorodinsky} fair solution, in which  the minimum of the \emph{normalized utility} across the agents is maximized \citep{bib:ks1975}. 
More precisely, looking at the range 
$I_j = \max_{\sigma\in \S} \{u_j(\sigma)\} - \min_{\sigma\in \S} \{u_j(\sigma)\}$ 
of utility values that $j$ may obtain among any 
semi-active schedule, the normalized utility $\bar u_j$ of agent $j$ is defined as the ratio $\bar u_j(\sigma) = {u_j}/{I_j}.$
For instance, recalling Definition~\ref{eq:fairsoldef}, in the case of linear utility functions and no release dates, if $P=\sum_j p_j$, then
$I_j = a_j(P - p_j)$ and hence the normalized utilities for all $j$ are known for any given schedule.
If this range can be computed in advance for every agent,
any algorithm that computes $u_F$ can also be applied to obtain a Kalai-Smorodinsky fair solution (although the resulting schedules may be different).

For future research one can consider the case of $m$ parallel machines, possibly with different specifications.
Looking at the special case with $m=2$ parallel identical machines, one can easily show that our problem is weakly NP-hard, even if all jobs have the same linear utility function.
In a similar flavor as in Section~\ref{sec:release-fair}, a pseudopolynomial algorithm based on dynamic program can be devised for the latter problem.

\bibliographystyle{apalike}

\bibliography{fair_sched}

\newpage

\section*{Appendix}

\subsection*{Proof of Theorem~\ref{th:greedy}}
\begin{quote}{\it
If, for all $j\in J$,
$u_j$ is any non-increasing function of the completion time $C_j$ of job $j$, then 
\gre\ computes an optimal solution of \ $1|u_j|u_{\min}$ in $O(n^2)$ time.
}\end{quote}
\par\noindent\medskip
For a given schedule $\sigma$, let $z(\sigma) = \min_{j\in J} u_j(C_j)$ indicate its corresponding solution value. 
Let $\sigma_F$ be an optimal schedule in which $i\in J$ is the last job: 
Since we may assume w.l.o.g. that  $\sigma_F$ has no idle times, $i$ completes at $C_i(\sigma_F) = P =\sum_{j\in J} p_j$. 

Let $\jstar\in J$ be the job selected by the algorithm, i.e., $\jstar = \arg \max_{h\in J} \{u_h(P)\}$ and assume, by contradiction, that $u_i(P) < u_{\jstar}(P)$.
Consider a schedule $\tilde\sigma$ obtained from $\sigma_F$ by just postponing job $\jstar$ and sequencing it as the last job, immediately after $i$.
In $\tilde\sigma$ all jobs, but $\jstar$, have not increased their completion times and hence, recalling our objective function definition: 
$z(\sigma_F) \le u_h(C_h(\sigma_F) \le u_h(C_h(\tilde\sigma))$ for all $h\in J\setminus\{\jstar\}$.
On the other hand  $z(\sigma_F) \le u_i(C_i(\sigma_F)) = u_i(P) < u_{\jstar}(P) = u_{\jstar}(C_{\jstar}(\tilde\sigma))$.

In conclusion $z(\sigma_F) \le \min_{h\in J}\{u_h(C_h(\tilde\sigma))\} = z(\tilde\sigma)$, so $\tilde\sigma$ is optimal as well. This argument can be recursively iterated at time $P-p_{\jstar}$, showing the correctness of \gre. \qed

\subsection*{Proof of Theorem~\ref{th:release_strongNPH}}

\begin{quote}{\it
Problem $1|r_j,u_j|u_{\min}$ 
is strongly NP-hard, even for linear utility functions.
}\end{quote}

\par\noindent\medskip
Consider an instance $I_L$ of the scheduling problem with due dates and release dates where maximum lateness $L_{\max}$ is minimized, i.e., $1|r_j|L_{\max}$, which was shown to be strongly NP-hard by~\cite{bib:lenstra1977}.
We can construct an instance of our problem with the same release dates and utility functions $u_j=d_j-C_j + M$, where $M > 0$ is large enough to guarantee that $u_j\ge 0$ for any reasonable value of $C_j$.
Considering the solution of our problem as a solution for $I_L$, it is easy to see that the lateness $L_j:=C_j-d_j= -u_j + M$.
If there is a solution with $u_F \geq K + M$, then for $I_L$ there is a solution with $L_{\max} \leq -K$, and vice versa.
\qed

\subsection*{Proof of Theorem~\ref{th:release_binary}}
\begin{quote}{\it
Problem $1|r_{j},u_j|u_{\min}$ 
is weakly NP-hard, even if the  utility functions are linear and only 
one release date is nonzero.
}\end{quote}

\par\noindent\medskip
For only one release date $r_{n+1}\neq 0$, $1|r_j,u_j|u_{\min}$\ can be reduced from {\sc Partition}.
Let the instance $I_{P}$ of {\sc Partition} be defined by $n$ integers $w_1,\ldots,w_{n}$ and let $B= \frac 1 2 \sum_{i=1}^n w_i$.
The problem asks for a subset $S$ of numbers summing up to $B$.
The corresponding instance $I$ of our scheduling problem is the following. We have $n+1$ jobs, where the first $n$ jobs correspond to the items in $I_{P}$. The data are:

\[ p_{j} = \left\{ \begin{array}{ll}
    w_j & \mbox{for } j=1,2, \ldots, n  \\
    B & \mbox{for } j=n+1
\end{array}
\right .\]

\[ r_{j} = \left\{ \begin{array}{ll}
    0 & \mbox{for } j=1,2, \ldots, n  \\
    B & \mbox{for } j=n+1
\end{array}
\right .\]

\[ u_{j}(C_j) = \left\{ \begin{array}{ll}
    2B - \frac{1}{3}C_j & \mbox{for } j=1,2, \ldots, n  \\
   5B -2 C_j  & j=n+1
\end{array}
\right .\]

If $I_{P}$ is a {\sc yes}-instance, then in $I$ one can schedule the jobs of $S$ between time $0$ and $B$, then start job $n+1$ at time $B$ and append the remaining jobs $\{1,\ldots,n\}\setminus S$ between time $2B$ and $3B$.
This yields utility $5B - 2\cdot 2B=B$ for job $n+1$ and utility $2B-\frac 1 3 3B=B$ for the job scheduled last among $1,\ldots,n$, i.e.\ the job reaching the lowest utility, which gives $u_F=B$.

If $I_{P}$ is a {\sc no}-instance, one can either schedule job $n+1$ at time $B$, reaching again utility $B$.
Since the total length of jobs scheduled before time $B$ must be strictly less than $B$ ({\sc no}-instance), the remaining jobs to be scheduled after job $n+1$ from time $2B$ must have total length at least $B+1$.
Thus, the utility of the last job is at most $2B-\frac 1 3( 2B+B+1)\leq B-\frac 1 3$.
Otherwise, if job $n+1$ is scheduled later than $B$, its utility is at most $5B - 2(B+1+B)=B-2$.
In both cases there is $u_F<B$, which means that the optimal solution of $I$ would decide {\sc Partition}.
\qed


\subsection*{Proof of Theorem~\ref{th:NA3-poly}}
\begin{quote}{\it
Problem~\eqref{NA3-SA}  $1|\mbox{\it resch}, u_j \ge u_F|u_{n+1}$ can be solved in $O(n^2 log P)$ time.  
}\end{quote}

\par\noindent\medskip
Problem $1|\mbox{\it resch}, u_j \ge u_F|u_{n+1}$ can be solved by combining two results of this paper (namely, Theorem~\ref{alg:greedy} and the procedure given in Section~\ref{sec:bound-modb}). 
First of all, we perform a binary search for the minimum completion time 
$C_{n+1}$ of job $n+1$, taking $O(\log P)$ iterations.
Clearly, $C_{n+1}$ must be between $p_{n+1}$ and $P = \sum_{j=1}^{n+1} p_j$.
Each target value $C_{n+1}^T$ of the binary search implies
a utility function $u_{n+1}(C)= M$ for $C \leq C_{n+1}^T$ and $u_{n+1}(C)=0$ otherwise.
Then we can run \gre\ to obtain a feasible schedule $\sigma^T$ with maximum minimum utility $u_{\min}(\sigma^T)$ in $O(n^2)$ time.
For this schedule we apply the linear time procedure given in Section~\ref{sec:bound-modb} to increase the utility functions by constant shifts $\db_j$ representing the payments $\compj$ with a given budget bound $B=R$ (which applies also for arbitrary utility functions).
Note that during this increase the (auxiliary) utility function $u_{n+1}(C)$ of job $n+1$ will never be modified.
If the resulting improved maximum minimum is at least $u_F$ we can continue the binary search with a target value greater than $C^T$, otherwise a smaller target value is chosen.
\qed

\end{document}